\definecolor{ivan}{rgb}{0.9,0.4,1}
\definecolor{ethan}{rgb}{0,.5,1}
\definecolor{thomas}{rgb}{.996, .627, .396}
\definecolor{beata}{rgb}{.8,.1,.5}
\definecolor{sabrina}{rgb}{0.6,0.5,1}
\definecolor{yourname}{rgb}{0.2,0.8,1}
\theoremstyle{plain}
   \newtheorem{theorem}{Theorem}[section]
   \newtheorem{proposition}[theorem]{Proposition}
   \newtheorem{corollary}[theorem]{Corollary}
\theoremstyle{definition}
    \newtheorem{definition}[theorem]{Definition}
    \newtheorem{lemma}[theorem]{Lemma}
\newtheorem{example}[theorem]{Example}
\definecolor{green}{RGB}{34, 139, 34}
\theoremstyle{remark}
\newtheorem{remark}{Remark}
\newcommand{\ZZ}{\mathbb{Z}}
\newcommand{\RR}{\mathbb{R}}
\newcommand{\CC}{\mathbb{C}}
\renewcommand{\bar}{\overline}
\renewcommand{\phi}{\varphi}
\def\bign#1{\mathclose{\hbox{$\left#1\vbox to8.5\p@{}\right.\n@space$}}\mathopen{}}
\begin{document}
\title{Laplace and Dirac Operators on Graphs}
\author{Beata Casiday}
\address{Department of Mathematics\\
Yale University\\
Yale University\\
442 Dunham Lab\\
10 Hillhouse Ave\\
New Haven, CT 06511}
\email{beata.casiday@yale.edu}
\author{Ivan Contreras} 
\author{Thomas Meyer}
\address{Department of Mathematics and Statistics\\
Amherst College\\
31 Quadrangle Drive\\
Amherst, MA 01002}
\email{icontreraspalacios@amherst.edu}
\email{tmeyer23@amherst.edu }
\email{espingarn23@amherst.edu }
\author{Sabrina Mi} 
\address{Department of Mathematics\\
University of Chicago\\
Eckhart Hall\\
5734 S University Ave\\
Chicago IL, 60637}
\email{scmi@uchicago.edu}
\author{Ethan Spingarn}
\maketitle
\begin{abstract}
Discrete versions of the Laplace and Dirac operators haven been studied in the context of combinatorial models of statistical mechanics and quantum field theory. In this paper we introduce several variations of the Laplace and Dirac operators on graphs, and we investigate graph-theoretic versions of the Schr\"odinger and Dirac equation. We provide a combinatorial interpretation for solutions of the equations and we prove gluing identities for the Dirac operator on lattice graphs, as well as for graph Clifford algebras. \end{abstract}
\tableofcontents

\section{Introduction}
The Laplace operator, or Laplacian, is a fundamental object of study in mathematics and physics. In particular, the evolution of quantum-mechanical systems is controlled by the Schr\"odinger equation, which relies on the properties of the Laplace operator. Discretized versions of the Laplacian have been implemented in order to study combinatorial models in quantum field theory \cite{Mnev16, Reshetikhin:2014jaa, Cimasoni07}. Similarly, the Dirac operator (that can be understood as a \textit{square root} of the Laplacian) is part of the mathematical formulation of spinors. 
One of the purposes of this paper is to study different versions of the Laplace and Dirac operators on finite graphs, and to analyze the (time dependent) Dirac equation on graphs, which gives a graph-theoretic interpretation of spinors. The notion of spinor has its origins in particle physics: it first appeared around 1922 when Stern and Gerlach realized that electrons can be catalogued into two groups or streams (``up" or ``down"), depending on a separation by a non-uniform magnetic field.
It was later interpreted as a version of angular momentum, so the spin of a particle is naturally associated to the notion of rotation.
Mathematically speaking, we can think of spinors as  vectors in $\mathbb C^2$. For instance, $\begin{bmatrix}1\\0 \end{bmatrix}$ can be seen as a spinor. Spinors have natural linear transformations, which can be seen as rotations in $\mathbb R^3$. More generally, it turns out that quaternions are useful to describe higher dimensional rotations, and therefore, spinors.

From the point of view of quantum mechanics, the dynamics of a quantum particle can be described in terms of the Schr\"odinger equation:
\begin{equation}
\partial_t(\psi)= \frac{i}{\hbar} \Delta (\psi), 
\end{equation}
where $\psi$ is a state (a special type of function) and $\Delta$ is the Laplace operator:
\[\Delta = \nabla \cdot \nabla,\]
where $\nabla$ denotes the gradient.
Dirac provided a way to include particles with spin in the equation:
\begin{equation}
\slashed {\partial} (\psi)= \frac{mc}{i\hbar} \psi 
\end{equation}
where $\slashed{\partial}$ is called the Dirac operator, satisfying $\slashed{\partial}^2= \Delta$.

A toy model of these equations is given in terms of the discrete Laplace and Dirac operators. The key advantage is that the evolution of states in this toy model depends only on the spectral properties of the graph. In particular, spectral graph theory analyzes the features of a graph in relationship with the behavior of the (eigenvalues/eigenvectors of) matrices associated with that graph.

Given the adjacency matrix $A$, and the degree matrix $D$, for a given graph $\Gamma$, the Laplacian $\Delta(\Gamma)$ of the graph is defined as the following matrix:
       \begin{equation*}
       \Delta(\Gamma) = D(\Gamma) - A(\Gamma).
       \end{equation*}
In Section 2 we introduce the even and odd graph Laplacian matrix, and the main results (Theorems \ref{thm: Even_Schrodinger} and \ref{thm:Odd_Schrodinger}) describe the steady states for the even and odd versions of the graph Schr\"odinger equations, in terms of the connected components and inependent cycles of the graph.

In Section 3 we introduce various versions of the Dirac operator on graphs, including the \textit{incidence Dirac operator} (Definition \ref{def: Incidence_Dirac}), inspired by the work of Knill \cite{Knill}. The main result of this section (Theorem \ref{thm: Incidence_Dirac}) is a graph-theoretic interpretation of the powers of the incidence Dirac operator, in terms of the number of certain walks on the graph.
A particular type of Dirac operators on graphs appears in the context of dimer models, following the work of Kenyon \cite{Kenyon2002}, Cimasoni and Reshetikhin \cite{Cimasoni07}. It turns out that the Kasteleyn matrix, which produces the number of perfect matchings of lattice graphs, can be interpreted as a discrete Dirac operator. We prove (see Theorems \ref{Theo:G2,2}, \ref{Theo:T9}, \ref{Theo:T10}, \ref{Theo:3B3}) gluing formulae for different cases of graph gluing of lattice graphs. 
On the other hand, spinors have an algebraic representation via Clifford algebras. In Section 5 we follow the construction of Clifford algebras for graphs introduced by Khovanova in \cite{Khovanova}. There it is described how to assign Clifford algebras to graphs. The main results (Theorems \ref{thm: Clifford_Gluing_Center} and \ref{thm: Clifford_Gluing_Path}) provide an algebraic interpretation for the gluing of graphs in terms of their corresponding Clifford algebras.

\subsection{Acknowledgments} This research project started during the  2021 SUMRY Program at Yale University, which was supported by NSF (DMS-2050398). I.C. thanks Pavel Mnev for useful discussions during the early stages of this project.
\section{The graph Laplacian and the Schr\"odinger equation}\label{sec:Laplace_Schrod}

The Laplacian describes the evolution of a quantum state over time for particles without spin, as governed by the Schr\"odinger equation. We study graph theoretic analogues of the Laplacian \cite{nica_SpectralGraph} 
and its interpretation in quantum mechanics \cite{Mnev16}.
We provide characterizations of the steady states of the graph Schr\"odinger equation and give a result on the average of quantum states.

We begin by defining quantum states on a graph. In the continuum, a quantum state is an assignment of a complex number to each point in space. We consider a graph theoretical model in which a graph quantum state is an assignment of a complex number to each vertex and each edge of a finite simple graph. This model can be considered as taking a "sampling" of points from the continuum and restricting our study to the evolution of this sample over time.

We next define several operators that are integral to our study of graph quantum mechanics.

\begin{definition}
Let $\Gamma = (V, E)$ be a finite simple graph. The \textit{incidence matrix} of $\Gamma$, denoted $I$, is the $|V| \times |E|$ matrix whose $(i,j)$ entry is defined by
\[[I]_{i, j} = \begin{cases}
      1 & \text{edge $j$ ends at vertex $i$} \\
     -1 & \text{edge $j$ starts at vertex $i$} \\
      0 & \text{otherwise.} 
   \end{cases}
\]
\end{definition}

We use the incidence matrix to define the even and odd graph Laplacians. These operators act as discretized forms of the Laplace operator, and act on the vertices and edges of our graph respectively.

\begin{definition} \label{def:Quantum State}
Let $\Gamma = (V, E)$ be a finite simple graph. A \textit{vertex state} assigns a complex number to each vertex $v \in V$. An \textit{edge state} assigns a complex number to each edge $e \in E$. A \textit{vertex-edge state} assigns a complex number to each $v \in V$ and each $e \in E$.
\end{definition}

\begin{remark}
When clear from context, we refer to these states as quantum states. In general, the even and odd Laplacian and Dirac operators act on vertex states and edge states respectively, and the incidence Dirac operator acts on vertex-edge states. We use the notation $\psi$ to refer to a general quantum state, and use $v$ and $e$ to denote vertex and edge states respectively.
\end{remark}

\begin{definition} \label{def:Even Laplacian}
Let $I$ be the incidence matrix of a finite simple graph $\Gamma$. The \textit{Even Graph Laplacian} is defined as
\[ \Delta_+ = I I^t.\]
Equivalently, $\Delta_+$ can be defined as \[ \Delta_+ = D - A, \] where $D$ and $A$ are the degree and adjacency matrices of $\Gamma$ respectively.
\end{definition}

\begin{definition}\label{def: Odd Laplacian} 
The \textit{Odd Graph Laplacian} is defined as 
\[ \Delta_- = I^t I. \]
\end{definition}

We now give a concrete example of the even and odd Laplacians on a small graph.

\begin{example}
Let $\Gamma$ be the path graph with 3 vertices, and orientation given below.

\begin{figure}[h]
\centering
\includegraphics[scale=0.6]{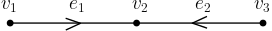}
\caption{Oriented path graph $P_3$.}
\end{figure}

Then the previously defined operators are given by:
\[ I = 
\begin{bmatrix}
-1 & 0 \\
1 & 1 \\
0 & -1
\end{bmatrix} \]
\[ \Delta_+ = II^t = \begin{bmatrix}
-1 & 0 \\
1 & 1 \\
0 & -1
\end{bmatrix} 
\begin{bmatrix}
-1 & 1 & 0 \\
0 & 1 & -1
\end{bmatrix} 
= \begin{bmatrix}
1 & -1 & 0 \\
-1 & 2 & -1 \\
0 & -1 & 1
\end{bmatrix} \]
\[ \Delta_- = I^t I = 
\begin{bmatrix}
-1 & 1 & 0 \\
0 & 1 & -1
\end{bmatrix} 
\begin{bmatrix}
-1 & 0 \\
1 & 1 \\
0 & -1
\end{bmatrix} 
= \begin{bmatrix}
2 & 1 \\
1 & 2
\end{bmatrix}. \]
\end{example}

\begin{remark}
By definition, we have $\Delta_+ = D - A$, so that the even Laplacian is independent of the orientation of $\Gamma$. In contrast, $\Delta_-$ does depend on orientation of the underlying graph, as shown in the following example.
\end{remark}

\begin{example}\label{Ex:OrientedC3}
Let $\Gamma_1$ and $\Gamma_2$ be $C_3$ with the orientations given below:

\begin{figure}[h]
\includegraphics[scale=0.4]{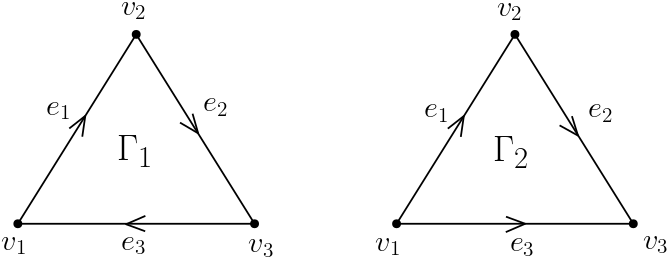}
\caption{Two orientations of $C_3$.}
\end{figure}


A routine calculation yields \\
$\Delta_- (\Gamma_1) = \begin{bmatrix}
2 & -1 & -1 \\
-1 & 2 & -1 \\
-1 & -1 & 2 \\
\end{bmatrix}$

$\Delta_- (\Gamma_2) = \begin{bmatrix}
2 & -1 & 1 \\
-1 & 2 & 1 \\
1 & 1 & 2 \\
\end{bmatrix}$.
\end{example}

\begin{remark}
Eigenvalues of the odd Laplacian are independent of orientation. Indeed, for two graphs $\Gamma_1$ and $\Gamma_2$ of differing orientations, the odd Laplacians $\Delta_- (\Gamma_1)$ and $\Delta_- (\Gamma_2)$ are related by conjugation, hence have the same eigenvalues.
\end{remark}


With these operators in hand, we introduce the discrete Schr\"odinger equation.

\begin{definition}
Let $\Gamma$ be a finite simple graph with m vertices and n edges, and let $\psi: \RR \to \CC^m$ ($\CC^n$) be a function assigning a complex number to each vertex (edge) of $\Gamma$. The \textit{discrete Schr\"odinger equation} is defined as
\[
\partial_t\psi = \frac{i}{\hbar}\Delta_{\pm}\psi .\]
\end{definition}

\begin{remark}

\end{remark}

The following theorem \cite{Mnev16} characterizes the solutions.

\begin{theorem}\label{thm: Mnev16}

Solutions to the discrete Schr\"odinger equation are given by
\[
        \psi(t)= e^{\frac{i}{\hbar} \Delta_{\pm}t}\psi(0).
\] 
\end{theorem}
From now on, we will write $\psi(0)$ as $\psi_0$ or simply $\psi$.
\subsection{Steady states}

We now proceed to characterize steady states of the even and odd graph Laplacians. These states are constant under the time evolution of the graph Schr\"odinger equation. From these steady states, we can extract graph-theoretic information on the connected components and independent cycles of our graph.

We show that steady states are in bijection with the kernels of the even and odd graph Laplacians, and identify these kernels accordingly.

\begin{definition}
A quantum state $\psi_0$ is a \textit{steady state} if $e^{\frac{i}{\hbar}\Delta_{\pm} t} \psi_0 = \psi_0$ for all $t \geq 0$.
\end{definition}

\begin{theorem}
\label{Theo: steady states in bijection with kernel}

Let $A$ be an $n \times n$ matrix, $\psi_0 \in \mathbb{C}^n$ be a quantum state, $k\in \mathbb{C}$ be a nonzero complex constant, and $t$ a complex variable. Then $e^{kAt}\psi_0=\psi_0$ for all values of $t$ if and only if $\psi_0 \in \ker(A)$.
\end{theorem}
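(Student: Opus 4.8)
The plan is to work directly from the power-series definition of the matrix exponential, $e^{kAt} = \sum_{m=0}^{\infty} \frac{(kt)^m}{m!} A^m$, and to treat both implications as statements about this series applied to the fixed vector $\psi_0$. Throughout I would use that the series converges for every complex $t$ (it is entire) and converges uniformly on compact sets, which legitimizes both term-by-term evaluation and term-by-term differentiation.

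For the ``if'' direction, suppose $\psi_0 \in \ker(A)$, so that $A\psi_0 = 0$. First I would observe by a one-line induction that $A^m\psi_0 = 0$ for every $m \geq 1$, since $A^m\psi_0 = A^{m-1}(A\psi_0) = 0$. Substituting into the series applied to $\psi_0$, every term with $m \geq 1$ vanishes and only the $m = 0$ term $\frac{(kt)^0}{0!}A^0\psi_0 = \psi_0$ survives, giving $e^{kAt}\psi_0 = \psi_0$ for all $t$. This direction is essentially bookkeeping.

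For the ``only if'' direction, the plan is to extract the linear-in-$t$ information from the identity $e^{kAt}\psi_0 = \psi_0$. Writing the series out and subtracting $\psi_0$ gives $\sum_{m=1}^{\infty} \frac{(kt)^m}{m!}A^m\psi_0 = 0$ identically in $t$. Each coordinate of the left-hand side is an entire power series in $t$ that vanishes for every $t$, so all of its coefficients vanish; in particular the coefficient of $t^1$ yields $kA\psi_0 = 0$. Equivalently, I could differentiate the constant function $t \mapsto e^{kAt}\psi_0$ using $\frac{d}{dt}e^{kAt} = kA\,e^{kAt}$ and evaluate at $t = 0$, where $e^{0} = I$, to obtain the same relation $kA\psi_0 = 0$. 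Since $k \neq 0$ by hypothesis, I may cancel $k$ to conclude $A\psi_0 = 0$, that is $\psi_0 \in \ker(A)$.

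The step I expect to require the most care is the passage from ``the power series in $t$ is identically zero'' to ``each coefficient is zero,'' which is exactly where the hypothesis that the identity holds for \emph{all} $t$, rather than at a single value, becomes indispensable, and where the nonvanishing of $k$ permits the final cancellation. Both the coefficient-comparison argument and the differentiation argument rest on the uniform convergence of the exponential series on compact sets, which I would invoke as a standard fact rather than reprove.
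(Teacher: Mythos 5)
Your proposal is correct and follows essentially the same route as the paper: the ``if'' direction by expanding the exponential series and noting all terms with $A^m\psi_0$, $m \geq 1$, vanish, and the ``only if'' direction by differentiating $e^{kAt}\psi_0 = \psi_0$ in $t$ to extract $kA\psi_0 = 0$ and cancelling $k \neq 0$. Your coefficient-comparison variant is a minor repackaging of the same idea, and if anything your explicit attention to term-by-term differentiation and the identity theorem for power series is more careful than the paper's treatment.
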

\begin{proof}
$(\Leftarrow)$ Suppose $\psi_0 \in \ker(A)$, so $A\psi_0=0$. Then \begin{align*}
    e^{kAt}\psi_0 &=(\sum_{n=0}^\infty \frac{(kAt)^n}{n!})\psi_0\\
    &=(I+kAt+\frac{k^2A^2t^2}{2}+\frac{k^3A^3t^3}{3!}+...)\psi_0\\
    &=I\psi_0+kAt\psi_0+\frac{k^2A^2t^2}{2}\psi_0+\frac{k^3A^3t^3}{3!}\psi_0+\cdots\\
    &=I\psi_0+kA\psi_0t+\frac{k^2A^2\psi_0t^2}{2}+\frac{k^3A^3\psi_0t^3}{3!}+\cdots \\
    &=I\psi_0+k(0)t+\frac{k^2A(0)t^2}{2}+\frac{k^3A^2(0)t^3}{3!}+\cdots \\
    &=I\psi_0+0+0+0\cdots= I\psi_0=\psi_0.\\
\end{align*}
$(\Rightarrow)$ Suppose $e^{kAt}\psi_0=\psi_0$ for all values of $t$. Then $\frac{\partial}{\partial t}e^{kAt}\psi_0=\frac{\partial}{\partial t}\psi_0$. The right-hand side of this equation equals $0$ as $\psi_0$ is constant for all $t$. Now, considering the left hand side of this equation, we obtain:
\begin{align*}
    \frac{\partial}{\partial t}e^{kAt}\psi_0&=\frac{\partial}{\partial t}(\sum_{n=0}^\infty \frac{(kAt)^n}{n!})\psi_0\\
    &=\frac{\partial}{\partial t}(I+kAt+\frac{k^2A^2t^2}{2}+\frac{k^3A^3t^3}{3!}+\cdots)\psi_0\\
    &=(\frac{\partial}{\partial t}I+\frac{\partial}{\partial t}kAt+\frac{\partial}{\partial t}\frac{k^2A^2}{2}t^2+\frac{\partial}{\partial t}\frac{k^3A^3}{3!}t^3+\cdots)\psi_0\\
    &=(0+kA+k^2A^2t+\frac{k^3A^3}{2}t^2+\cdots)\psi_0\\
    &=kA(I+kAt+\frac{k^2A^2t^2}{2}+\frac{k^3A^3t^3}{3!}+\cdots)\psi_0\\
    &=kAe^{kAt}\psi_0.
\end{align*}
By assumption, $kAe^{kAt}\psi_0=kA\psi_0$. This means $kA\psi_0=0$, so $\psi_0 \in \ker(A)$.
\end{proof}

Since we have proven that steady states are in bijection with elements in the kernel of the graph Laplacians, it suffices to find the kernels of the even and odd graph Laplacians in order to characterize our steady states.


We now characterize the kernel of the even Laplacian. We begin with an intermediate lemma.

\begin{lemma}
Let $\Gamma$ be a graph with connected components $\{1, 2, \dots, b_0\}$. A quantum state $\psi$ is in $\ker \Delta_+$ if $\psi = 1$ on the vertices of the $i$th connected component, and $0$ on all other vertices, for some $i \in \{ 1, \dots, b_0 \}$.
\end{lemma}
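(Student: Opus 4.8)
The plan is to exploit the factorization $\Delta_+ = II^t$ from Definition \ref{def:Even Laplacian} rather than the form $D - A$, since this immediately reduces the problem. Because $\Delta_+\psi = I(I^t\psi)$, it suffices to prove the stronger statement that $I^t\psi = 0$; applying $I$ then gives $\Delta_+\psi = 0$ and hence $\psi \in \ker\Delta_+$.

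To see that $I^t\psi = 0$, I would interpret $I^t$ as a discrete difference (gradient) operator on edges. The $e$-th entry of $I^t\psi$ is $\sum_{v} [I]_{v,e}\,\psi(v)$, and by the definition of the incidence matrix only the two endpoints of $e$ contribute: if $e$ starts at a vertex $a$ and ends at a vertex $b$, this entry equals $\psi(b) - \psi(a)$. The key observation is that every edge of $\Gamma$ joins two vertices lying in the same connected component. Thus for the indicator state $\psi$ of the $i$th component, every edge either has both endpoints in component $i$ (so $\psi(b) - \psi(a) = 1 - 1 = 0$) or has both endpoints outside it (so $\psi(b) - \psi(a) = 0 - 0 = 0$); no edge crosses between distinct components. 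Every entry of $I^t\psi$ therefore vanishes, which completes the argument.

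There is no serious obstacle here: the only point requiring care is the book-keeping underlying the key observation, namely that the absence of edges between distinct connected components is exactly what forces the difference across each edge to be zero. As an alternative, one can argue directly from $\Delta_+ = D - A$ by computing $(\Delta_+\psi)_v = \deg(v)\psi(v) - \sum_{u \sim v}\psi(u)$ vertex by vertex: when $v$ lies in component $i$ all of its neighbors do as well, giving $\deg(v)\cdot 1 - \deg(v) = 0$, and when $v$ lies outside component $i$ both terms vanish. Finally, I would remark that the same reasoning shows that \emph{any} state which is constant on each connected component lies in $\ker\Delta_+$, so the indicator states of the components form a natural spanning set for the kernel; this foreshadows the full kernel description established in the theorem that follows.
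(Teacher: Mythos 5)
Your proof is correct, and your primary route differs from the paper's. The paper works with the form $\Delta_+ = D - A$ and carries out the matrix product $\Delta_+\psi$ row by row: for rows outside the chosen component all relevant entries $\Delta_{i,j}$ vanish because the vertices lie in different components, and for rows inside it the degree term cancels the sum of adjacency entries. Your main argument instead exploits the factorization $\Delta_+ = II^t$ and proves the strictly stronger statement $I^t\psi = 0$, reading each entry of $I^t\psi$ as the difference $\psi(b)-\psi(a)$ across an edge and noting that no edge joins distinct components. This buys you two things: the argument is shorter (one line per edge, no case analysis on rows), and the stronger conclusion $\psi \in \ker I^t$ ties in directly with the paper's later Proposition~\ref{Prop: Kernels_of_I_and_I^t}, which establishes $\ker I^t = \ker \Delta_+$ in general. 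Your secondary, vertex-by-vertex computation from $D - A$ is essentially the paper's own proof in compressed form, and your closing remark that any state constant on components lies in the kernel is exactly the content of the lemma that follows in the paper (Lemma~\ref{Theo: kernel of even laplacian iff constant on connected comp}), so the foreshadowing is apt.
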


We digress to provide a brief example, from which the general method of proof is clear.


\begin{example}
Let $\Gamma$ be the oriented graph below, with two cycles and two connected components.

\begin{figure}[h]
\includegraphics[scale=0.4]{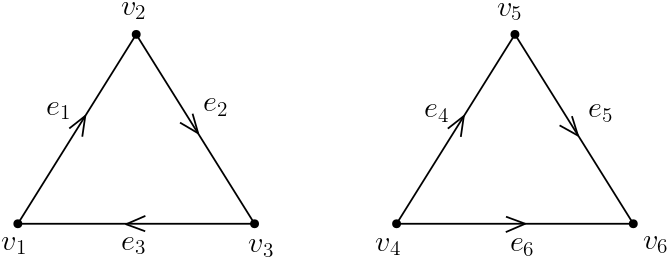}
\caption{Oriented graph with two cycles and two connected components.}
\end{figure}

The even Laplacian of $\Gamma$ is
\[ \Delta_+ = \begin{bmatrix}
2 & -1 & -1 & 0 & 0 & 0 \\
-1 & 2 & -1 & 0 & 0 & 0 \\
-1 & -1 & 2 & 0 & 0 & 0 \\
0 & 0 & 0 & 2 & -1 & -1 \\
0 & 0 & 0 & -1 & 2 & -1 \\
0 & 0 & 0 & -1 & -1 & 2 
\end{bmatrix}. \]
Let $\psi = \begin{bmatrix}
1 \\
1 \\
1 \\
0 \\
0 \\
0 
\end{bmatrix}$ be the quantum state with entries $1$ on the elements of the leftmost connected component of the graph, and $0$ on all other vertices. We immediately see that

\[\Delta_+ \psi = \begin{bmatrix}
2 & -1 & -1 & 0 & 0 & 0 \\
-1 & 2 & -1 & 0 & 0 & 0 \\
-1 & -1 & 2 & 0 & 0 & 0 \\
0 & 0 & 0 & 2 & -1 & -1 \\
0 & 0 & 0 & -1 & 2 & -1 \\
0 & 0 & 0 & -1 & -1 & 2 
\end{bmatrix} 
\begin{bmatrix}
1 \\
1 \\
1 \\
0 \\
0 \\
0 
\end{bmatrix} = 0,\]
so that $\psi$ is in the kernel of $\Delta_+$ as was claimed in our lemma.
\end{example}

We now proceed to a formal proof of this lemma.

\begin{proof} 

Let $\Gamma$ be a graph with $n$ vertices. Let $v_1, v_2, \dots, v_k$ be the vertices of a connected component of $\Gamma$.

Let 
\[ \Delta_+ = \begin{bmatrix}
\Delta_{1,1} & \dots & \Delta_{1,n}\\
\vdots & \ddots & \vdots\\
\Delta_{n,1} & \dots & \Delta_{n,n}
\end{bmatrix}
\]

and let
\[ \psi =
\begin{bmatrix}
1 \\
\vdots \\
1 \\
0 \\
\vdots \\
0
\end{bmatrix}
\]
be the column vector with values $[\psi]_{i,1} = \begin{cases}
    1 & 1 \leq i \leq k \\
    0 & k < i \leq n
\end{cases}
.$

Multiplying by the even Laplacian, we observe that
\[ \Delta_+ \psi = 
\begin{bmatrix}
\Delta_{1,1} \cdot 1 + \dots + \Delta_{1,k} \cdot 1 + \Delta_{1,k+1} \cdot 0 + \dots + \Delta_{1,n} \cdot 0 \\
\vdots \\
\Delta_{k,1} \cdot 1 + \dots + \Delta_{k,k} \cdot 1 + \Delta_{k,k+1} \cdot 0 + \dots + \Delta_{k,n} \cdot 0 \\
\Delta_{k+1,1} \cdot 1 + \dots + \Delta_{k+1,k} \cdot 1 + \Delta_{k+1,k+1} \cdot 0 + \dots + \Delta_{k+1,n} \cdot 0 \\
\vdots \\
\Delta_{n,1} \cdot 1 + \dots + \Delta_{n,k} \cdot 1 + \Delta_{n,k+1} \cdot 0 + \dots + \Delta_{n,n} \cdot 0 \\
\end{bmatrix}
=
\begin{bmatrix}
\Delta_{1,1} + \dots + \Delta_{1,k}\\
\vdots \\
\Delta_{k,1} + \dots + \Delta_{k,k}\\
\Delta_{k+1,1} \dots + \Delta_{k+1,k}\\
\vdots \\
\Delta_{n,1} + \dots + \Delta_{n,k}\\
\end{bmatrix}.
\]
Recall that $\Delta_+ = D - A$, where $D$ and $A$ denote the degree and adjacency matrices of $\Gamma$ respectively. Thus, for all $\Delta_{i,j}$ with $i > j$ and $j \in \{1, 2, \dots, k \}$, we have $\Delta_{i,j} = 0$, as vertices $i$ and $j$ are in different connected components of $\Gamma$. Thus, we have
\[
\Delta_+ \psi =
\begin{bmatrix}
\Delta_{1,1} + \dots + \Delta_{1,k}\\
\vdots \\
\Delta_{k,1} + \dots + \Delta_{k,k}\\
\Delta_{k+1,1} \dots + \Delta_{k+1,k}\\
\vdots \\
\Delta_{n,1} + \dots + \Delta_{n,k}\\
\end{bmatrix}
=
\begin{bmatrix}
\Delta_{1,1} + \dots + \Delta_{1,k}\\
\vdots \\
\Delta_{k,1} + \dots + \Delta_{k,k}\\
0\\
\vdots \\
0\\
\end{bmatrix}.
\]

Note that both $D(v_i)$ and $\sum_{j=1}^k A_{i,j}$ count the number of edges incident to the vertex $v_i$, hence are equal. Thus, we obtain our desired result, as by definition we have $\Delta_{i,1} + \dots + \Delta_{i,k} = D(v_i) - \sum_{j=1}^k A_{i,j} = 0$, so that $\Delta_+ \psi = 0$ as desired.
\end{proof}

\begin{lemma} \label{Theo: kernel of even laplacian iff constant on connected comp}
A quantum state is in the kernel of the even Laplacian if and only if it is constant on each of the connected components of the graph.
\begin{proof}

From \cite{Contreras19}, we know that $\dim (\ker \Delta_+) = b_0$. Label our connected components $\{ 1, 2, \dots, b_0 \}$. Let $\psi_i$ be the vertex state with the value $1$ on the vertices of the $i$th connected component, and $0$ on all other vertices. By the above lemma, each of these $\psi_i$ is in the kernel of $\Delta_+$. Furthermore, each $\psi_i$ is linearly independent, and we have one $\psi_i$ for each connected component, so these $\psi_i$ must span $\ker \Delta_+$ as desired.
\end{proof}
\end{lemma}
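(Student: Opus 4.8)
The plan is to prove both implications, using the preceding lemma for the easy direction and one further ingredient for the converse. For the reverse implication, suppose $\psi$ is constant on each connected component, equal to $c_i$ on component $i$. Then $\psi = \sum_{i=1}^{b_0} c_i \psi_i$, where $\psi_i$ is the indicator state that is $1$ on the vertices of component $i$ and $0$ elsewhere. Since each $\psi_i \in \ker\Delta_+$ by the preceding lemma and $\ker\Delta_+$ is a linear subspace, linearity immediately gives $\psi \in \ker\Delta_+$.

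The forward implication is the substantive part, and I would approach it through the factorization $\Delta_+ = II^t$ rather than through a dimension count. The key observation is that $\Delta_+$ is positive semidefinite: for any vertex state $\psi$, writing $\bar\psi^t$ for the conjugate transpose, we have $\bar\psi^t \Delta_+ \psi = \bar\psi^t I I^t \psi = \lVert I^t\psi\rVert^2 \geq 0$. Consequently $\Delta_+\psi = 0$ forces $\lVert I^t\psi\rVert^2 = 0$, hence $I^t\psi = 0$; conversely $I^t\psi = 0$ yields $\Delta_+\psi = II^t\psi = 0$. This identifies $\ker\Delta_+ = \ker I^t$ and reduces the problem to analyzing $\ker I^t$.

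Next I would unpack the condition $I^t\psi = 0$ at the level of edges. By the definition of the incidence matrix, the row of $I^t$ indexed by an edge $e$ with tail $a$ and head $b$ sends $\psi$ to $\psi(b) - \psi(a)$, so $I^t\psi = 0$ says precisely that $\psi$ takes equal values on the two endpoints of every edge. A connectivity argument then finishes the proof: any two vertices in the same connected component are joined by a path, and $\psi$ is unchanged across each edge of the path, so $\psi$ is constant on that component. I expect this connectivity step to be the conceptual heart of the argument, though it is routine once the edge-difference interpretation of $I^t$ is in place.

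As an alternative that more directly reuses the indicator states already introduced, one can instead invoke $\dim(\ker\Delta_+) = b_0$ \cite{Contreras19}: the $b_0$ states $\psi_i$ are linearly independent (their supports are disjoint and nonempty) and lie in $\ker\Delta_+$, so they form a basis, whence every kernel element is a linear combination of the $\psi_i$ and is therefore constant on each component. The trade-off is that this route relies on the external dimension formula, whereas the factorization approach above is self-contained; I would present the factorization argument as the main proof and note the dimension count as a remark.
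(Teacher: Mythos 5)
Your proposal is correct, but your main argument takes a genuinely different route from the paper. The paper's proof is exactly your ``alternative'' remark: it cites $\dim(\ker\Delta_+)=b_0$ from an external reference, notes that the $b_0$ indicator states lie in the kernel by the preceding lemma and are linearly independent, and concludes they span the kernel. Your main argument instead factors $\Delta_+=II^t$, uses $\bar\psi^t\Delta_+\psi=\lVert I^t\psi\rVert^2$ to identify $\ker\Delta_+=\ker I^t$, reads off from the rows of $I^t$ that kernel elements have equal values across every edge, and finishes by connectivity. This buys several things: it is self-contained (no appeal to the external dimension formula, which is itself typically proved by precisely this edge-difference argument); it delivers the conclusion ``constant on each component'' directly, whereas the paper's proof technically ends at ``the $\psi_i$ span $\ker\Delta_+$'' and leaves implicit the small translation that the span of the indicator states is exactly the set of states constant on components; and it yields $\dim(\ker\Delta_+)=b_0$ as a corollary rather than a hypothesis. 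It is also consistent with the paper's own later development, since the identity $\ker I^t=\ker\Delta_+$ is proved by the same positive-semidefiniteness trick in Proposition~\ref{Prop: Kernels_of_I_and_I^t}. The trade-off is only length: granting the cited dimension formula, the paper's count is shorter.
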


\begin{corollary}
\label{Even Schrodinger steady states}
A vertex state $\psi$ is steady if and only if $\psi$ is constant on each of the graph's connected components.
\begin{proof}
Combining results \ref{Theo: steady states in bijection with kernel} and \ref{Theo: kernel of even laplacian iff constant on connected comp}, we see that a vertex state is steady if and only if it is in the kernel of the even Laplacian if and only if it is constant on a graph's connected components.
\end{proof}
\end{corollary}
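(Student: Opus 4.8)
The plan is to chain together the two biconditionals already established. First I would invoke Theorem \ref{Theo: steady states in bijection with kernel} with the matrix $A$ taken to be the even Laplacian $\Delta_+$ and the nonzero constant $k = i/\hbar$. This immediately gives that a vertex state $\psi$ satisfies $e^{\frac{i}{\hbar}\Delta_+ t}\psi = \psi$ for all $t$ precisely when $\psi \in \ker(\Delta_+)$; by the definition of a steady state, this is exactly the condition that $\psi$ be steady.

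Next I would apply Lemma \ref{Theo: kernel of even laplacian iff constant on connected comp}, which characterizes $\ker(\Delta_+)$ as the set of vertex states that are constant on each connected component of $\Gamma$. Composing the two equivalences yields the claim directly: $\psi$ is steady $\iff \psi \in \ker(\Delta_+) \iff \psi$ is constant on each connected component. Since both inputs are already proved, the corollary is essentially a two-line bookkeeping argument rather than a new computation.

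The only point requiring care is the quantifier on $t$. The definition of a steady state asks the evolution to fix $\psi_0$ for all $t \geq 0$, whereas Theorem \ref{Theo: steady states in bijection with kernel} is phrased for a variable $t$ ranging over all values. I would reconcile these by noting that the forward direction of that theorem only uses differentiation of $e^{kAt}\psi_0$ with respect to $t$, which is available at any interior point of $(0,\infty)$, while the reverse direction forces equality for every $t$ and in particular for $t \geq 0$. Hence the two notions of ``fixed under the flow'' coincide, and no genuine obstacle arises; I expect the whole argument to be a short direct composition of the prior results.
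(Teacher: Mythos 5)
Your proof is correct and follows essentially the same route as the paper: compose Theorem \ref{Theo: steady states in bijection with kernel} (with $A=\Delta_+$, $k=i/\hbar$) with Lemma \ref{Theo: kernel of even laplacian iff constant on connected comp} to get the chain of equivalences. Your extra remark reconciling ``for all $t\geq 0$'' in the steady-state definition with ``for all values of $t$'' in the theorem is a legitimate refinement the paper glosses over, but it does not change the argument.
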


\begin{corollary} The vector space of steady vertex states is isomorphic to $\mathbb C^{b_0}$.
\end{corollary}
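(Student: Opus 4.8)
The plan is to exhibit an explicit linear isomorphism between the space of steady vertex states and $\mathbb{C}^{b_0}$, using the characterization already established. First I would note that the collection of steady vertex states is genuinely a vector space: by Theorem~\ref{Theo: steady states in bijection with kernel} it coincides with $\ker \Delta_+$, which is a linear subspace of $\mathbb{C}^{n}$ (where $n = |V|$), so it is closed under addition and scalar multiplication.

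Next, I would invoke Corollary~\ref{Even Schrodinger steady states}, which tells us that a vertex state is steady precisely when it is constant on each connected component. Labeling the components $1, \dots, b_0$, I would define the evaluation map $\Phi$ sending a steady state $\psi$ to the tuple $(c_1, \dots, c_{b_0}) \in \mathbb{C}^{b_0}$, where $c_i$ is the common value of $\psi$ on the $i$th component; this is well defined exactly because of the preceding corollary. The bulk of the argument is then to verify that $\Phi$ is a linear bijection.

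Linearity is immediate, since evaluation at a vertex is linear and a linear combination of states that are constant on each component is again constant on each component. For injectivity, if $\Phi(\psi) = 0$ then $\psi$ vanishes on every component, hence $\psi = 0$. For surjectivity, given $(c_1, \dots, c_{b_0})$, the state taking value $c_i$ on the $i$th component is steady by Corollary~\ref{Even Schrodinger steady states} and maps to the given tuple; concretely it is $\sum_i c_i \psi_i$, where $\psi_i$ is the indicator state of the $i$th component already shown to lie in $\ker \Delta_+$.

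Alternatively, and more briefly, one can bypass the explicit map entirely: the space of steady states equals $\ker \Delta_+$, whose dimension is $b_0$ (as used in the proof of Lemma~\ref{Theo: kernel of even laplacian iff constant on connected comp}), and every complex vector space of dimension $b_0$ is isomorphic to $\mathbb{C}^{b_0}$. I expect no genuine obstacle here; the only point requiring care is confirming that $\Phi$ is well defined, which is precisely the content of the preceding corollary, so the result follows with minimal additional work.
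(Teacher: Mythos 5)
Your proposal is correct and matches the paper's reasoning: the paper states this corollary without proof, as an immediate consequence of the fact that steady states coincide with $\ker\Delta_+$, which the preceding lemma shows is spanned by the $b_0$ linearly independent component-indicator states (equivalently, has dimension $b_0$). Your explicit evaluation isomorphism $\Phi$ is a harmless elaboration of that same dimension count, and your briefer alternative argument is exactly the paper's implicit proof.
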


\subsection{Average of quantum states}

We now proceed to characterize quantum states (both vertex and edge states) which have a constant average under the Schr\"odinger equation.

\begin{definition}
Let $\psi = \begin{bmatrix}
z_1 \\
z_2 \\
\vdots \\
z_n
\end{bmatrix}$
be a vertex state of a graph. We define the \textit{average of $\psi$} to be
\[ \mu(\psi) = \frac{z_1 + z_2 + \dots + z_n}{n}.\]
\end{definition}

\begin{definition} Let $\Delta_{\pm}$ denote the even and odd graph Laplacians. The \textit{partition function} of $\Delta_{\pm}$ is defined to be
\[ Z(t) = e^{\frac{i}{\hbar}\Delta_{\pm}t}. \]
\end{definition}

\begin{remark}
The even Laplacian is a real symmetric matrix so it is Hermitian.
\end{remark}

\begin{lemma}
The partition function of a Hermitian matrix is unitary. In particular, the partition function of the even Laplacian is unitary.
\begin{proof}
Let $A$ be a Hermitian matrix; that is, $A = \bar{A}^T$. We want to show $e^{iA}=I+iA+\frac{(iA)^2}{2}+\frac{(iA)^3}{3!}+...$ is unitary; that is, $\bar{e^{iA}}^T=e^{-iA}$. There are a few linear algebra facts that will be useful in this proof:
\begin{enumerate}
    \item $\bar{A+B}^T=\bar{A}^T+\bar{B}^T$
    \item $\bar{(A^n)}^T=(\bar{A}^T)^n$
    \item $\bar{iA}=\bar{i}\text{ }\bar{A}=-i\bar{A}$.
\end{enumerate}
The proof consists of a string of equalities: 
\begin{align*}
    \bar{e^{iA}}^T&=&\bar{I+iA+\frac{(iA)^2}{2}+\frac{(iA)^3}{3!}+\cdots}^T\\
    &=&I+\bar{iA}^T+\frac{\bar{(iA)^2}^T}{2}+\frac{\bar{(iA)^3}^T}{3!}+\cdots\\
    &=&I+\bar{iA}^T+\frac{(\bar{iA}^T)^2}{2}+\frac{(\bar{iA}^T)^3}{3!}+\cdots\\
    &=&I+-i\bar{A}^T+\frac{(-i\bar{A}^T)^2}{2}+\frac{(-i\bar{A}^T)^3}{3!}+\cdots\\
    &=&I+-iA+\frac{(-iA)^2}{2}+\frac{(-iA)^3}{3!}+...\\
    &=&e^{-iA}.
\end{align*}
\end{proof}
\end{lemma}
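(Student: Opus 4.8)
The plan is to reduce the claim to the statement that $e^{iH}$ is unitary whenever $H$ is Hermitian, and then to prove the latter from two structural properties of the matrix exponential rather than by manipulating the series entrywise. Recall that a matrix $U$ is unitary precisely when $\bar{U}^T U = I$, so it suffices to show both $\bar{e^{iH}}^T = e^{-iH}$ and $e^{-iH} e^{iH} = I$.

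First I would establish the identity $\bar{e^{M}}^T = e^{\bar{M}^T}$ for an arbitrary square matrix $M$. Writing $e^M = \sum_{n \ge 0} M^n/n!$ as a limit of partial sums, this follows because conjugate-transposition is a continuous (indeed linear) operation on the finite-dimensional space of matrices, so it commutes with the limit, and because $\bar{(M^n)}^T = (\bar{M}^T)^n$ for each $n$. Applying this with $M = iH$ and using $\bar{iH}^T = -i\,\bar{H}^T = -iH$, where Hermiticity $\bar{H}^T = H$ enters, yields $\bar{e^{iH}}^T = e^{-iH}$.

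Next I would use that $e^{B} e^{C} = e^{B+C}$ whenever $B$ and $C$ commute. Since $iH$ and $-iH$ are scalar multiples of the same matrix, they commute, so $e^{-iH} e^{iH} = e^{0} = I$. Combining the two steps gives $\bar{e^{iH}}^T e^{iH} = I$, which is exactly unitarity. For the even Laplacian, note that $\Delta_+ = D - A$ is real symmetric and hence Hermitian, and that $\frac{t}{\hbar}\Delta_+$ is Hermitian for real $t$ and $\hbar$; taking $H = \frac{t}{\hbar}\Delta_+$ shows that $Z(t) = e^{i(t/\hbar)\Delta_+}$ is unitary.

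The only genuinely technical point is the justification that conjugate-transposition may be applied term-by-term to the exponential series and that $e^{B}e^{C} = e^{B+C}$ holds for commuting $B$ and $C$; both are standard and are harmless in finite dimensions, where absolute convergence of the series permits free rearrangement. An alternative that avoids series manipulation altogether is spectral: since $H$ is Hermitian we may write $H = U \Lambda \bar{U}^T$ with $U$ unitary and $\Lambda$ real diagonal, whence $e^{iH} = U\, e^{i\Lambda}\, \bar{U}^T$, and $e^{i\Lambda}$ is diagonal with entries $e^{i\lambda_j}$ of modulus one, so $e^{iH}$ is a product of unitaries and therefore unitary.
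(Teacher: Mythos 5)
Your proposal is correct, and its core is the same as the paper's: both arguments hinge on applying conjugate-transposition term by term to the exponential series, using $\bar{(M^n)}^T=(\bar{M}^T)^n$ together with Hermiticity to conclude $\bar{e^{iH}}^T=e^{-iH}$. However, you go beyond the paper in two ways that are worth noting. First, the paper's proof stops at the identity $\bar{e^{iA}}^T=e^{-iA}$ and tacitly treats this as the definition of unitarity; strictly speaking one must also know that $e^{-iA}$ actually inverts $e^{iA}$, and your invocation of $e^{B}e^{C}=e^{B+C}$ for commuting $B,C$ (so that $e^{-iH}e^{iH}=e^{0}=I$) supplies exactly this missing step. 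Second, you justify the term-by-term manipulation by continuity of the adjoint map in finite dimensions, whereas the paper performs it formally. Your spectral alternative --- writing $H=U\Lambda\bar{U}^T$ with $\Lambda$ real diagonal, so that $e^{iH}=U e^{i\Lambda}\bar{U}^T$ is a product of unitaries --- is a genuinely different route that avoids series manipulation entirely and makes the unimodularity of the eigenvalues $e^{i\lambda_j}$ the visible reason for unitarity; it is arguably the cleaner argument, though the series proof has the advantage of working verbatim in settings where one does not wish to invoke the spectral theorem. You also handle the scaling $H=\frac{t}{\hbar}\Delta_+$ explicitly when specializing to the partition function $Z(t)$, a point the paper leaves implicit.
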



\begin{theorem}\label{thm: Even_Schrodinger}
For a vertex state $\psi_0$, the average $\mu(\psi_0)=\mu(\psi_t)$ for all $t \geq 0$.
\end{theorem}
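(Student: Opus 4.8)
The plan is to recognize the average as a fixed linear functional and to show that the even Laplacian annihilates its defining covector. Writing $\mathbf{1} = (1, \dots, 1)^t \in \CC^n$ for the all-ones vector, we have $\mu(\psi) = \frac{1}{n}\,\mathbf{1}^t \psi$ for every vertex state $\psi$, and by Theorem \ref{thm: Mnev16} the evolved state is $\psi_t = e^{\frac{i}{\hbar}\Delta_+ t}\psi_0$. So the entire question reduces to understanding how $\mathbf{1}^t$ interacts with $\Delta_+$.

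First I would record the key fact that $\Delta_+ \mathbf{1} = 0$. This is immediate from Lemma \ref{Theo: kernel of even laplacian iff constant on connected comp}, since the all-ones vector is constant on every connected component of $\Gamma$; concretely it also follows from $\Delta_+ = D - A$, whose rows sum to zero because each vertex's degree equals its number of neighbours. Because $\Delta_+$ is symmetric, this upgrades to $\mathbf{1}^t \Delta_+ = (\Delta_+ \mathbf{1})^t = 0$, i.e. $\mathbf{1}$ is also a left null vector of the even Laplacian.

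With that in hand, I would finish by differentiating along the flow: using $\partial_t \psi_t = \frac{i}{\hbar}\Delta_+ \psi_t$,
\[
\frac{d}{dt}\,\mu(\psi_t) = \frac{1}{n}\,\mathbf{1}^t \partial_t \psi_t = \frac{i}{n\hbar}\,\bigl(\mathbf{1}^t \Delta_+\bigr)\psi_t = 0,
\]
so $\mu(\psi_t)$ is constant in $t$ and therefore equals its initial value $\mu(\psi_0)$. Alternatively, expanding $e^{\frac{i}{\hbar}\Delta_+ t}$ as a power series and using $\mathbf{1}^t \Delta_+^k = 0$ for all $k \geq 1$ kills every term except $k = 0$, giving $\mu(\psi_t) = \frac{1}{n}\mathbf{1}^t\psi_0 = \mu(\psi_0)$ in a single line.

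The argument is short, and the only point requiring care is the claim that $\mathbf{1}$ lies in the \emph{left} kernel of $\Delta_+$; the hard part, such as it is, amounts to invoking symmetry to pass from the right kernel, already characterized in Lemma \ref{Theo: kernel of even laplacian iff constant on connected comp}, to the left kernel. It is worth noting that this reasoning is specific to the even case: the odd Laplacian $\Delta_-$ does not in general have $\mathbf{1}$ in its kernel, so the analogous conservation statement for edge states would require a separate analysis.
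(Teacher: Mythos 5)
Your proof is correct, but it takes a genuinely different route from the paper's. The paper frames the average as a Hermitian inner product $\langle v, \psi_0\rangle$ with the constant vector $v = \tfrac{1}{n}\mathbf{1}$, proves in a separate lemma that the partition function $Z(t) = e^{\frac{i}{\hbar}\Delta_+ t}$ is unitary because $\Delta_+$ is Hermitian, and then computes $\langle v,\psi_0\rangle = \langle Z(t)v, Z(t)\psi_0\rangle = \langle v,\psi_t\rangle$, using that $v$ is a steady state so that $Z(t)v = v$. You instead observe that $\mathbf{1}$ is a \emph{left} null vector of $\Delta_+$ (right kernel plus symmetry) and differentiate the linear functional along the flow, or equivalently kill every term of the exponential series beyond the identity. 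Both arguments use the symmetry of $\Delta_+$, but in different places: the paper needs it to obtain unitarity of $Z(t)$, while you need it only to pass from $\Delta_+\mathbf{1} = 0$ to $\mathbf{1}^t\Delta_+ = 0$. What the paper's approach buys is a general quantum-mechanical principle---the inner product of the evolving state with \emph{any} steady state is conserved---and it reuses the unitarity lemma proved just before the theorem. What your approach buys is elementarity and robustness: it never uses that the prefactor $\frac{i}{\hbar}$ is imaginary, so the same computation shows conservation of the average under any evolution $e^{cMt}$ with $\mathbf{1}^t M = 0$, for instance the heat-type flow $e^{\Delta_+ t}$, where unitarity fails outright. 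Your closing remark about the odd case is also apt: the theorem is indeed stated only for vertex states, and $\mathbf{1}$ need not lie in $\ker \Delta_-$, so no analogous argument is available there.
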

\begin{proof}
Let $\psi_0 = \begin{bmatrix}
z_1 \\
z_2 \\
\vdots \\
z_n
\end{bmatrix}$ be a vertex state on a graph. The average of $\psi_0$ can be given by the inner product $\langle v, \psi_0 \rangle$, where $v = \begin{bmatrix}
\frac{1}{n} \\
\frac{1}{n} \\
\vdots \\
\frac{1}{n}
\end{bmatrix}$. The partition function $Z(t) = e^{\frac{i}{\hbar} \Delta_+ t}$ of the even Laplacian is unitary, hence preserves inner product. From this, we have \[ \mu(\psi_0) = \langle v , \psi_0 \rangle = \langle Z(t) v, Z(t) \psi_0 \rangle = \langle v, \psi_t \rangle = \mu(\psi_t),\] where the equality $Z(t) v = v$ follows as $v$ is a steady state (constant on all vertices), and $Z(t) \psi_0 = \psi_t$ simply denotes a solution to the Schr\"odinger equation over time. Thus the average is constant as desired.
\end{proof}

Having characterized the steady states of the even graph Schr\"odinger equation, we now turn our attention to the odd graph Schr\"odinger equation.

\begin{definition}
Recall that the first Betti number $b_1$ is equivalent to the dimension of the cycle space of a graph. Let $c_1, c_2, \dots, c_{b_1}$ be the independent cycles of a graph $\Gamma$. To each independent cycle $c_i$, we define the \textit{independent cycle edge state} $\alpha_i$ to be the edge state with value $1$ on all edges with clockwise orientation on $c_i$, $-1$ on all edges with counterclockwise orientation on $c_i$, and $0$ on all edges not in $c_i$.
\end{definition}

\begin{theorem}\label{thm:Odd_Schrodinger}
An edge state $\psi$ is steady for the odd Schr\"odinger if and only if $\psi$ represents a linear combination of independent cycle edge states.
\end{theorem}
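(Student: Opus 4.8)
The plan is to reduce the assertion to a kernel computation and then identify that kernel combinatorially. By Theorem~\ref{Theo: steady states in bijection with kernel} applied with the matrix $\Delta_-$ and the nonzero constant $k = i/\hbar$, an edge state $\psi$ is steady for the odd Schr\"odinger equation if and only if $\psi \in \ker(\Delta_-) = \ker(I^t I)$. So it suffices to show that $\ker(I^t I)$ is exactly the span of the independent cycle edge states $\alpha_1, \dots, \alpha_{b_1}$.

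First I would establish the standard reduction $\ker(I^t I) = \ker(I)$. The inclusion $\ker(I) \subseteq \ker(I^t I)$ is immediate. For the reverse, if $I^t I x = 0$ then $0 = x^t I^t I x = \langle I x, I x \rangle = \|I x\|^2$, forcing $I x = 0$; since $I$ has real entries one may argue over $\CC$ by splitting into real and imaginary parts, or use the conjugate transpose. This identifies the steady edge states with $\ker(I)$, the kernel of the incidence matrix regarded as the boundary map sending edge states to vertex states.

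Next I would check directly that each independent cycle edge state lies in $\ker(I)$. Fix a consistent traversal of the cycle $c_i$. At any vertex $v$ on $c_i$ exactly two incident edges carry nonzero values, one entering and one leaving $v$ under the traversal. A short case analysis on whether each edge's given orientation agrees or disagrees with the traversal shows that the incidence sign $[I]_{v,e}$ times the value $\alpha_i(e) \in \{+1,-1\}$ always contributes $+1$ for the entering edge and $-1$ for the leaving edge, so the entries cancel and $(I\alpha_i)_v = 0$; vertices off $c_i$ receive only zero entries. Hence $\mathrm{span}(\alpha_1, \dots, \alpha_{b_1}) \subseteq \ker(I)$.

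Finally I would close the argument by a dimension count. The incidence matrix of a graph with $b_0$ connected components has rank $|V| - b_0$, so by rank--nullity $\dim \ker(I) = |E| - |V| + b_0 = b_1$. Passing to a spanning forest and letting each of the $b_1$ non-forest edges determine its fundamental cycle exhibits $b_1$ independent cycle edge states that are linearly independent, since each such state contains an edge appearing in no other fundamental cycle. As there are exactly $b_1$ of them and they sit inside a $b_1$-dimensional space, they form a basis of $\ker(I)$, and combining this with the first two steps yields the theorem. I expect the main obstacle to be step three: carrying out the sign-cancellation verification $(I\alpha_i)_v = 0$ uniformly across all orientation cases, and justifying both the linear independence and the count so that the $\alpha_i$ genuinely exhaust $\ker(I)$ rather than merely inject into it.
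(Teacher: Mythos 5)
Your proposal is correct and follows essentially the same route as the paper: reduce to a kernel computation via Theorem~\ref{Theo: steady states in bijection with kernel}, verify by a sign-cancellation (the paper phrases it as a telescoping sum over consecutive edges) that each independent cycle edge state is annihilated, and finish with the dimension count $\dim\ker(\Delta_-)=b_1$. The only difference is that you fill in details the paper cites or asserts — the identity $\ker(I^tI)=\ker(I)$ (which appears separately in the paper as Proposition~\ref{Prop: Kernels_of_I_and_I^t}), the rank--nullity computation, and the spanning-forest argument for linear independence — which makes your version more self-contained but does not change the approach.
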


Note that this theorem is not independent of orientation; that is, two orientations of the same graph may admit two different spaces of steady states of the odd Laplacian.

Let us consider $C_3$ with the same orientations as in Example \ref{Ex:OrientedC3}. The theorem above tells us that the leftmost graph has steady states $\ker(\Delta_-)  = \displaystyle{span \left\{ 
\begin{bmatrix}
1\\
1\\
1
\end{bmatrix} \right\}}$,
while the rightmost graph has steady states $\ker(\Delta_-)  = \displaystyle{span \left\{ 
\begin{bmatrix}
1\\
1\\
-1
\end{bmatrix} \right\}}$.
With this example in hand, we now prove the theorem in general.


\begin{proof}
By \ref{Theo: steady states in bijection with kernel}, it suffices to show that the kernel of $\Delta_-$ is spanned by linear combinations of independent cycles. To this end, we show that every vector representing an independent cycle is in the kernel of the incidence matrix. 

We first consider the action of the incidence matrix on a single edge $e_k$; that is, on the edge state with value $1$ on the edge $e_k$, and value $0$ on all remaining edges. By definition, we have
\[ [I]_{i,j} = 
\begin{cases}
-1 & \text{edge $e_j$ begins at vertex $v_i$} \\
1 & \text{edge $e_j$ ends at vertex $v_i$} \\
0 & \text{otherwise.} \\
\end{cases} \]

Let $v_{e_k,i}$ and $v_{e_k,f}$ denote the initial and final vertices of $e_k$ respectively. By the above definition, using a slight abuse of notation, we see that $I e_k = v_{e_k,f} - v_{e_k,i}$ (where here, $e_k$, $v_{e_k,i}$, and $v_{e_k,i}$ represent the edge and vertex states with value $1$ on $e_k$, $v_{e_k,i}$, and $v_{e_k,i}$ respectively and $0$ on all other edges and vertices).

We now consider the action of $I$ on an independent cycle edge state. Let $\alpha_i$ be an independent cycle edge state. We many decompose $\alpha_i$ into a sum of its individual edges $\alpha_i = e_{i,1} + e_{i,1} + \cdots + e_{i,k_i}$. Without loss of generality, we may assume the final vertex of each edge $e_{i,j}$ is the initial vertex of the edge $e_{i,j+1}$ and the final vertex of $e_{i,k_i}$ is the initial vertex of $e_{i,1}$). By linearity, we obtain $I \alpha_i = I e_{i,1} + I e_{i,2} + \cdots + I e_{i,k_i}$. Combining this with our previous consideration of the action of $I$ on individual edges, we know
\[ I e_{i,1} + I e_{i,2} + \cdots + I e_{i,k_i} = (v_{e_1,f} - v_{e_1,i}) + (v_{e_2,f} - v_{e_2,i}) + \cdots + (v_{e_{k_i},f} - v_{e_{k_i},i}).
\]

Since we have chosen our independent cycle edge states to have value $1$ on clockwise oriented edges and $-1$ on counterclockwise oriented edges, and our edges are assumed to be consecutive, we see that
\[ (v_{e_1,f} - v_{e_1,i}) + (v_{e_2,f} - v_{e_2,i}) + \cdots + (v_{e_{k_i},f} - v_{e_{k_i},i}) = (v_{e_1,f} - v_{e_{k_i},f}) + (v_{e_2,f} - v_{e_1,f}) + \cdots + (v_{e_{k_i},f} - v_{e_{k_i-1},f}). \]
This sum telescopes to the zero vector as desired.


From this, we see that all independent cycle edge states are in $\ker(\Delta_-)$; indeed, for an independent cycle edge state $\alpha$, by the above we have
\[ \Delta_- \alpha = I^t I \alpha = 0, \]
so that $\alpha \in \ker(\Delta_-)$ as desired.

Lastly, recall that $\dim \ker(\Delta_-) = b_1$. Each independent cycle edge state is linearly independent, and moreover, we have precisely $b_1$ of these states, so that the kernel of $\Delta_-$ is equal to the span of independent cycle edge states. Thus by Theorem \ref{Theo: steady states in bijection with kernel}, we see that a state is steady if and only if it is a linear combination of independent cycle edge states as desired.
\end{proof}

\section{The graph Dirac operator and the Dirac equation}

\subsection{Matrix representations of graph Dirac operators}
We now proceed to consider the graph Dirac operators. These operators can be viewed as formal square roots of the graph Laplacians. Note that both the even and odd Laplacian are real symmetric matrices, hence are diagonalizable.

Let $Q_+ D_+ Q_+^{-1}$ and $Q_-, D_-, Q_-^{-1}$ be the diagonalizations of the even and odd Laplacians respectively. Let $\sqrt{D_{\pm}}$ be the matrix given by taking the square root of each diagonal entry. The graph Dirac operators are defined as follows.

\begin{definition}
The \textit{Even Dirac operator} is defined as
$\slashed{D}_+ = Q_+ \sqrt{D_+} Q_+^{-1}.$
\end{definition}

\begin{definition}
The \textit{Odd Dirac operator} is defined as
$\slashed{D}_- = Q_- \sqrt{D_-} Q_-^{-1}.$
\end{definition}

To see that the above operators serve as formal square roots of the even and odd Laplacians, note that we have
\[ (\slashed{D}_{\pm})^2 = Q_{\pm} \sqrt{D_{\pm}} Q_{\pm}^{-1} Q_{\pm} \sqrt{D_{\pm}} Q_{\pm}^{-1} = \Delta_{\pm}, \]
as desired.

\begin{definition}\label{def: Incidence_Dirac}
The \textit{Incidence Dirac operator} is defined as
$\slashed{D}_I = \begin{pmatrix} 0 & I \\ I^t & 0 \end{pmatrix}$.
\end{definition}

The incidence Dirac operator also serves as a formal square root of the Laplace operators, as we have

\[ (\slashed{D}_I)^2 = \begin{pmatrix} \Delta_+ & 0 \\ 0 & \Delta_- \end{pmatrix} .\]

In this section, we study the linear algebraic properties of the graph Dirac operators. In particular, we prove $\ker \slashed{D}_{\pm} = \ker \Delta_{\pm}$ and $\ker \slashed{D}_I = \ker \slashed{D}_+ \bigoplus \slashed{D}_-$. Additionally, we prove a relationship between the nonzero eigenvalues of $\slashed{D}_I$ and the nonzero eigenvalues of $\Delta_{\pm}.$ 


\begin{proposition}\label{Prop: Kernels_of_I_and_I^t}
$\ker{I^t} = \ker{\Delta^+}$ and $\ker{I} = \ker{\Delta^-}$.
\end{proposition}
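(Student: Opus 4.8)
The plan is to prove each equality by a pair of inclusions, the first immediate and the second via a positive-semidefiniteness argument. Since $\Delta_+ = II^t$ and $\Delta_- = I^t I$ by Definitions~\ref{def:Even Laplacian} and~\ref{def: Odd Laplacian}, both statements are instances of the general fact that for any real matrix $M$ one has $\ker(M^t M) = \ker M$: take $M = I^t$ for the first identity (so that $M^t M = I I^t = \Delta_+$) and $M = I$ for the second (so that $M^t M = I^t I = \Delta_-$). I would prove this general fact once and then specialize. The easy direction is recorded first: if $Mx = 0$ then $M^t M x = 0$, so $\ker M \subseteq \ker(M^t M)$; specializing gives $\ker I^t \subseteq \ker \Delta_+$ and $\ker I \subseteq \ker \Delta_-$.

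For the reverse inclusion I would use the Hermitian inner product, taking care that the quantum states live in $\CC^n$ while $I$, and hence $M$, has real entries. Suppose $M^t M x = 0$ for some $x \in \CC^n$. Because $M$ is real, its transpose coincides with its conjugate transpose, so pairing against $\bar{x}^t$ yields
\[ 0 = \bar{x}^t M^t M x = \overline{(Mx)}^t (Mx) = \|Mx\|^2, \]
whence $Mx = 0$ and $x \in \ker M$. This gives $\ker(M^t M) \subseteq \ker M$, and combined with the easy inclusion proves $\ker(M^t M) = \ker M$. Specializing to $M = I^t$ and $M = I$ establishes $\ker I^t = \ker \Delta_+$ and $\ker I = \ker \Delta_-$ respectively.

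The only genuinely delicate point is the displayed computation: over $\CC$ the expression $\bar{x}^t M^t M x$ equals $\|Mx\|^2$ precisely because $M$ is real, so that $M^t$ is the true conjugate transpose and the middle term is an honest squared norm. Without reality one could only conclude $x^t M^t M x = 0$, which would not force $Mx = 0$. I expect this observation to be the main (and essentially the only) obstacle; the remaining steps are routine.
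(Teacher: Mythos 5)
Your proof is correct and follows essentially the same route as the paper: the easy inclusion $\ker M \subseteq \ker(M^tM)$, then the reverse inclusion by pairing $M^tMx=0$ against the vector to produce a squared norm that forces $Mx=0$. In fact your version is slightly more careful than the paper's, which writes $v^t(II^tv) = \|I^tv\|$ without conjugation --- a step that, as you note, is only an honest squared norm over $\CC$ if one uses $\bar v^t$ (equivalently, uses that $I$ is real so $I^t$ is the conjugate transpose); your handling of this point is the right fix.
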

\begin{proof}
 We prove the first equality; the second follows analogously. We first show $\ker I^t \subseteq \ker \Delta_+$. Indeed, suppose $v \in \ker I^t$. Then $\Delta_+ v = I (I^t v) = 0$, so $v \in \ker \Delta_+$. Conversely, suppose $v \in \ker \Delta_+$. Then $\Delta_+ v = I I^t v = 0$, so that $v^t (I I^t v) = (I^t v)^t (I^t v) = \| I^t v \| = 0$, so that $I^t v = 0$ as desired.
\end{proof}

\begin{proposition}\label{Prop:EvenKernel} $\slashed{D}_{+}$ and $\slashed{D}_{-}$ are non-negative definite matrices, $\ker{\slashed{D}_{+}}=\ker{\Delta_{+}}$, and $\ker{\slashed{D}_{-}}=\ker{\Delta_{-}}$.
\end{proposition}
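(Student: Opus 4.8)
The plan is to exploit the fact that, by construction, $\slashed{D}_\pm$ and $\Delta_\pm$ are simultaneously diagonalized by the same matrix $Q_\pm$, so every claim reduces to a statement about eigenvalues. First I would record the structural observation underpinning everything: since $\Delta_+ = II^t$ and $\Delta_- = I^t I$ are Gram matrices, they are real symmetric and positive semidefinite, so all their eigenvalues are non-negative. This is precisely what makes $\sqrt{D_\pm}$ well-defined over $\RR$ (we take the non-negative square root of each diagonal entry of $D_\pm$), and it is the foundation for the rest of the argument.

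Next I would establish that $\slashed{D}_\pm$ is symmetric and read off its spectrum. Because $\Delta_\pm$ is real symmetric, the spectral theorem lets me take the diagonalizing matrix $Q_\pm$ to be orthogonal, so that $Q_\pm^{-1} = Q_\pm^t$; then $\slashed{D}_\pm^t = (Q_\pm \sqrt{D_\pm} Q_\pm^t)^t = Q_\pm \sqrt{D_\pm} Q_\pm^t = \slashed{D}_\pm$, since the diagonal matrix $\sqrt{D_\pm}$ is symmetric. Moreover, if $q$ is a column of $Q_\pm$, i.e. an eigenvector of $\Delta_\pm$ with eigenvalue $\lambda \geq 0$, then $\slashed{D}_\pm q = Q_\pm \sqrt{D_\pm} Q_\pm^{-1} q = \sqrt{\lambda}\, q$. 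Thus $\slashed{D}_\pm$ shares the eigenvectors of $\Delta_\pm$, with eigenvalues $\sqrt{\lambda} \geq 0$. A symmetric (hence Hermitian) matrix all of whose eigenvalues are non-negative is non-negative definite, which settles the first claim.

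For the kernels I would use this shared eigenbasis directly, which has the advantage of working cleanly over $\CC$. Since $\slashed{D}_\pm$ is diagonalized by $Q_\pm$ with eigenvalues $\sqrt{\lambda}$, its kernel is the span of those eigenvectors for which $\sqrt{\lambda} = 0$; but $\sqrt{\lambda} = 0 \iff \lambda = 0$, so this is exactly the span of the eigenvectors of $\Delta_\pm$ with eigenvalue $0$, namely $\ker \Delta_\pm$. Hence $\ker \slashed{D}_+ = \ker \Delta_+$ and $\ker \slashed{D}_- = \ker \Delta_-$. As an alternative phrased purely in terms of $(\slashed{D}_\pm)^2 = \Delta_\pm$: the inclusion $\ker \slashed{D}_\pm \subseteq \ker \Delta_\pm$ is immediate since $\Delta_\pm v = \slashed{D}_\pm(\slashed{D}_\pm v)$, while for the reverse, if $\Delta_\pm v = 0$ then, using that $\slashed{D}_\pm$ is real symmetric, $\|\slashed{D}_\pm v\|^2 = \bar{v}^t \slashed{D}_\pm^t \slashed{D}_\pm v = \bar{v}^t \slashed{D}_\pm^2 v = \bar{v}^t \Delta_\pm v = 0$, forcing $\slashed{D}_\pm v = 0$.

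This is essentially routine linear algebra, and I do not expect a serious obstacle. The only points genuinely requiring care — and the ones I would state explicitly — are the well-definedness of $\sqrt{D_\pm}$ and the symmetry of $\slashed{D}_\pm$, both of which hinge on $\Delta_\pm$ being a real symmetric positive semidefinite Gram matrix and on choosing $Q_\pm$ orthogonal. Once those are in place, every assertion follows from the correspondence $\lambda \mapsto \sqrt{\lambda}$ on the common eigenbasis.
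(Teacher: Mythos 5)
Your proposal is correct and takes essentially the same route as the paper: both arguments rest on the diagonalization $\slashed{D}_{\pm} = Q_{\pm}\sqrt{D_{\pm}}Q_{\pm}^{-1}$ and the spectral correspondence $\lambda \mapsto \sqrt{\lambda}$, deducing non-negative definiteness from the non-negative eigenvalues and the kernel equality from the fact that $\sqrt{\lambda}=0$ exactly when $\lambda=0$ (the paper phrases this last step as an inclusion plus a count of multiplicities of the eigenvalue $0$, which is the same spectral argument). Your version is marginally more careful in taking $Q_{\pm}$ orthogonal to justify that $\slashed{D}_{\pm}$ is symmetric---a point the paper leaves implicit but needs for ``non-negative eigenvalues implies non-negative definite''---and your alternative norm argument $\|\slashed{D}_{\pm}v\|^2 = \bar{v}^t\Delta_{\pm}v = 0$ matches the technique the paper uses for Proposition \ref{Prop: Kernels_of_I_and_I^t}.
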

\begin{proof}
We first claim that $\slashed{D}_+$ is non-negative definite. Let $QDQ^{-1}$ be the diagonalization of $\Delta_+$. Since $\Delta_+$ is non-negative definite, all entries in the matrix $D$ will be non-negative, so that all entries in the diagonal matrix $\sqrt{D}$ of $\slashed{D}_+ = Q \sqrt{D} Q^{-1}$ will be non-negative. Thus, the eigenvalues of $\slashed{D}_+$ are non-negative, so $\slashed{D}_+$ is non-negative definite as desired.

We next claim that $\ker{\slashed{D}_+} = \ker{\Delta_+}$. To this end, we show that $\ker \slashed{D}_+ \subseteq \ker \Delta_+$, and $\dim \ker \slashed{D}_+ = \dim \ker \Delta_+$.

To see the first of these claims, let $v \in \ker \slashed{D}_+$. By definition $\slashed{D}_+ v = 0$, so that $\Delta_+ v = \slashed{D}_+ (\slashed{D}_+ v) = 0$. Thus we have $v \in \ker \Delta_+$ as desired.

We now demonstrate the second claim. Let $\lambda_1, \lambda_2, \dots, \lambda_n$ be the eigenvalues of $\Delta_+$. By definition, we see the eigenvalues of $\slashed{D}_+$ are $\sqrt{\lambda_1}, \sqrt{\lambda_2}, \dots, \sqrt{\lambda_n}$.

Recall that $\dim \ker \Delta_+$ is precisely the geometric multiplicity of the eigenvalue $0$. From the above, we see that the algebraic multiplicities of $0$ for $\Delta_+$ and $\slashed{D}_+$ are equal.  Each of these matrices are diagonalizable, hence the geometric multiplicities of $0$ for $\Delta_+$ and $\slashed{D}_+$ are equal, so that $\dim \ker \Delta_+ = \dim \ker \slashed{D}_+$ as desired.

Similarly, $\slashed{D}_{-}$ is a non-negative definite matrix and $\ker{\slashed{D}_{-}}=\ker{\Delta_{-}}$.

\end{proof}

\begin{remark}
We know that $\ker \Delta_+ \cong \mathbb{C}^{b_0}$ and $\ker \Delta_- \cong \mathbb{C}^{b_1}$, so by the above we have $\ker \slashed{D}_+ \cong \mathbb{C}^{b_0}$ and $\ker \slashed{D}_- \cong \mathbb{C}^{b_1}$.
\end{remark}

\begin{proposition}
$\ker{\slashed{D}_I} = \ker{\Delta^+} \bigoplus \ker{\Delta^-}$.
\end{proposition}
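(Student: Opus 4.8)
The plan is to exploit the block structure of $\slashed{D}_I$ and reduce everything to Proposition \ref{Prop: Kernels_of_I_and_I^t}. Since $\slashed{D}_I$ acts on vertex-edge states, I would first record that the ambient space decomposes as $\CC^{|V|} \oplus \CC^{|E|}$, so that an arbitrary vertex-edge state is a block vector $\begin{pmatrix} v \\ e \end{pmatrix}$ with $v$ a vertex state and $e$ an edge state. The summand $\ker \Delta_+$ is understood to sit inside the vertex block and $\ker \Delta_-$ inside the edge block, so the direct sum on the right-hand side is the internal direct sum of these two subspaces of the vertex-edge state space, and the decomposition is automatic once the two factors are identified.

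Next I would carry out the block multiplication
\[ \slashed{D}_I \begin{pmatrix} v \\ e \end{pmatrix} = \begin{pmatrix} 0 & I \\ I^t & 0 \end{pmatrix}\begin{pmatrix} v \\ e \end{pmatrix} = \begin{pmatrix} I e \\ I^t v \end{pmatrix}. \]
Since the two blocks live in complementary coordinate subspaces, the product vanishes if and only if both blocks vanish, i.e.\ if and only if $Ie = 0$ and $I^t v = 0$. Thus $\begin{pmatrix} v \\ e \end{pmatrix} \in \ker \slashed{D}_I$ exactly when $v \in \ker I^t$ and $e \in \ker I$.

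Finally I would invoke Proposition \ref{Prop: Kernels_of_I_and_I^t}, which gives $\ker I^t = \ker \Delta_+$ and $\ker I = \ker \Delta_-$. Substituting, the kernel of $\slashed{D}_I$ is precisely the set of block vectors $\begin{pmatrix} v \\ e \end{pmatrix}$ with $v \in \ker \Delta_+$ and $e \in \ker \Delta_-$, which is the internal direct sum $\ker \Delta_+ \oplus \ker \Delta_-$, establishing the claim.

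Honestly, I expect no substantive obstacle here: the result is a direct consequence of the anti-diagonal block form together with the already-proved identification of the kernels of $I$ and $I^t$. The only point requiring a word of care is bookkeeping, namely making explicit that the two kernels are being viewed as subspaces of the single vertex-edge state space (one in each block) so that ``$\oplus$'' is a genuine internal direct sum rather than an abstract external one; this is what guarantees the sum is direct, since a block vector is zero iff both of its blocks are zero.
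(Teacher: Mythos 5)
Your proposal is correct and follows essentially the same route as the paper: block multiplication of $\slashed{D}_I$ applied to a vertex-edge state $\begin{pmatrix} v \\ e \end{pmatrix}$, observing that the product vanishes iff $I e = 0$ and $I^t v = 0$, and then invoking Proposition \ref{Prop: Kernels_of_I_and_I^t} to identify $\ker I^t = \ker \Delta_+$ and $\ker I = \ker \Delta_-$. Your added remark that the direct sum should be read as an internal direct sum of the two block subspaces is a reasonable clarification, but it is not a substantive departure from the paper's argument.
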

\begin{proof}
We first prove that $\ker \slashed{D}_I = \ker I^t \bigoplus \ker I$. Let 
$v = 
\begin{bmatrix}
    x\\
    y
\end{bmatrix}$
where $x \in \mathbb{C}^{|V|}$ and $y \in \mathbb{C}^{|E|}$. By definition, we see 
\[ v \in \ker \slashed{D}_I \iff  \begin{bmatrix}
    0 & I\\
    I^t & 0
    \end{bmatrix}
    \begin{bmatrix}
    x\\
    y
    \end{bmatrix} = 0 \iff \begin{bmatrix} Iy \\ I^t x\end{bmatrix} = \begin{bmatrix} 0 \\ 0 \end{bmatrix}.\]
Thus we see $v = 
\begin{bmatrix}
    x\\
    y
\end{bmatrix} \in \ker \slashed{D}_I \iff x \in \ker I^t \text{ and } y \in \ker I$ so that by proposition $\ref{Prop: Kernels_of_I_and_I^t}$ we have
\[ \ker{\slashed{D}_I} = \ker{I^t} \bigoplus \ker{I}= \ker{\Delta^+} \bigoplus \ker{\Delta^-}. \]

\end{proof}

\begin{proposition}\label{prop: Dirac_Orientation}
The eigenvalues of $\slashed{D}_I$ are independent of orientation.
\end{proposition}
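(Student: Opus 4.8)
The plan is to prove something slightly stronger than equality of spectra, namely that any two orientations of the same underlying graph produce incidence Dirac operators that are \emph{conjugate}, and hence share a characteristic polynomial. First I would pin down how reorientation acts on the incidence matrix: fixing the underlying graph, reversing the orientation of a single edge $e_j$ merely negates the $j$th column of $I$, since it interchanges which endpoint counts as the ``start'' and which as the ``end.'' Thus if $I$ and $I'$ are the incidence matrices attached to two orientations, then $I' = IS$, where $S$ is the $|E| \times |E|$ diagonal sign matrix with $S_{jj} = +1$ on edges that keep their orientation and $S_{jj} = -1$ on edges that are reversed. In particular $S = S^t = S^{-1}$ and $S^2 = \mathrm{Id}$, and correspondingly $(I')^t = S I^t$.

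Next I would exhibit an explicit conjugator respecting the block structure. Let $P = \begin{pmatrix} \mathrm{Id}_{|V|} & 0 \\ 0 & S \end{pmatrix}$, which is itself an involution. Carrying out the block multiplication gives
\[ P^{-1} \slashed{D}_I' P = \begin{pmatrix} \mathrm{Id} & 0 \\ 0 & S \end{pmatrix} \begin{pmatrix} 0 & IS \\ S I^t & 0 \end{pmatrix} \begin{pmatrix} \mathrm{Id} & 0 \\ 0 & S \end{pmatrix} = \begin{pmatrix} 0 & I S^2 \\ S^2 I^t & 0 \end{pmatrix} = \begin{pmatrix} 0 & I \\ I^t & 0 \end{pmatrix} = \slashed{D}_I, \]
where $\slashed{D}_I'$ denotes the incidence Dirac operator of the reoriented graph and the last steps use $S^2 = \mathrm{Id}$. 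Since conjugate matrices have identical characteristic polynomials, $\slashed{D}_I$ and $\slashed{D}_I'$ have the same eigenvalues with the same multiplicities, which is exactly the assertion. The only thing requiring care here is correctly tracking the column-sign effect of reorientation and the block arithmetic; there is no genuine obstacle, so I expect this to be the complete and routine core of the argument.

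As an independent confirmation I would record a purely spectral argument, which also explains \emph{why} orientation drops out. Since $(\slashed{D}_I)^2 = \mathrm{diag}(\Delta_+, \Delta_-)$, the squares of the eigenvalues of $\slashed{D}_I$ are precisely the eigenvalues of $\Delta_+$ together with those of $\Delta_-$. The even Laplacian $\Delta_+ = D - A$ is manifestly orientation-independent, and the eigenvalues of $\Delta_-$ are orientation-independent by the earlier remark that the odd Laplacians of two orientations are related by conjugation. Moreover $\slashed{D}_I$ is real symmetric and satisfies $J \slashed{D}_I J = -\slashed{D}_I$ for $J = \mathrm{diag}(\mathrm{Id}_{|V|}, -\mathrm{Id}_{|E|})$, so its spectrum is symmetric about $0$; combined with the identity $\ker \slashed{D}_I = \ker (\slashed{D}_I)^2$ for symmetric matrices, this recovers the full multiset of eigenvalues of $\slashed{D}_I$ from that of $(\slashed{D}_I)^2$ and so reconfirms orientation-independence.
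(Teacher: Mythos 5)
Your argument is correct, and its core mechanism --- negating the coordinates attached to reversed edges --- is exactly the mechanism in the paper's proof, but you package it more strongly. The paper argues eigenvector by eigenvector: given $\slashed{D}_I(\Gamma)\begin{pmatrix} v \\ e\end{pmatrix} = \lambda \begin{pmatrix} v \\ e\end{pmatrix}$, it flips the signs of the entries of $e$ on the reversed edges and checks directly that the result is a $\lambda$-eigenvector of $\slashed{D}_I(\Gamma')$. Your matrix $P = \begin{pmatrix} \mathrm{Id}_{|V|} & 0 \\ 0 & S\end{pmatrix}$ is precisely that sign-flip map, but by verifying the single identity $P\,\slashed{D}_I(\Gamma')\,P = \slashed{D}_I(\Gamma)$ (your block computation is right; $S^2=\mathrm{Id}$ does the work) you get similarity of the two operators, hence equality of characteristic polynomials and of eigenvalue \emph{multiplicities} --- a point the paper's pointwise argument leaves implicit, since carrying multiplicities across requires noting that the sign-flip map is a linear isomorphism. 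Your secondary spectral argument is also sound and does not appear in the paper's proof of this proposition: orientation-independence of the spectrum of $\slashed{D}_I^2 = \begin{pmatrix}\Delta_+ & 0 \\ 0 & \Delta_-\end{pmatrix}$, together with symmetry of the spectrum of $\slashed{D}_I$ about $0$ (from $J\slashed{D}_I J = -\slashed{D}_I$) and $\ker \slashed{D}_I = \ker \slashed{D}_I^2$, determines the full multiset of eigenvalues. Notably, the paper's subsequent corollary on the nonzero eigenvalues of $\slashed{D}_I$ is proved by a similar squaring argument, but there the symmetry of the spectrum is deduced \emph{from} this proposition (via $-\slashed{D}_I$ being the Dirac operator of the fully reversed orientation), whereas your $J$-conjugation obtains it independently of any orientation considerations; so your two arguments together would let the paper prove the proposition and the corollary in either order.
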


\begin{proof}
Let $\Gamma$ and $\Gamma'$ be two orientations of the same graph. Let $v$ be a vertex state on this graph, and let $e = \begin{pmatrix} e_1 \\ e_2 \\ \vdots \\ e_n \end{pmatrix}$ be an edge state on this graph. Without loss of generality, assume $\Gamma$ and $\Gamma'$ differ on edges $e_1, \cdots , e_k$. Suppose $\slashed{D}_I (\Gamma) \begin{pmatrix} v \\ e \end{pmatrix} = \lambda \begin{pmatrix} v \\ e \end{pmatrix}$, so that $\lambda$ is an eigenvalue of $\Gamma$. Then $I(\Gamma) e = \lambda v$ and $I^t (\Gamma) v = \lambda e$. By definition of the incidence matrix, for each $i \leq k$, we have col$_i (I(\Gamma')) = - \text{col}_i (I(\Gamma))$, so that $I(\Gamma') \begin{pmatrix} -e_1 \\ \vdots \\ -e_k \\ e_{k+1} \\ \vdots \\ e_n \end{pmatrix} = I(\Gamma) e = \lambda v$. \\


Because $I^t(\Gamma) v = \lambda e$, we know from the relationship between their columns that $I^t(\Gamma^{'}) v = \lambda \begin{pmatrix} -e_1 \\ \vdots \\ -e_k \\ e_{k+1} \\ \vdots \\ e_n \end{pmatrix}$. Thus, $\slashed{D}_I (\Gamma') \begin{pmatrix} v \\-e_1 \\ \vdots \\ -e_k \\ e_{k+1} \\ \vdots \\ e_n \end{pmatrix} = \lambda \begin{pmatrix} v \\ -e_1 \\ \vdots \\ -e_k \\ e_{k+1} \\ \vdots \\ e_n \end{pmatrix}$, so $\lambda$ is an eigenvalue for $\slashed{D}_I(\Gamma')$.
\end{proof}

\begin{corollary}
 The non-zero eigenvalues of $\slashed{D}_I$ are $ -\sqrt{\lambda_1}, \sqrt{\lambda_1}, -\sqrt{\lambda_2}, \sqrt{\lambda_2}, \cdots, -\sqrt{\lambda_n}, \sqrt{\lambda_n} $,  where $\lambda_1, \cdots, \lambda_n$ are the non-zero eigenvalues of $\Delta_\pm$.
\end{corollary}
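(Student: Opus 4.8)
The plan is to combine the block relation $(\slashed{D}_I)^2 = \begin{pmatrix} \Delta_+ & 0 \\ 0 & \Delta_- \end{pmatrix}$ established above with the fact that $\slashed{D}_I$ is real symmetric (since $(\slashed{D}_I)^t = \slashed{D}_I$), hence diagonalizable with real eigenvalues. First I would record the classical observation that $\Delta_+ = I I^t$ and $\Delta_- = I^t I$ have identical nonzero spectra with identical multiplicities (the $AB$ versus $BA$ phenomenon). This justifies speaking of a single list $\lambda_1, \dots, \lambda_n$ of nonzero eigenvalues of $\Delta_\pm$, where $n = \operatorname{rank}(I)$ counts them with multiplicity.

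Next I would exhibit the eigenvectors of $\slashed{D}_I$ explicitly. Given an eigenvector $y$ of $\Delta_- = I^t I$ with $\Delta_- y = \lambda y$ and $\lambda \neq 0$, set $x = \tfrac{1}{\sqrt{\lambda}} I y$. Then $I y = \sqrt{\lambda}\, x$ by definition, while $I^t x = \tfrac{1}{\sqrt{\lambda}} I^t I y = \sqrt{\lambda}\, y$, so a direct computation gives
\[ \slashed{D}_I \begin{pmatrix} x \\ y \end{pmatrix} = \begin{pmatrix} I y \\ I^t x \end{pmatrix} = \sqrt{\lambda} \begin{pmatrix} x \\ y \end{pmatrix}, \qquad \slashed{D}_I \begin{pmatrix} -x \\ y \end{pmatrix} = -\sqrt{\lambda} \begin{pmatrix} -x \\ y \end{pmatrix}. \]
Thus each nonzero $\lambda_i$ produces the pair of eigenvalues $\pm\sqrt{\lambda_i}$. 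Note that $x \neq 0$, since $I^t I y = \lambda y \neq 0$ forces $I y \neq 0$, so the two eigenvectors $(x, y)^t$ and $(-x, y)^t$ are linearly independent.

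To see that these account for the entire nonzero spectrum with the correct multiplicities, I would fix an orthonormal eigenbasis $y_1, \dots, y_n$ of $\Delta_-$ for its nonzero eigenvalues; the resulting $2n$ vectors $(\pm x_j, y_j)^t$ are linearly independent, using orthonormality of the $y_j$ together with the injectivity of $I$ on the orthogonal complement of $\ker \Delta_-$. Since $\slashed{D}_I$ is symmetric, $\operatorname{rank}(\slashed{D}_I) = \operatorname{rank}\begin{pmatrix} 0 & I \\ I^t & 0 \end{pmatrix} = 2\operatorname{rank}(I) = 2n$ equals the number of nonzero eigenvalues counted with multiplicity, so these $2n$ eigenvectors exhaust the nonzero part of the spectrum. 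Matching against the doubled multiplicities in $(\slashed{D}_I)^2 = \operatorname{diag}(\Delta_+, \Delta_-)$ then forces each of $\sqrt{\lambda_i}$ and $-\sqrt{\lambda_i}$ to occur with precisely the multiplicity of $\lambda_i$ in $\Delta_\pm$, which is the stated list.

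The main obstacle is the bookkeeping of multiplicities rather than any single computation: one must ensure the explicit construction yields a complete and linearly independent system, which is why I lean on the rank identity $\operatorname{rank}(\slashed{D}_I) = 2\operatorname{rank}(I)$ together with the coincidence of the nonzero spectra of $\Delta_+$ and $\Delta_-$. An alternative route to the $\pm$ symmetry of the spectrum, avoiding explicit eigenvectors, is to conjugate by $J = \operatorname{diag}(\mathrm{Id}_{|V|}, -\mathrm{Id}_{|E|})$, which gives $J \slashed{D}_I J^{-1} = -\slashed{D}_I$; hence the spectrum is symmetric about $0$, and combining this with the squaring relation $(\slashed{D}_I)^2 = \operatorname{diag}(\Delta_+, \Delta_-)$ recovers the same conclusion.
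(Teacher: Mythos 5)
Your proposal is correct, and it takes a genuinely different route from the paper. The paper argues purely spectrally: from $(\slashed{D}_I)^2 = \mathrm{diag}(\Delta_+,\Delta_-)$ it factors the characteristic polynomial, invokes the equality of the nonzero spectra of $\Delta_+$ and $\Delta_-$ to obtain the multiset of squared eigenvalues, and then pins down the signs by appealing to its earlier proposition on orientation-independence of the spectrum of $\slashed{D}_I$: reversing all edge orientations sends $I \mapsto -I$, hence $\slashed{D}_I \mapsto -\slashed{D}_I$, so the spectrum must be symmetric about $0$. You instead construct the eigenvectors explicitly --- from $\Delta_- y = \lambda y$ with $\lambda \neq 0$, the vectors $(\pm x, y)^t$ with $x = \lambda^{-1/2} I y$ are eigenvectors for $\pm\sqrt{\lambda}$ --- and then close the count with the rank identity $\operatorname{rank}(\slashed{D}_I) = 2\operatorname{rank}(I)$, which for a symmetric matrix equals the number of nonzero eigenvalues with multiplicity. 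This buys you a self-contained argument (no dependence on the orientation proposition), explicit eigenvectors that identify the modes concretely, and a more careful accounting of multiplicities than the paper's closing sentence, which settles the sign question rather informally. Your final remark --- conjugation by $J = \mathrm{diag}(\mathrm{Id}_{|V|}, -\mathrm{Id}_{|E|})$ gives $J \slashed{D}_I J^{-1} = -\slashed{D}_I$ --- is exactly the algebraic content of the paper's orientation argument stripped of graph language: it exhibits directly the similarity that the paper obtains by reversing orientations, so that remark recovers the paper's proof in a cleaner, purely matrix-theoretic form.
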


\begin{proof}
Let $\lambda_1, \dots, \lambda_n$ be the nonzero eigenvalues of $\Delta_{\pm}$. Because $\slashed{D}_I^2 = \begin{bmatrix} \Delta_+ & O \\ O & \Delta_- \end{bmatrix}$, we have $P_{\slashed{D}_I^2} (t) = P_{\Delta_+}(t) \cdot P_{\Delta_-} (t)$, where $P_A(t)$ denotes the characteristic polynomial of a matrix $A$.

Because $\slashed{D}_I^2 = \begin{bmatrix} \Delta_+ & O \\ O & \Delta_- \end{bmatrix}$, we have $P_{\slashed{D}_I^2} (t) = P_{\Delta_+}(t) \cdot P_{\Delta_-} (t)$, where $P_A(t)$ denotes the characteristic polynomial of a matrix $A$. We know that $\Delta_+$ and $\Delta_-$ have the same eigenvalues with the same multiplicity, so the non-zero eigenvalues of $\slashed{D}_I^2$ are $\lambda_1, \lambda_1, \lambda_2, \lambda_2, \cdots, \lambda_n, \lambda_n$. Then the non-zero eigenvalues of $\slashed{D}_I$ are $\pm \sqrt{\lambda_1}, \pm \sqrt{\lambda_1}, \cdots , \pm \sqrt{\lambda_n}, \pm \sqrt{\lambda_n}$. \\
If $a$ is an eigenvalue of $\slashed{D}_I$, then $-a$ is an eigenvalue of $-\slashed{D}_I$. Because this is equivalent to the Incidence Dirac for the same graph with the opposite orientations, Proposition \ref{prop: Dirac_Orientation} tells us that $-a$ must also be an eigenvalue for $\slashed{D}_I$. This means that for each pair of $\pm \sqrt{\lambda_i}$ eigenvalues, if one is positive the other must be negative, and vice-versa. 
\end{proof}

\subsection{Visualization}

We implemented a program in Python which computes solutions to the different versions of the Dirac equation over time, given a random graph and initial random quantum state (vertex state for even Dirac operator, edge state for odd Dirac operator, vertex and edge state for incidence Dirac operator).

We plotted average position and average angle of these solutions over time, where average position is given as the average of real components and the average of imaginary components, and average angle is given as the angle formed by the coordinates of the average position in the complex plane.

Figure \ref{fig:2} is a typical plot of average position over time for solutions to the Dirac equation for the even, odd, and incidence Dirac operators, for a random graph with 10 vertices.

\begin{figure}[h]	
	\centering
	\begin{subfigure}{3.1in}
		\centering
		\includegraphics[width=3.2in]{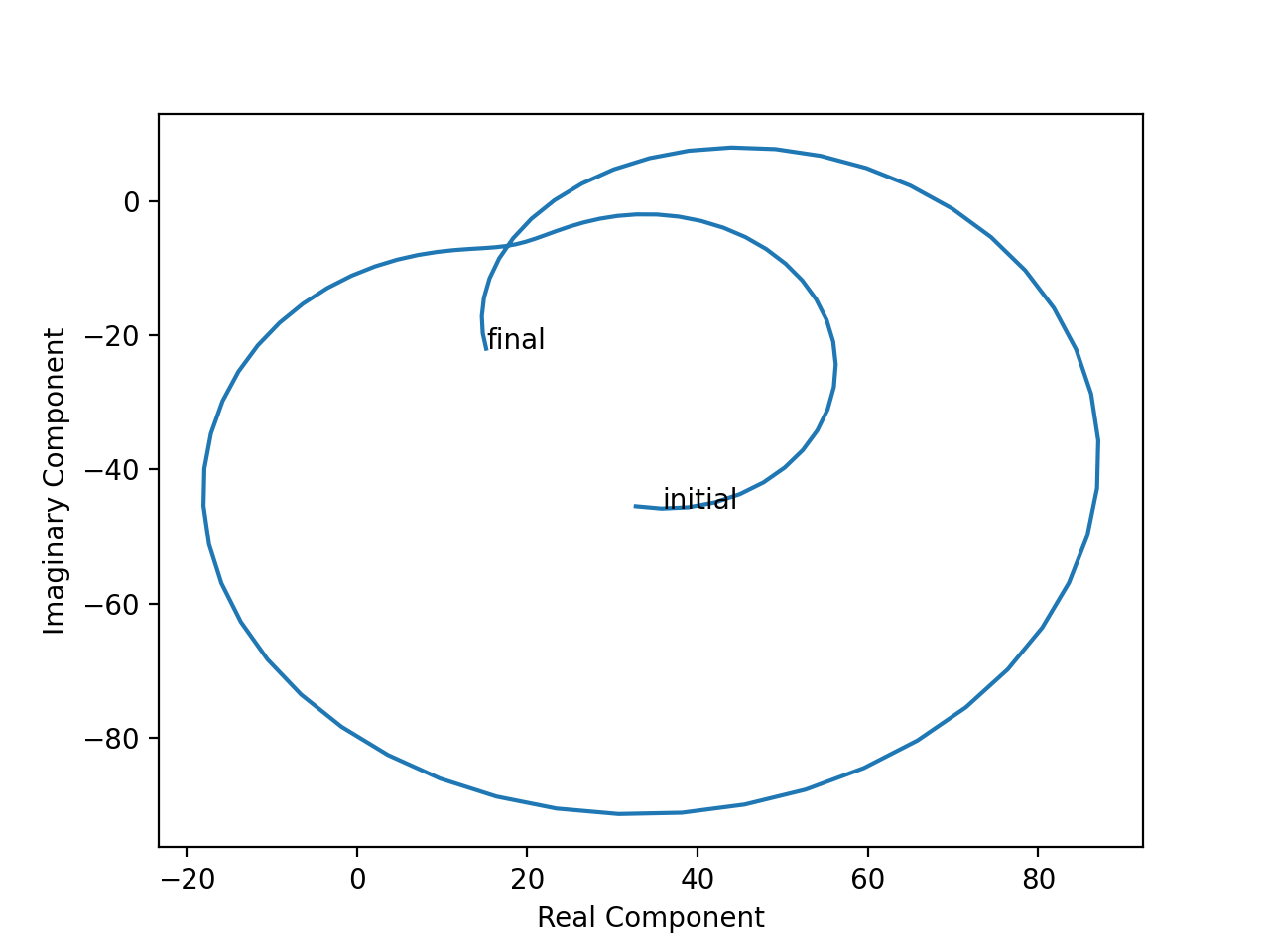}
		\caption{Even Dirac operator.}\label{fig:2a}		
	\end{subfigure}
	\begin{subfigure}{3.1in}
		\centering
		\includegraphics[width=3.2in]{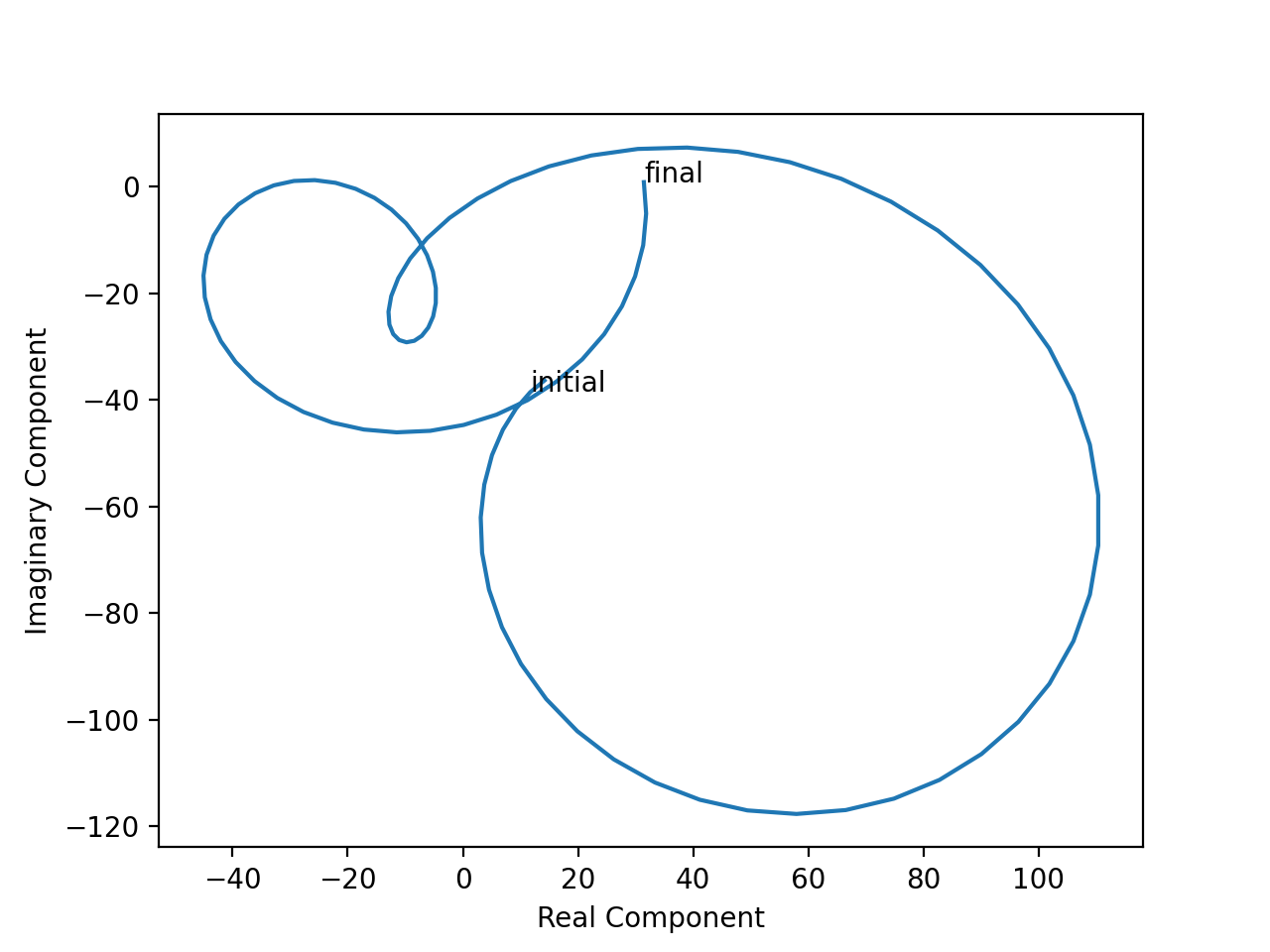}
		\caption{Odd Dirac operator.}\label{fig:2b}
	\end{subfigure}
	
	\begin{subfigure}{3.1in}
		\centering
		\includegraphics[width=3.2in]{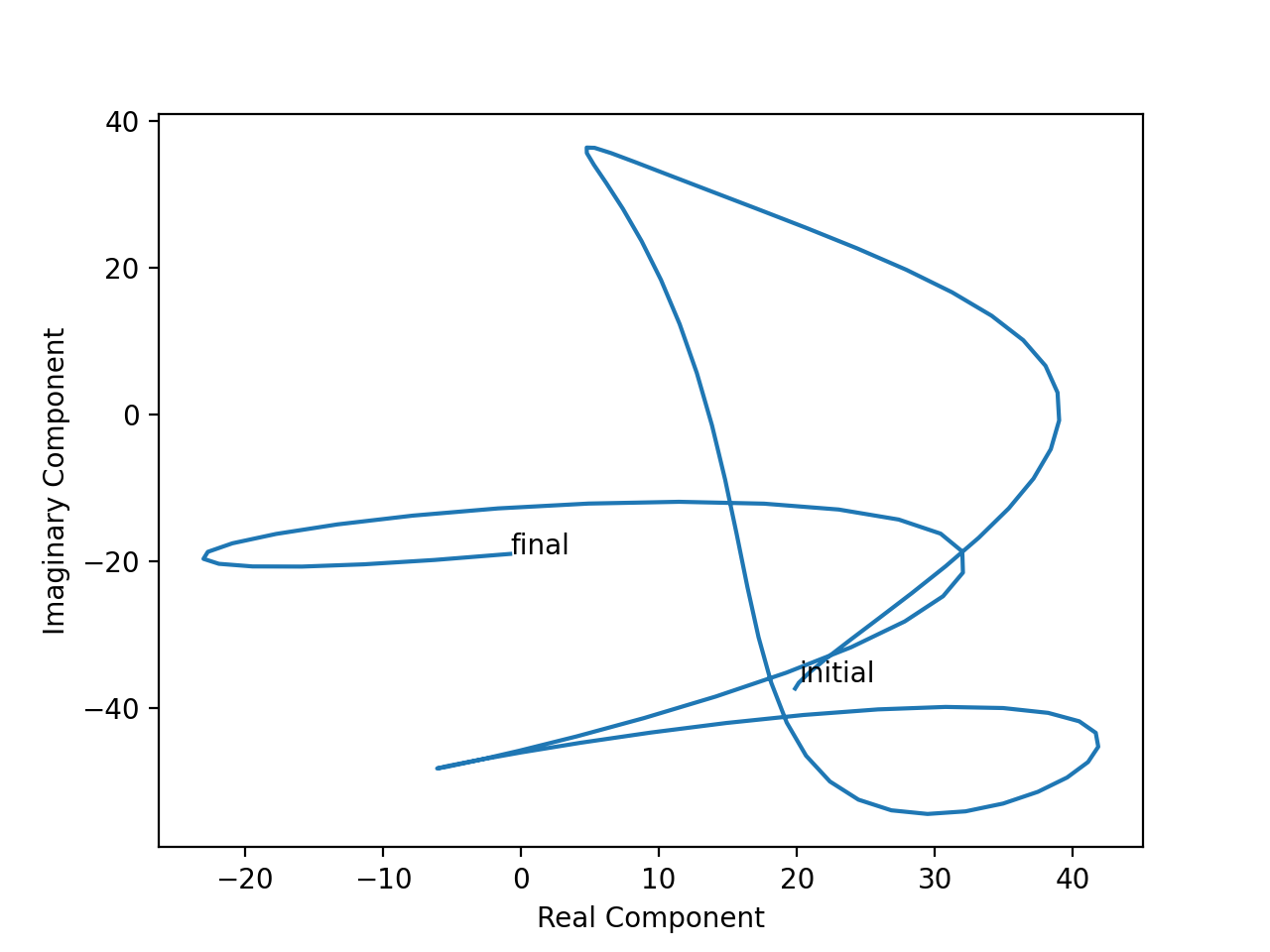}
		\caption{Incidence Dirac operator.}\label{fig:2c}
	\end{subfigure}
	\caption{Plots of average position over time for a random quantum state.}\label{fig:2}
\end{figure}

The geometric features of these plots can yield information to the behavior of the quantum system over time. As a brief example, consider a quantum state which is constant on each connected component of a graph. By corollary \ref{Even Schrodinger steady states}, we know that such a quantum state is steady over time. Figure \ref{fig:3} plots the average angle and position over time of a quantum state in the kernel of a graph. The nonconstant behavior of the graph is due to rounding error in Python.

\begin{figure}[h]
	\centering
	\begin{subfigure}{2.9in}
		\centering
		\includegraphics[width=2.8in]{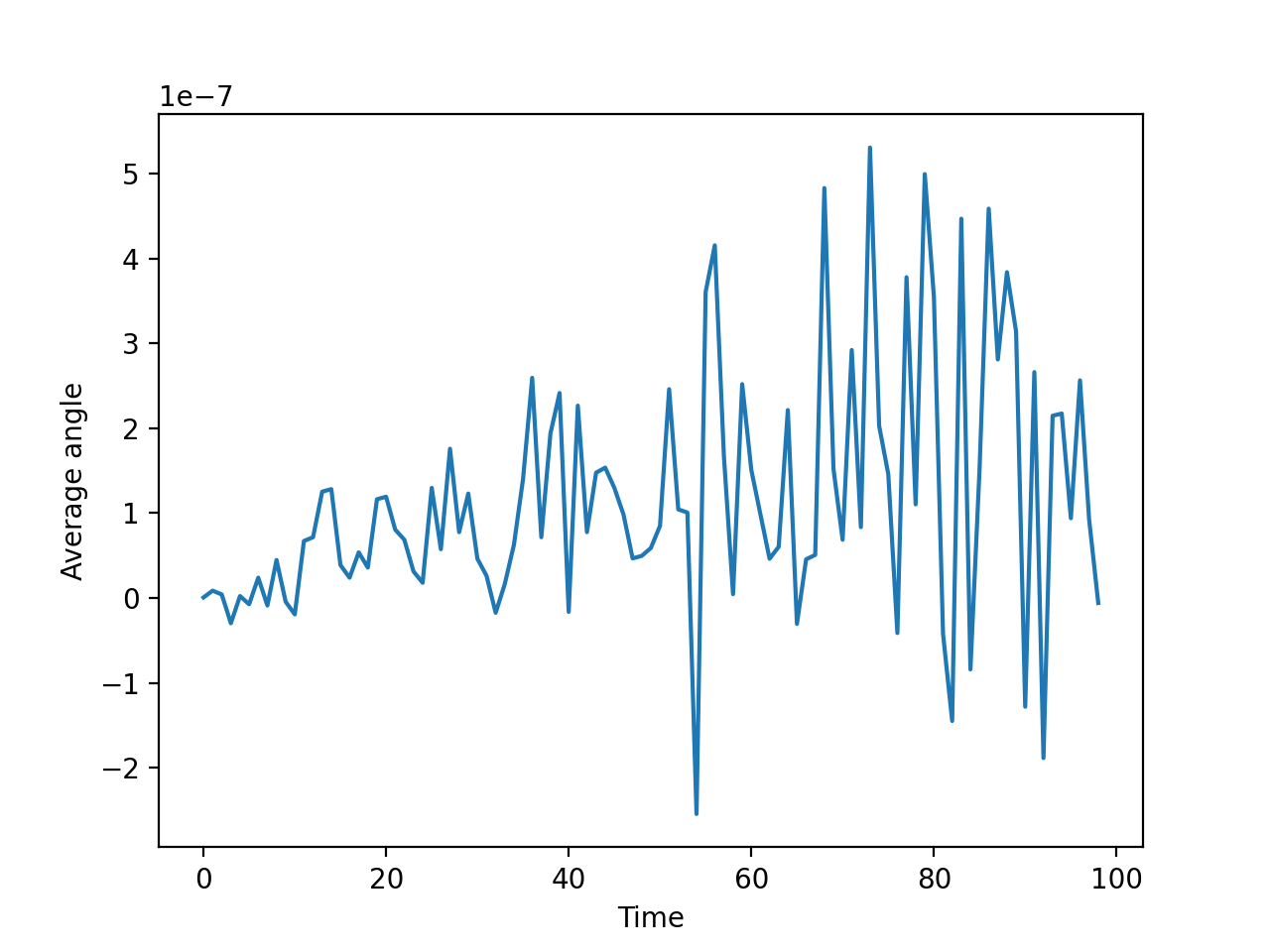}
		\caption{average angle over time.}\label{fig:3a}		
	\end{subfigure}
	\begin{subfigure}{2.9in}
		\centering
		\includegraphics[width=2.8in]{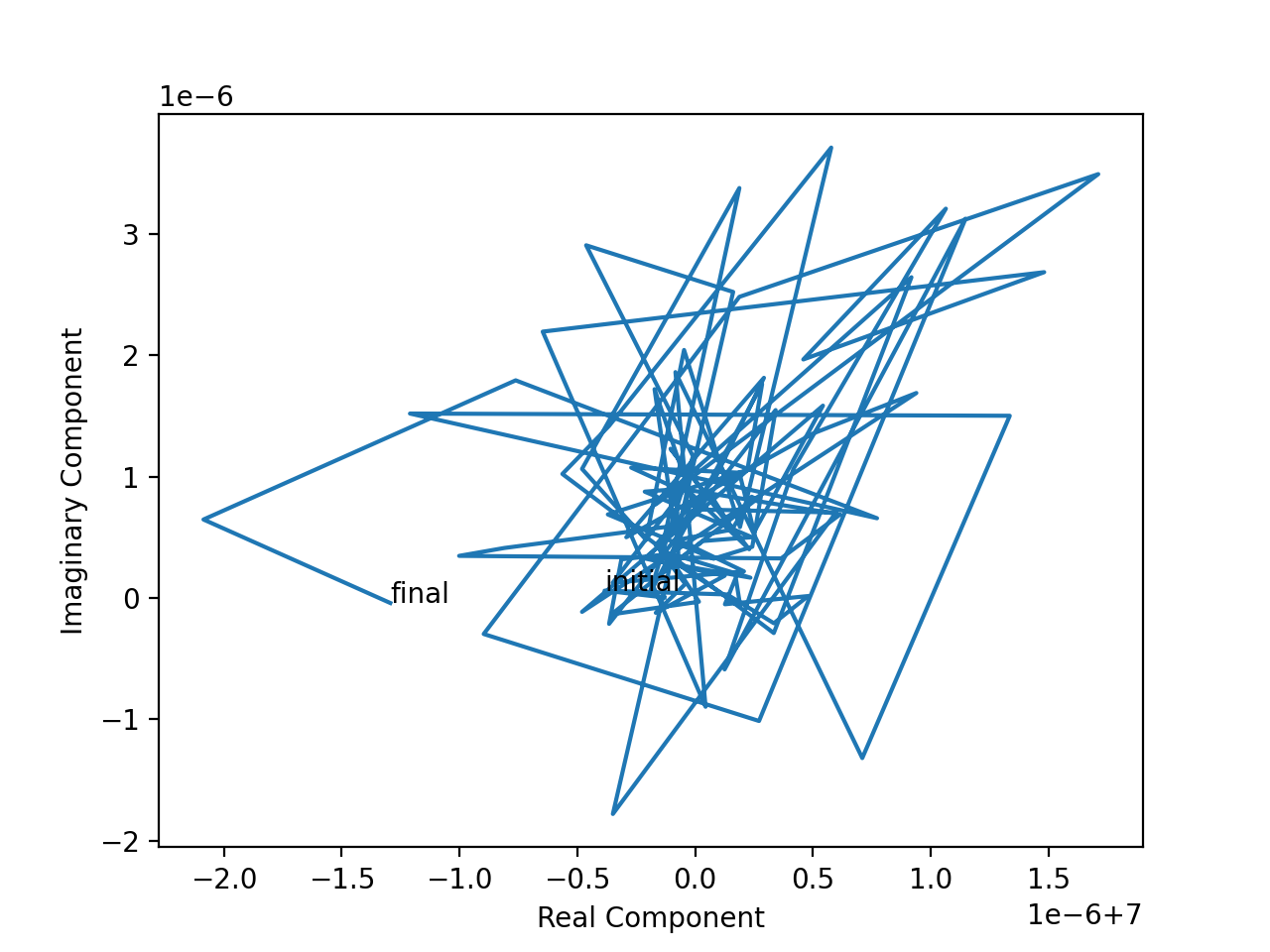}
		\caption{average position in complex plane.}\label{fig:3b}
	\end{subfigure}
	\caption{Plots of a vertex state in the kernel of the even Dirac operator.}\label{fig:3}
\end{figure}

\newpage
\subsection{Quadratic forms}

We can encode information on the eigenvalues of the graph operators via their quadratic forms, as their signatures provide invariants of the graph, which determine the behavior of the solutions of the graph Schr\"odinger and Dirac equations. \\
\begin{definition}
For a symmetric matrix $M$, its associated quadratic form is defined as $q_M(v)=v^t M v$
\end{definition}

Using this definition we find graph theoretical expressions for the quadratic forms of our graph operators. For instance, it is well known \cite{nica_SpectralGraph} that the quadratic form of $\Delta_+$ is 

\begin{equation}
q_{\Delta_+} (v) = \sum_{e_{i,j} \in E(\Gamma)} (v_j - v_i)^2.
\end{equation}

We find similar expressions for the quadratic forms of other operators. Where the even Laplacian sums a function on the vertices over the edges, the odd Laplacian sums a similar function on the edges over the vertices.

\begin{proposition}
The quadratic form of the odd Laplacian is
\[
q_{\Delta_-}(e) = \sum_{v_i \in V(\Gamma)}\big( \sum_{e_{i,j} \in E(\Gamma)} k_{i,j} \cdot e_{i,j} \big)^2,
\]

where $k_{i,j} = \begin{cases} -1 & e_{i,j} \text{ leaves } v_i \\ 1 & e_{i,j} \text{ enters } v_i  \end{cases}.$
\end{proposition}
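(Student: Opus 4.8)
The plan is to exploit the factorization $\Delta_- = I^t I$ directly, rather than expanding the matrix $\Delta_-$ entrywise. The starting point is the definition of the quadratic form together with this factorization, which gives
\[
q_{\Delta_-}(e) = e^t \Delta_- e = e^t I^t I e = (Ie)^t (Ie) = \| Ie \|^2.
\]
So the entire computation reduces to identifying the vector $Ie \in \CC^{|V|}$ and summing the squares of its entries. First I would write $\| Ie \|^2 = \sum_{v_i \in V(\Gamma)} \big( (Ie)_i \big)^2$, which is just the statement that the squared norm of a vector is the sum of the squares of its coordinates.

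Next I would compute the $i$-th coordinate of $Ie$. By the definition of matrix multiplication, $(Ie)_i = \sum_{j} [I]_{i,j}\, e_j$, where the sum ranges over all edges. Since $[I]_{i,j} = 0$ unless edge $j$ is incident to $v_i$, only the edges $e_{i,j}$ touching $v_i$ contribute, so the sum collapses to $\sum_{e_{i,j} \in E(\Gamma)} [I]_{i,j}\, e_{i,j}$. The key step is then to match the incidence-matrix signs with the coefficients $k_{i,j}$: the definition of $I$ assigns $+1$ when edge $j$ ends at (enters) $v_i$ and $-1$ when edge $j$ starts at (leaves) $v_i$, which is exactly the sign convention $k_{i,j} = 1$ for an edge entering $v_i$ and $k_{i,j} = -1$ for an edge leaving $v_i$. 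Hence $(Ie)_i = \sum_{e_{i,j} \in E(\Gamma)} k_{i,j}\, e_{i,j}$, and substituting back yields the claimed formula.

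I do not expect any genuine obstacle here, since the argument is a one-line factorization followed by a bookkeeping identification of signs; the result is really a reformulation of $\Delta_- = I^t I$ analogous to the known even-Laplacian identity $q_{\Delta_+}(v) = \sum_{e_{i,j}} (v_j - v_i)^2$, which arises in precisely the same way from $\Delta_+ = I I^t$ (there one sums the squared entries of $I^t v$, each of which is a difference $v_j - v_i$ across an edge). The only point requiring care is stating the orientation convention cleanly, so that the reader sees why $k_{i,j}$ and $[I]_{i,j}$ coincide; I would make the entering/leaving versus ends-at/starts-at correspondence explicit to avoid any sign ambiguity. It is also worth remarking that, consistent with the earlier observation that $\Delta_-$ depends on orientation, each summand $\big(\sum_{e_{i,j}} k_{i,j}\, e_{i,j}\big)^2$ does depend on the chosen orientation through the signs $k_{i,j}$, even though the spectrum of $\Delta_-$ does not.
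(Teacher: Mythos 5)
Your proof is correct, but it takes a genuinely different route from the paper. You exploit the factorization $\Delta_- = I^t I$ to write $q_{\Delta_-}(e) = e^t I^t I e = (Ie)^t(Ie)$ and then simply read off the coordinates of $Ie$, matching the incidence-matrix signs with the coefficients $k_{i,j}$. The paper instead expands $\Delta_-$ entrywise: it records that the diagonal entries are $2$ and the off-diagonal entries are $\pm 1$ according to whether two edges meet at a vertex with compatible or opposite orientations, counts the $\frac{D(v_s)(D(v_s)-1)}{2}$ cross terms at each vertex, splits each diagonal term $2e^2$ between the two endpoints of the edge, and then recognizes that the terms attached to each vertex reassemble into the perfect square $\bigl(\sum_{e_{i,s}} k_{i,s} e_{i,s}\bigr)^2$. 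Your argument buys brevity and transparency: the per-vertex grouping and the completion of the square, which the paper must verify by combinatorial bookkeeping, fall out automatically because the rows of $I$ are indexed by vertices; it also makes the parallel with the even identity $q_{\Delta_+}(v) = \|I^t v\|^2 = \sum_{e_{i,j}}(v_j - v_i)^2$ explicit, and it exposes for free that the roots of $q_{\Delta_-}$ are exactly $\ker I = \ker \Delta_-$, a fact the paper uses later. The paper's expansion, on the other hand, displays the entries of $\Delta_-$ concretely, which is useful information in its own right but makes for a heavier proof. One small point of care in your write-up: since states are complex, $(Ie)^t(Ie)$ is the sum of the \emph{squares} of the entries of $Ie$, not the Hermitian norm $\|Ie\|^2$; this is exactly what the claimed formula asserts (the paper's quadratic forms are formal sums of squares throughout), so your chain of equalities is right, but the norm notation should be avoided or flagged.
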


    \begin{proof}
    From the definition of the odd Laplacian, we have that
    \[
    \Delta_-(i,j) = \begin{cases} 2 & i = j \\
    0 & i \neq j, e_i \text{ not incident to } e_j \\
    1 & i \neq j, e_i \text{ incident to } e_j, \text{ both start or end at } v_s \\
    -1 & i \neq j, e_i \text{ incident to } e_j, \text{ one starts and the other ends at } v_s. \\
    \end{cases}
    \]
    For each $v_s \in V(\Gamma)$, there are $ \frac{D(v_s)(D(v_s)-1)} 2$ terms of the form $(-1)^k 2 e_{i,s} \cdot e_{j,s}$, where $k = 0$ when $e_i$ and $e_j$ both start or end at $v_s$ and $k= 1$ when one starts and the other ends at $v_s$. The diagonal adds the term $2 e_{i,j}^2$ to the quadratic form. Each edge is incident to exactly two vertices, thus we assign a squared term to each of these vertices. Thus for each $v_s$, we have \[\sum_{e_{i,s} \in E(\Gamma)} \left (e_{i,s}^2 + \sum_{e_{j,s}\neq e_{i,s} \in E(\Gamma)} (-1)^k 2 e_{i,s} \cdot e_{j,s}\right )\] in this quadratic form. Because each $e_{i,s}$ which appears in the second sum will appear in our first,  this sum is equal to \[\left (\sum_{e_{i,s} \in E(\Gamma)} k_{i,s} e_{i,s} \right )^2,\] where $k=-1$ if an edge starts at $v_s$ and $k = 1$ if it ends at $v_s$. \\
    Adding these terms for each vertex finishes the proof.
    \end{proof}

\begin{proposition}
The quadratic form of the incidence Dirac operator $\slashed{D}_I$  is 
\[
q_{\slashed{D}_I} \begin{pmatrix} v \\ e \end{pmatrix} = 2 \sum_{e_{i,j} \in E(\Gamma)} e_{i,j} \cdot (v_j - v_i).
\]
\end{proposition}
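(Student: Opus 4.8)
The plan is to compute the quadratic form directly from the block structure of $\slashed{D}_I$ and then reduce everything to a single bilinear expression $v^t I e$ that I can expand using the definition of the incidence matrix. First I would write the input vector as $\begin{pmatrix} v \\ e \end{pmatrix}$ with $v \in \CC^{|V|}$ and $e \in \CC^{|E|}$, and carry out the block multiplication
\[
\slashed{D}_I \begin{pmatrix} v \\ e \end{pmatrix} = \begin{pmatrix} 0 & I \\ I^t & 0 \end{pmatrix}\begin{pmatrix} v \\ e \end{pmatrix} = \begin{pmatrix} Ie \\ I^t v \end{pmatrix},
\]
so that $q_{\slashed{D}_I}\begin{pmatrix} v \\ e \end{pmatrix} = \begin{pmatrix} v^t & e^t \end{pmatrix}\begin{pmatrix} Ie \\ I^t v \end{pmatrix} = v^t I e + e^t I^t v$.

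The second key step is to observe that $v^t I e$ is a $1\times 1$ matrix, hence equal to its own transpose $(v^t I e)^t = e^t I^t v$. Therefore the two summands coincide and $q_{\slashed{D}_I}\begin{pmatrix} v \\ e \end{pmatrix} = 2\, v^t I e$. This collapses the whole computation onto understanding the single scalar $v^t I e = \sum_{i,j} v_i [I]_{i,j} e_j$, where the inner index runs over vertices and the outer over edges.

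Finally I would expand this sum edge-by-edge. Fixing an edge $e_{i,j}$ oriented from $i$ to $j$, its column of $I$ has exactly two nonzero entries: $[I]_{i, e_{i,j}} = -1$ (the edge starts at $i$) and $[I]_{j, e_{i,j}} = +1$ (the edge ends at $j$), by the definition of the incidence matrix. Hence the contribution of that column to $v^t I e$ is $e_{i,j}\big(v_j \cdot (+1) + v_i \cdot (-1)\big) = e_{i,j}(v_j - v_i)$, and summing over all edges gives $v^t I e = \sum_{e_{i,j} \in E(\Gamma)} e_{i,j}(v_j - v_i)$. Combining with the factor of $2$ from the previous step yields the claimed identity.

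The computation is essentially mechanical, so the only real point of care—and the step I expect to be the main obstacle—is matching the orientation convention so the per-edge contribution comes out as $v_j - v_i$ rather than $v_i - v_j$. This just requires fixing, once and for all, that $e_{i,j}$ denotes the edge starting at $i$ and ending at $j$, consistent with the sign convention in the definition of $I$; with that fixed, the signs are forced and no further subtlety arises.
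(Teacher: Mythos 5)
Your proposal is correct and follows essentially the same route as the paper's proof: block-multiply to get $v^t I e + e^t I^t v$, note these scalars are transposes of each other to obtain the factor of $2$, then expand column-by-column of $I$ to read off $e_{i,j}(v_j - v_i)$ per edge. If anything, your explicit fixing of the orientation convention (edge $e_{i,j}$ starts at $i$, ends at $j$, matching the $\pm 1$ signs in the definition of $I$) is stated more carefully than the paper's own parenthetical, which is internally inconsistent on this point.
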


    \begin{proof}
    Because $(v^t \ e^t) \slashed{D}_I \begin{pmatrix} v \\ e\end{pmatrix} = v^t I e + e^t I^t v$, and these $1 \times 1$ matrices are transposes of one another, they are equal, giving $(v^t \ e^t) \slashed{D}_I \begin{pmatrix} v \\ e\end{pmatrix} = 2 e^t I^t v$. Each row of $I^t$ represents some $e_{i,j}$, and returns $v_j - v_i$ (where $e_{i,j}$ goes from $v_j$ to $v_i$), so when we multiply this out, we get $\sum_{e_{i,j} \in E(\Gamma)} e_{i,j} \cdot (v_j - v_i)$. 
    \end{proof}

Because the roots of the quadratic form of a non-negative definite matrix are exactly the vectors in the kernel of the matrix, the set of roots of $q_{\Delta_\pm}$ (denoted by root$(q_{\Delta_\pm})$) provides important graph-theoretical information. Specifically, $q_{\Delta_+} (v)= 0$ if and only if $v$ is constant on connected components, and $q_{\Delta_-}(e) = 0$ if and only if $e$ represents a cycle on the graph. Interestingly, there is not a similar equality for $\slashed{D}_I$, as stated in the following proposition. \\

\begin{proposition}
\[
\text{root }(q_{\slashed{D}_I}) \supseteq \ker(\Delta_+) \bigoplus \mathbb{C}^{|E|} \cup \mathbb{C}^{|V|} \bigoplus \ker(\Delta_-).
\]
\end{proposition}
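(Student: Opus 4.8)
The plan is to read off the containment directly from the formula for the quadratic form established above, together with the kernel identifications of Proposition~\ref{Prop: Kernels_of_I_and_I^t}. Writing a general vector of $\mathbb{C}^{|V|} \oplus \mathbb{C}^{|E|}$ as $\begin{pmatrix} v \\ e \end{pmatrix}$, recall that the preceding proposition gives $q_{\slashed{D}_I}\begin{pmatrix} v \\ e \end{pmatrix} = 2\, e^t I^t v$, and since this is a scalar equal to its own transpose it also equals $2\, v^t I e$. The right-hand side of the claimed inequality is a \emph{union} of two subsets of $\mathbb{C}^{|V|} \oplus \mathbb{C}^{|E|}$, so it suffices to show that each of the two subsets $\ker(\Delta_+) \bigoplus \mathbb{C}^{|E|}$ and $\mathbb{C}^{|V|} \bigoplus \ker(\Delta_-)$ is contained in $\text{root}(q_{\slashed{D}_I})$ separately.

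For the first subset, I would take $\begin{pmatrix} v \\ e \end{pmatrix}$ with $v \in \ker(\Delta_+)$ and $e \in \mathbb{C}^{|E|}$ arbitrary. By Proposition~\ref{Prop: Kernels_of_I_and_I^t} we have $\ker(\Delta_+) = \ker(I^t)$, so $I^t v = 0$ and hence $q_{\slashed{D}_I}\begin{pmatrix} v \\ e \end{pmatrix} = 2\, e^t(I^t v) = 0$. For the second subset, take $v \in \mathbb{C}^{|V|}$ arbitrary and $e \in \ker(\Delta_-) = \ker(I)$; then $Ie = 0$ and the symmetric form of the expression gives $q_{\slashed{D}_I}\begin{pmatrix} v \\ e \end{pmatrix} = 2\, v^t(I e) = 0$. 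Thus both subsets lie in the root set, and therefore so does their union, which is exactly the asserted containment.

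There is essentially no analytic obstacle here; the only things to be careful about are the set-theoretic parsing of the right-hand side (a union of two coordinate-type subspaces, not a single direct sum) and the correct invocation of the two kernel identities. It is worth recording why the containment is proper rather than an equality, since this is the point of the surrounding discussion: the form $q_{\slashed{D}_I}$ is the off-diagonal pairing $2\, v^t I e$, so any pair for which $v$ is orthogonal to $Ie$ is a root even when neither $v \in \ker(I^t)$ nor $e \in \ker(I)$. Such ``mixed'' roots lie outside the union of the two subspaces, which is precisely why one obtains $\supseteq$ and not $=$.
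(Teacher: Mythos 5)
Your proof is correct, and it takes a genuinely different route from the paper's. The paper argues graph-theoretically: for $v \in \ker(\Delta_+)$ it invokes the characterization that $v$ is constant on connected components, so every difference $v_j - v_i$ over an edge vanishes in the sum $2\sum_{e_{i,j}} e_{i,j}(v_j - v_i)$; for $e \in \ker(\Delta_-)$ it appeals to the cycle-space description of the kernel and argues (somewhat informally) that each vertex term is added and subtracted once around a cycle. You instead work purely algebraically from the intermediate identity $q_{\slashed{D}_I}\begin{pmatrix} v \\ e \end{pmatrix} = 2\,e^t I^t v = 2\,v^t I e$ together with Proposition \ref{Prop: Kernels_of_I_and_I^t}, so that both cases collapse to the one-line observations $I^t v = 0$ and $Ie = 0$. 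Your version buys brevity and rigor: it avoids the hand-waving in the paper's cycle case entirely, and it makes transparent why the two halves of the union are symmetric (they correspond to killing the pairing on the left or on the right). What the paper's version buys is the graph-theoretic picture — it ties the roots of $q_{\slashed{D}_I}$ directly back to the steady-state interpretations (constant states and cycle states) developed in Section \ref{sec:Laplace_Schrod}, which is the thread the surrounding discussion follows. Your closing remark on why the containment is strict (mixed roots where $v$ pairs to zero against $Ie$ without either factor vanishing) is also consistent with the paper's subsequent $C_3$ example.
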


\begin{proof}
If $v \in \ker(\Delta_+)$, then we know $v$ is constant on connected components, so $v_j - v_i = 0$ for each $e_{i,j} \in E(\Gamma)$. Then for any $e$, we will have that $q_{\slashed{D}_I} (v \bigoplus e) = 0$. 
\\
For $e \in \ker(\Delta_-)$, any non-zero terms of $e$ must be a part of a cycle, so by definition of $q_{\slashed{D}_I}$, each $v_i$ will be added and subtracted exactly once on the cycle (if an edge goes in the opposite direction to an incident edge, the edge term must have the opposite sign to be in the kernel). From the properties of $\ker(\Delta_-)$, the terms by which these vertex terms are multiplied will be equal, so $v \bigoplus e$ is a root of our quadratic for any $v$.

\end{proof}

Because the eigenvalues of the Incidence Dirac operator come in pairs of positive and negative values, it is not non-negative definite, so the root of the quadratic form doesn't match with the kernel of the matrix as it did for the Laplace operators. For this reason, it is not clear what the entire root of $q_{\slashed{D}_I}$ would look like for most graphs. \\
An example of a graph where we can see the entire root of $q_{\slashed{D}_I}$ is $P_2$, for which we have 
\[
\slashed{D}_I = \begin{pmatrix} 0 & 0 & -1 \\ 0 & 0 & 1 \\ -1 & 1 & 0 \end{pmatrix}.
\]
Here, if $v \bigoplus e \in {\rm root}(q_{\slashed{D}_I})$, we must have that either $e = 0$ or $v_1 = v_2$, so in this case $\text{root }(q_{\slashed{D}_I}) =  \ker(\Delta_+) \bigoplus \mathbb{C}^{|E|} \cup \mathbb{C}^{|V|} \bigoplus \ker(\Delta_-) $
\\
By contrast, cycle graphs will have many roots which are not a part of this union. For example, the vector $\begin{pmatrix} 1 \\2 \\2\\2\\1\\2 \end{pmatrix}$ is a root for $q_{\slashed{D}_I}$ for the oriented graph in Figure \ref{fig:cycle}.

\begin{figure}[h]
    \centering
    \includegraphics[scale=0.4]{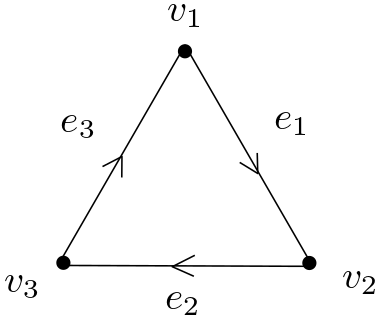}
    \caption{Oriented cycle graph $C_3$.}\label{fig:cycle}
\end{figure}

\subsection{Powers of $\slashed{D}_I$} 
An interesting property of the Laplace operators is the fact that the entries of the powers of these matrices encode combinatorial information about superwalks on the graph, as explained by \cite{Yu17}. \\
A \textit{vertex superwalk} of length 1 starting at $v_i$ can 
\begin{enumerate}
    \item go through an incident edge to one of it's neighbors, in which case sgn$(\gamma) = -1$, or
    \item go through an incident edge and return to $v_i$, in which case sgn$(\gamma) = 1$.
\end{enumerate}

Similarly, an \textit{edge superwalk} of length 1 starting at $e_i$ can
\begin{enumerate}
    \item go through an incident vertex $v$ to an adjacent edge $e_j$, in which case sgn$(\gamma) = 1$ if $e_j$ and $e_i$ both go into or both leave $v$ and sgn$(\gamma)=-1$ if one of $e_j$ and $e_i$ enters $v$ and the other exits it, or
    \item go to an incident vertex and back to $e_i$ in which case sgn$(\gamma) = 1$.
\end{enumerate}

Similarly, we can define a new type of walk on the graph which is encoded by the Incidence Dirac operator.
\begin{definition}
A \textit{vertex-edge walk} of length 1 moves from a vertex to an incident edge, or an edge to an incident vertex. We can define the sign of the walk as follows: If the edge involved in the walk originates from the vertex involved then  sgn$(\gamma) = -1$ and if the edge enters the vertex involved then sgn$(\gamma)=1$. Note for this definition we do not care whether we're starting at a vertex or an edge. For a graphical way of thinking about the signs of these walks see figure \ref{fig:sgn}. \\
A vertex-edge walk of length $k$ is a combination of $k$ walks of length 1, and has a sign equal to the product of the signs of each step.
\end{definition}

\begin{figure}[h]
    \centering
    \includegraphics[scale = 0.4]{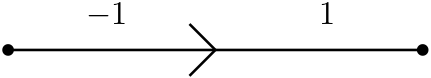}
    \caption{Signs of vertex-edge walks on a $P_2$ graph.}
    \label{fig:sgn}
\end{figure}
By this definition, we note that vertex superwalks are vertex-edge walks with an even length, while edge superwalks are vertex-edge walks of an odd length.
\begin{theorem} \label{thm: Incidence_Dirac}
\[\slashed{D}_I^k (i,j) = \sum_{\gamma, i \to j, k} \text{sgn}(\gamma).\]
\end{theorem}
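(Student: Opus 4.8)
The plan is to prove this by induction on the walk length $k$, exploiting the fact that $\slashed{D}_I$ is, up to signs, the adjacency matrix of the bipartite incidence structure on $V \sqcup E$: its only nonzero entries sit in the off-diagonal blocks $I$ and $I^t$, so a single application sends an index of vertex-type to one of edge-type and vice versa. This is precisely the combinatorial content of a length-$1$ vertex-edge walk, and the whole statement is the signed analogue of the classical ``entries of $A^k$ count walks of length $k$'' fact.

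For the base case $k=1$ I would simply read off the entries of $\slashed{D}_I$. If $i$ indexes a vertex and $j$ an edge, then $\slashed{D}_I(i,j) = I_{ij}$, which is $+1$ when edge $j$ ends at vertex $i$ and $-1$ when edge $j$ starts at $i$; symmetrically $\slashed{D}_I(i,j)=(I^t)_{ij}$ when $i$ is an edge and $j$ a vertex. In both cases this matches the defined sign of the unique length-$1$ vertex-edge walk between the incident pair (sign $+1$ if the edge enters the vertex, $-1$ if it leaves), and the entry is $0$ exactly when no incidence, hence no length-$1$ walk, exists. This settles the base case.

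For the inductive step, assume the formula for $k-1$. Writing $\slashed{D}_I^k = \slashed{D}_I^{k-1}\,\slashed{D}_I$ and expanding the $(i,j)$ entry,
\[
\slashed{D}_I^k(i,j) = \sum_{\ell} \slashed{D}_I^{k-1}(i,\ell)\,\slashed{D}_I(\ell,j)
= \sum_{\ell}\Big(\sum_{\gamma:\, i\to\ell,\,k-1}\!\!\text{sgn}(\gamma)\Big)\Big(\sum_{\delta:\,\ell\to j,\,1}\!\!\text{sgn}(\delta)\Big),
\]
using the inductive hypothesis on the first factor and the base case on the second. The remaining task is to identify this with $\sum_{\gamma:\,i\to j,\,k}\text{sgn}(\gamma)$. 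For this I would establish the evident bijection: every length-$k$ walk $i\to j$ arises uniquely by concatenating a length-$(k-1)$ walk $i\to\ell$ with a length-$1$ step $\ell\to j$, where $\ell$ is the penultimate index; conversely any such concatenation is again a valid walk because consecutive indices automatically alternate vertex/edge type. Since the sign of a walk is by definition the product of the signs of its steps, the sign factors multiply, and distributing the product converts the double sum into the single signed sum over length-$k$ walks.

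The only real subtlety, and the step I would treat most carefully, is the interaction between the sign convention and the block structure. The block-antidiagonal form of $\slashed{D}_I$ guarantees that $\slashed{D}_I(\ell,j)$ vanishes unless $\ell$ and $j$ have opposite types, so the intermediate index $\ell$ is forced into the type needed for the concatenation to make sense and no spurious terms survive. Multiplicativity of the sign is immediate from the definition, so once the concatenation bijection is pinned down the sign bookkeeping is automatic. This uniform treatment of indices in $V\sqcup E$ is exactly why the definition ``does not care whether we start at a vertex or an edge,'' and it recovers as the even-$k$ and odd-$k$ special cases the statements about vertex and edge superwalks recorded above.
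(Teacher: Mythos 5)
Your proof is correct, but it takes a different route from the paper's. The paper also inducts on $k$, but it splits the inductive step into parity cases: for $k$ even it writes $\slashed{D}_I^{k} = \begin{pmatrix} \Delta_+^{k/2} & O \\ O & \Delta_-^{k/2} \end{pmatrix}$ and appeals to the previously cited fact that powers of $\Delta_\pm$ count signed vertex/edge superwalks (which are identified with even-length and odd-length vertex-edge walks respectively); for $k$ odd it writes the nonzero blocks as $\Delta_+^{(k-1)/2} I$ and $\Delta_-^{(k-1)/2} I^t$ and does a ``last step'' analysis of the entry $\mathrm{row}_i\bigl(\Delta_+^{(k-1)/2}\bigr)\cdot \mathrm{col}_j(I)$, using the two endpoints of the edge $e_j$. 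Your argument instead factors $\slashed{D}_I^{k} = \slashed{D}_I^{k-1}\slashed{D}_I$ uniformly, expands over the intermediate index $\ell$, and invokes the concatenation bijection between length-$k$ walks and pairs (length-$(k-1)$ walk, length-$1$ step), with multiplicativity of the sign doing the bookkeeping. What your approach buys is uniformity and self-containment: there is no case split on parity, and you never need the external superwalk interpretation of $\Delta_\pm^m$ — the theorem (and, as you note, the superwalk statements as its even/odd specializations) falls out of the standard signed walk-counting induction. What the paper's approach buys is an explicit consistency check against the known results on Laplacian powers, making the relationship between vertex-edge walks and superwalks part of the proof rather than a corollary. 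One small point you could make fully explicit: in your base case, the sum $\sum_{\delta:\,\ell \to j,\,1}\mathrm{sgn}(\delta)$ has at most one term, since an incident vertex-edge pair in a simple graph admits exactly one length-$1$ step between them; this is what makes the identification of $\slashed{D}_I(\ell,j)$ with that sum immediate.
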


\begin{proof}
We will prove this theorem by induction on $k$. \\
Letting $k=1$, we know $\slashed{D}_I = \begin{pmatrix} O & I \\ I^t & O \end{pmatrix}$, and checking these entries we see that the theorem is true. \\
Let it be true for some arbitrary $k-1$. There are two cases. \\ Case 1: Let $k-1$ be odd. Then $\slashed{D}_I^k = \begin{pmatrix} \Delta_+^{k/2} & O \\ O & \Delta_-^{k/2} \end{pmatrix}$, so we know that the entries count vertex superwalks and edge superwalks of length $k/2$, which means that they count vertex-edge walks of length $k$. \\
Case 2: Let $k-1$ be even. Then $\slashed{D}_I^k = \begin{pmatrix} O & \Delta_+^{(k-1)/2} I \\ \Delta_-^{(k-1)/2} I^t & O \end{pmatrix}$. We know there cannot be vertex or edge superwalks, so the $0$ block matrices are expected. Because the other block matrices are transposes of one another, we will analyze $A = \Delta_+^{(k-1)/2} I$. Looking at each entry, we have \[a_{i,j} = row_i (\Delta_+^{(k-1)/2}) \cdot col_j (I) = 1 \cdot \sum_{\gamma, i \to n, k-1} \text{sgn}(\gamma) -1 \cdot \sum_{\gamma, i \to m, k-1} \text{sgn}(\gamma),\] where $e_j$ goes from $v_m$ to $v_n$. Then because the last step of a vertex-edge walk starting at $v_i$ and ending at $e_j$ must include one of the vertices adjacent to $e_j$, our sum must include a term for each such walk, and we have that $a_{i,j} = \sum_{\gamma, i \to j, k} \text{sgn}(\gamma)$. 
\end{proof}

As an example of this result we look to the third power of the incidence matrix for $K_3$ with clockwise orientation. 
$$\slashed{D}_I^3 = \begin{pmatrix} 0 & 0 & 0 & 0 & 3 & -3 \\ 0 & 0 & 0 & -3 & 0 & 3 \\ 0 & 0 & 0 & 3 & -3 & 0 \\ 0 & -3 & 3 & 0 & 0 & 0 \\ 3 & 0 & -3 & 0 & 0 & 0 \\ -3 & 3 & 0 & 0 & 0 & 0   \end{pmatrix}.$$ \\

Here we see that there are 3 vertex-edge walks between a vertex and and an incident edge, and that the powers follow as expected. Looking at the entries between $v_1$ and $e_1$, we see a value of 0, because there are two such walks of length 3 connecting these elements, but these walks have opposite signs, cancelling one another out as seen in Figure \ref{fig:F8}. 
\begin{figure}[h]
    \centering
    \includegraphics[scale = 0.3]{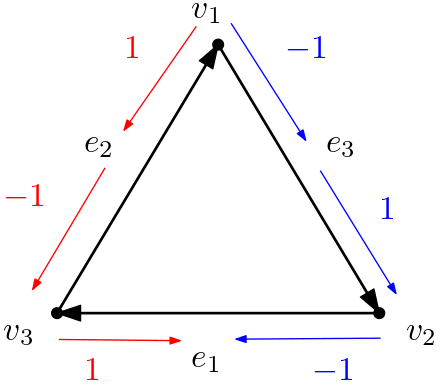}
    \caption{Length 3 walks from $v_1$ to $e_1$.}
    \label{fig:F8}
\end{figure}

The full list of vertex edge walks and their signs for $K_3$ originating at $v_1$, each of which is equivalent to performing the same moves at one of the other vertices.
\begin{align*}
    \gamma_1 &= v_1 \to e_2 \to v_3 \to e_1 \ \ \ \ \ \ & sgn(\gamma_1)=-1 \\
    \gamma_2 &= v_1 \to e_3 \to v_2 \to e_1 \ \ \ \ \ \ & sgn(\gamma_2)=1 \\
    \gamma_3 &= v_1 \to e_2 \to v_3 \to e_2 \ \ \ \ \ & sgn(\gamma_3) = 1 \\
    \gamma_4 &= v_1 \to e_2 \to v_1 \to e_2 \ \ \ \ \ & sgn(\gamma_4)=1 \\
    \gamma_5 &= v_1 \to e_3 \to v_1 \to e_2 \ \ \ \ \ & sgn(\gamma_5)=1 \\
    \gamma_6 &= v_1 \to e_3 \to v_2 \to e_3 \ \ \ \ \ & sgn(\gamma_6) = -1 \\
    \gamma_7 &= v_1 \to e_3 \to v_1 \to e_3 \ \ \ \ \ & sgn(\gamma_7)=-1 \\
    \gamma_8 &= v_1 \to e_2 \to v_1 \to e_3 \ \ \ \ \ & sgn(\gamma_8)=-1. \\
\end{align*}


\section{Dimer models and a gluing formula for Dirac operators}\label{sec:Dimer}
Gluing formulae for discrete operators have been studied in mathematical physics \cite{Reshetikhin:2014jaa}, and in the case of the discrete Laplace operator, it has some interesting connections with graph theory \cite{Contreras20}. On the other hand, dimer models have been studied extensively in the context of statistical physics, dimer models, quantum mechanics and combinatorics \cite{Cimasoni07, KenyonNotes, Kenyon2002}.

In this section we describe a combinatorial interpretation of a gluing formula for the Kastelyn matrix, which can be regarded as a discrete Dirac operator for lattice graphs.

\subsection{Graph gluing}
First, we define two different types of gluing of graphs, which are relevant while discussing gluing identities for graph Dirac operators.
\begin{definition}\label{Def:interface_graph}
Let $\Gamma_1$ and $\Gamma_2$ be two graphs. Let $\Gamma^\partial_1$ and $\Gamma^\partial_2$ be isomorphic subgraphs of $\Gamma_1$ and $\Gamma_2$ respectively. Then $I=\Gamma^\partial_1=\Gamma^\partial_2$ is an \textit{interface} of $\Gamma_1$ and $\Gamma_2$.
\end{definition}

\begin{definition}\label{Def:interface_gluing}
Let $\Gamma_1$ and $\Gamma_2$ be two graphs and $I$ an interface. The \textit{interface gluing} of the two graphs  $\Gamma_1 \sqcup_I \Gamma_2$ is defined by: $V(\Gamma_1 \sqcup_I \Gamma_2)= (V(\Gamma_1) \setminus V(I)) \cup (V(\Gamma_2) \setminus V(I)) \cup V(I)$ and $E(\Gamma_1 \sqcup_I \Gamma_2)= (E(\Gamma_1) \setminus V(I)) \cup (E(\Gamma_2) \setminus E(I)) \cup E(I)$.
\end{definition}

\begin{definition}\label{Def:bridge_graph}
Let $\Gamma_1$ and $\Gamma_2$ be two graphs. Let $\{v_{1,1}, ..., v_{1,k}\} \subseteq \Gamma_1$ and $\{v_{2,1}, ..., v_{2,k}\} \subseteq \Gamma_2$. Let $\{e_1, ..., e_k\}$ be the set of edges that connect the pairs of vertices $(v_{1,1},v_{2,1}), ..., (v_{1,k},v_{2,k})$. Then a \textit{bridge graph} $B$ between $\Gamma_1$ and $\Gamma_2$ is the graph with $V(B)=\{v_{1,1}, ..., v_{1,k}, v_{2,1}, ..., v_{2,k}\}$ and $E(B)=\{e_1, ..., e_k\}$.
\end{definition}

\begin{definition}\label{Def:bridge_gluing}
Let $\Gamma_1$ and $\Gamma_2$ be two graphs and $B$ a bridge graph. The \textit{bridge gluing} of the two graphs  $\Gamma_1 \sqcup_B \Gamma_2$ is defined by: $V(\Gamma_1 \cup_B \Gamma_2)=V(\Gamma_1) \cup V(\Gamma_2)$ and $E(\Gamma_1 \sqcup_B \Gamma_2)=E(\Gamma_1) \cup E(B) \cup E(\Gamma_2)$.
\end{definition}
A \textit{perfect matching} of a graph is a set of edges such that each vertex is connected to exactly one edge. Kasteleyn's Theorem \cite{Kasteleyn} allows us to connect the number of perfect matchings to the Kasteleyn matrix, $K$, defined as the weighted adjacency matrix of a graph $G$, with horizontal edges weighted 1 and vertical
edges weighted $i =\sqrt{-1}$. We are interested in this matrix because of its connection to graph quantum mechanics: following \cite{KenyonNotes}, the matrix $K$ acts like a Dirac operator when restricted to sublattices of a lattice graph. That is, when a lattice graph's Kastelyn matrix is restricted to the four sublattices comprised of vertices spaced two apart in each cardinal direction, $K^*K=\Delta_+$.
\\\\
\begin{figure}[h]
    \centering
    \includegraphics[scale=.2]{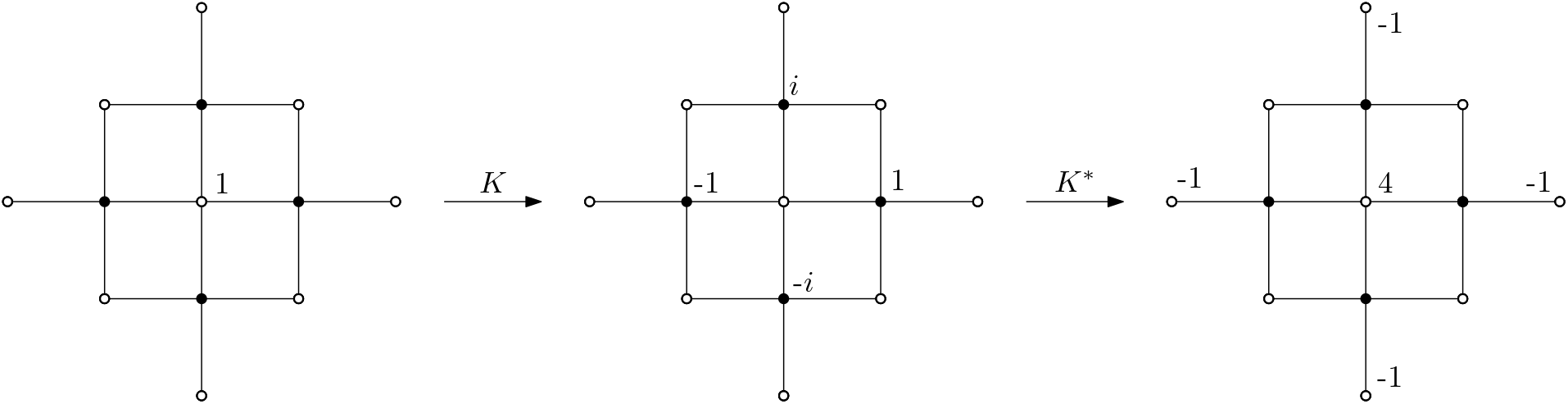}
    \caption{Visual representation of Kasteleyn operating on graph as a Laplacian.}
    \label{fig:kasteleyn}
\end{figure}


We reinterpret Kasteleyn's theorem by introducing explicit recurrence relations for perfect matchings of integer lattices. We prove gluing identities for the determinants of Kasteleyn matrices by introducing explicit formulae for perfect matchings of integer lattices when one of the sides is 2, 3, or 4. We denote an $k$ by $n$ lattice graph as $L_{k,n}$ and the number of ways to tile $L_{k,n}$ as $T_k(n)$.

\subsection{The $2\times n$ case}
\begin{theorem}\label{Theo:T9}
The number of unique ways that dominos can tile $L_{2,n}$ is given by the recursive formula $T_2(n) = T_2(n-1) + T_2(n-2)$, where $T_2(1) = 1$ and $T_2(2) = 2$.
\end{theorem}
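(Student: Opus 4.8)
The plan is to establish the recurrence by conditioning on how the rightmost column (column $n$) of $L_{2,n}$ is covered in an arbitrary domino tiling, after first verifying the two base cases by direct inspection. For the base cases, observe that $L_{2,1}$ is a single $2 \times 1$ column, which admits exactly one tiling, namely the single vertical domino, so $T_2(1) = 1$; and $L_{2,2}$ is a $2 \times 2$ square, which can be covered either by two vertical dominoes or by two horizontal dominoes, so $T_2(2) = 2$.

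For the inductive step, I would fix $n \geq 3$ and focus on the top-right cell of $L_{2,n}$. The key is to argue that there are exactly two possibilities for the domino covering this corner cell, and that each choice forces the covering of an entire block of cells along the right edge. First, if the top-right cell is covered by a vertical domino, then that domino occupies the whole of column $n$; the remaining uncovered region is precisely $L_{2,n-1}$, and every tiling of $L_{2,n-1}$ extends uniquely in this way, contributing $T_2(n-1)$ tilings. Second, if the top-right cell is covered by a horizontal domino, then this domino occupies the top cells of columns $n-1$ and $n$; the bottom-right cell can then no longer be covered vertically, so it too must be covered by a horizontal domino occupying the bottom cells of columns $n-1$ and $n$. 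The remaining region is exactly $L_{2,n-2}$, contributing $T_2(n-2)$ tilings.

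These two cases are mutually exclusive, since they are distinguished by the orientation of the domino on the top-right cell, and they are exhaustive, since any cell must be covered by either a horizontal or a vertical domino. Summing the counts from the two cases yields $T_2(n) = T_2(n-1) + T_2(n-2)$, which completes the induction.

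The argument is elementary, and the only point requiring genuine care is the claim that each choice for the top-right cell propagates to a forced covering of its neighbors: that a horizontal domino on the top row compels a matching horizontal domino on the bottom row across the last two columns, and symmetrically that a vertical choice fills the final column entirely. I expect this bookkeeping, rather than any deep combinatorial difficulty, to be the crux — one must make precise that no domino from further left can intrude into column $n$ once the corner choice is fixed, so that the leftover region is genuinely an isomorphic copy of a smaller $2 \times m$ lattice and the counts $T_2(n-1)$ and $T_2(n-2)$ apply verbatim.
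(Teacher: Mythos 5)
Your proof is correct and follows essentially the same approach as the paper's: both condition on the orientation of the domino covering an end column (you use the rightmost column, the paper the leftmost, a trivial mirror symmetry), splitting into a vertical case leaving $L_{2,n-1}$ and a forced pair of horizontal dominoes leaving $L_{2,n-2}$. Your version is slightly more careful in spelling out that the horizontal choice on the top cell forces the matching horizontal domino below, but the argument is the same.
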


	    \begin{proof}
    	For clarity of notation, note that $L_{2,n}$ has 2 rows of vertices and $n$ columns of vertices (so here a vertical domino fills one column of the graph). \\ 
    	We can see that when $n = 1$, $T_2(n) = 1$, as it can only be filled by placing the domino vertically, and when $n = 2$, $T_2(n) = 2$, as it can be filled by placing both tiles vertically or both tiles horizontally \\
    	We will now prove that for all $n > 2$, $T_2(n) = T_2(n-1) + T_2(n-2)$. \\
    	Let $n > 2$ be arbitrary. We can split the possible ways of filling our graph into 2 cases, depending on the orientation of the tile(s) in the first column \\ 
	    	Case 1: Let the first column be filled by a vertically placed domino. Then we have a $L_{2,n-1}$ graph left to fill, so from the definition of our function we have that there are $T_2(n-1)$ unique ways to tile the graph. \\
	    	Case 2: Let the first 2 columns be filled by 2 horizontally placed dominos. Then we have a $L_{2,n-2}$ graph left to fill, so there will be $T_2(n-2)$ unique ways to tile the graph. \\
		Now because each possible way to fill the graph is given by one of these two cases, $T_2(n)$ will be the sum of the number of ways to fill the graph in each of these cases, so we have that $T_2(n) = T_2(n-1) + T_2(n-2)$.
    	\end{proof}
	\begin{figure}[h]
    \centering
    \includegraphics[scale=0.15]{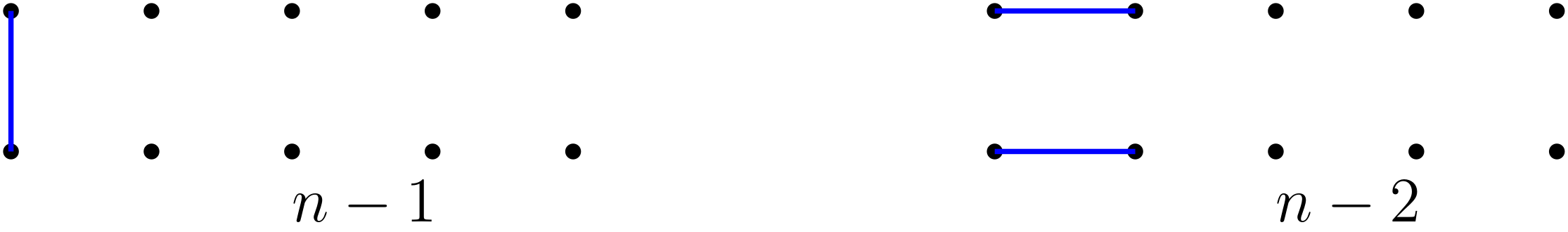} 
    \caption{Cases for end of a $2 \times n$ graph.}
    \end{figure}
    
    We can also find the formula for the number of perfect matchings for two such graphs glued together. In order to talk about this in more general terms, we will introduce notation for such a formula, given by $T_k(m \sqcup_{s,B} n)$, where we are gluing $L_{k,m}$ and $L_{k,n}$ graphs such that their sides of length $k$ are aligned, with $s$ indicating by how many vertices the right graph has been shifted down relative to the left, as seen in \ref{fig:G3,1}, and $B$ being the set of bridges from the gluing included in the perfect matching. $s$ and $B$ aren't extremely impactful in the $2 \times n$ case, but make a big difference with bigger graphs. 
    \begin{theorem}\label{Theo:G2,2}
The number of perfect matchings of $L_{2,m}$ and $L_{2,n}$ glued together with a shift is given by 
\[
T_{2}(m \sqcup_{s,B} n) = \begin{cases}
    T_2(m)T_2(n) & s=1 \text{ or } 2, B= \emptyset\\
    0 & s=0 \text{ or } 1, B= \{e_1\} \\
    T_2(m)T_2(n) + T_2(m-1)T_2(n-1) & s=0, B= \emptyset \text{ or } \{e_1, e_2\}. 
\end{cases}\\
\]
\end{theorem}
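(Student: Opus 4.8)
The plan is to classify the perfect matchings of the glued graph according to which bridge edges they use, and to show that a parity constraint forces the number of used bridges to be even. First I would pin down the geometry: after aligning the two length-$2$ sides and shifting the right copy $L_{2,n}$ down by $s$ vertices, the rightmost column of $L_{2,m}$ and the leftmost column of $L_{2,n}$ share exactly $2-s$ horizontal levels, so there are $2-s$ available bridges. This gives two bridges $e_1, e_2$ when $s=0$, a single bridge $e_1$ when $s=1$, and none when $s=2$, which already explains why only the bridge sets listed in the three cases can occur.

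The key step is a parity observation. Fix a perfect matching $M$ of the glued graph and let $B' \subseteq B$ be the set of bridges it actually uses. Each bridge has exactly one endpoint in $L_{2,m}$, so the bridges in $B'$ consume exactly $|B'|$ vertices of $L_{2,m}$, leaving $2m - |B'|$ vertices that must be matched internally by edges of $L_{2,m}$. Since a graph can admit a perfect matching only when it has an even number of vertices, $2m - |B'|$ must be even, forcing $|B'|$ to be even; the identical argument applied to $L_{2,n}$ confirms this. This single fact disposes of the middle case: when $B=\{e_1\}$ (for $s=0$ or $s=1$) any matching would have to use an odd number of bridges, which is impossible, so $T_2(m \sqcup_{s,B} n)=0$.

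For the remaining even cases I would count directly by reduction to smaller lattices. When no bridge is used, $M$ restricts to a perfect matching of $L_{2,m}$ paired with one of $L_{2,n}$, and conversely any such pair reassembles into a bridge-free matching; this contributes the factor $T_2(m)T_2(n)$. When $s=0$ and both bridges $e_1, e_2$ are used, they consume the entire rightmost column of $L_{2,m}$ and the entire leftmost column of $L_{2,n}$, so deleting those columns leaves the disjoint union of $L_{2,m-1}$ and $L_{2,n-1}$, whose matchings are counted by $T_2(m-1)T_2(n-1)$. Summing the two admissible contributions for $s=0$ yields $T_2(m)T_2(n)+T_2(m-1)T_2(n-1)$, whereas for $s=1,2$ only the bridge-free term survives and the count is $T_2(m)T_2(n)$.

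The only genuinely delicate point, and the part I expect to require the most care, is the geometric bookkeeping of the shift: verifying that in the $s=0$ configuration each of $e_1$ and $e_2$ meets the last column of the left graph and the first column of the right graph in a single vertex, so that using both bridges removes exactly one full column from each side. Once that identification is nailed down, the parity lemma together with the two column-deletion bijections finishes the proof with no further computation.
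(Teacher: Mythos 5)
Your proposal is correct and follows essentially the same route as the paper's proof: a parity argument showing any matching must use an even number of bridges (killing the $B=\{e_1\}$ cases), plus the two reductions — bridge-free matchings giving $T_2(m)T_2(n)$ and the both-bridges case deleting one column from each side to give $T_2(m-1)T_2(n-1)$. Your version merely states the parity step and the $2-s$ bridge count more explicitly than the paper does, which is a modest gain in rigor but not a different argument.
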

    \begin{proof}
    We begin by noting that in the case of a shift 1 or 2 gluing, it is only possible to tile the graph if we treat them separately, giving rise to the first equation. In general, for $T_k(m \sqcup_{s,B} n)$ where $k$ is even and $|B|$ is odd, we find that there are 0 ways to tile because that would leave an odd number of spaces to tile on either side of the gluing. \\
    For the third case, letting $m$ and $n$ be arbitrary, we want to find how many ways we can tile the $2 \times (m+n)$ graph formed by gluing  $2 \times m$ and $2 \times n$ graphs together. We can split this into two sub-cases: that where we include both bridges in the matching and that where neither are included. \\
    Case 1: Let there be no bridges between the graphs. Then for each possible tiling of the $2 \times m$ graph, there are $T_2(n)$ ways to finish the tiling. Thus, in this case there are $T_2(m)T_2(n)$ solutions. \\
    Case 2: Let there be bridges between the graphs. Then we have a $2 \times (m-1)$ and a $2 \times (n-1)$ graph left to fill, so there will be $T_2(m-1)T_2(n-1)$ solutions. This completes the proof.
    \end{proof}
    
\subsection{The $3\times n$ case}

\begin{theorem}\label{Theo:T9}
The number of unique ways that dominos can tile $L_{3,n}$ is given by the recursive formula $T_3(n) = 4T_3(n-2) - T_3(n-4)$, where $T_3(0) = 1$, $T_3(1) = 0$, $T_3(2) = 3$, and $T_3(3) = 0$. 
\end{theorem}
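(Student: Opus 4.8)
The plan is to establish the recurrence $T_3(n) = 4T_3(n-2) - T_3(n-4)$ by the same column-by-column (transfer-matrix) analysis used for the $2 \times n$ case, but accounting for the fact that $L_{3,n}$ has an odd number of rows. First I would observe the parity obstruction: $L_{3,n}$ has $3n$ vertices, so it admits a perfect matching only when $n$ is even; this explains the initial data $T_3(1) = T_3(3) = 0$ and means every tiling is determined by what happens across pairs of columns. I would set up the base cases directly: $T_3(0) = 1$ (empty board, one matching), $T_3(2) = 3$ (the $3 \times 2$ rectangle has exactly three domino tilings), and verify $T_3(1) = 0$, $T_3(3) = 0$ from the parity remark.

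For the recursive step, the natural approach is to classify tilings of $L_{3,n}$ by how the tiles meet the boundary between the first two columns and the rest of the graph. I would enumerate the possible \emph{fault-free} configurations that fill the leftmost $3 \times 2$ block in a way that does not decompose into a smaller rectangle, versus those that do. Concretely, a tiling either splits cleanly after the first two columns (contributing a factor of $T_3(2) = 3$ times $T_3(n-2)$), or the tiles interlock across the $2$--$3$ column boundary in a way that propagates. Tracking these interlocking patterns via a small transfer matrix on the three possible "interface profiles" of a vertical cut (which boundary cells are already covered by a horizontal domino protruding from the left) yields a linear recurrence whose characteristic relation I expect to reduce to $T_3(n) = 4T_3(n-2) - T_3(n-4)$.

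The cleanest way to organize this is to introduce the auxiliary count $U(n)$ of tilings of $L_{3,n}$ with one corner cell removed (or equivalently, tilings where a single horizontal domino juts out past the right edge), set up the coupled system relating $T_3(n)$ and $U(n)$ across a two-column step, and then eliminate $U$ to obtain a recurrence purely in $T_3$. Substituting the one-step relations into each other and simplifying should collapse the coupled pair into the claimed second-order-in-$n/2$ relation $T_3(n) - 4T_3(n-2) + T_3(n-4) = 0$.

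The main obstacle will be the bookkeeping in the transfer-matrix step: because the middle row of a three-row strip can be covered by dominoes reaching into either the row above or below, the interface between consecutive column-pairs has several admissible states, and one must check that no tiling is double-counted or omitted when the protruding horizontal dominoes are allowed to cascade over arbitrarily many columns. I would handle this by carefully listing the finitely many interface states, writing the transfer matrix explicitly, and confirming that its action reproduces the coefficients $4$ and $-1$; verifying the recurrence against the computed base values $T_3(0), T_3(2)$ and the next term $T_3(4) = 11$ provides a concrete check that the elimination was done correctly.
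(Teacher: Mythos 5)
Your proposal is correct, but it takes a genuinely different route from the paper's proof. After the shared parity argument and base cases, the paper conditions on the position of the first vertical fault line: a tiling either splits cleanly after the first two columns, contributing $3T_3(n-2)$, or its first irreducible block spans $2m$ columns for some $m \ge 2$, and there are exactly $2$ such blocks for each $m$; this yields the full-history recurrence $T_3(n) = 3T_3(n-2) + 2\bigl(T_3(n-4) + T_3(n-6) + \cdots + T_3(0)\bigr)$, which the paper then telescopes (by subtracting the analogous expansion of $T_3(n-2)$) into $T_3(n) = 4T_3(n-2) - T_3(n-4)$. You begin with the same fault-line observation but then commit to the defect/transfer-matrix route: introduce $U(n)$, the number of tilings of $L_{3,n}$ minus a corner cell, derive the coupled relations $T_3(n) = T_3(n-2) + 2U(n-1)$ and $U(n) = T_3(n-1) + U(n-2)$, and eliminate $U$; this does collapse to the claimed recurrence (one minor correction: the natural coupled relations mix one- and two-column steps, rather than proceeding purely ``across a two-column step'' as you wrote, and the relevant interface states reduce by symmetry to just the full column and the corner-deleted column). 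Each approach buys something. Yours is local and systematic: no unbounded sum appears, the same machinery extends to $L_{4,n}$ and wider strips, and your auxiliary $U(n)$ is exactly the ``lattice missing one corner vertex'' count that the paper needs later in Theorem \ref{Theo:T10}, where it is evaluated as $\frac{T_3(n+1)-T_3(n-1)}{2}$ --- precisely what your elimination identity $U(n-1) = \frac{T_3(n)-T_3(n-2)}{2}$ produces. The paper's approach, in turn, keeps everything expressed in $T_3$ alone, and its intermediate full-history sum is not wasted: it reappears as Proposition \ref{Prop:T3sum} and is reused in the proofs of the gluing formulae.
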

\begin{proof}
If n is odd, then the total number of spaces is odd, so it is impossible to cover the graph using domino tiles, each of which covers an even number of spaces. In order to have a nonzero number of ways to tile the graph, n must be even, so $n=2k$ for some $k\in \ZZ$. Looking at one end of the graph, and considering a vertical line that separates the first two columns from the rest, any tiling will either have no dominoes that cross the line or it will have some dominoes that cross the line. In the first case, there are $T_3(2)T_3(n-2)=3T_3(n-2)$ ways. In the other case, there are some domino tiles that cross the vertical line separating the second and third columns. Now consider a vertical line between the fourth and fifth columns. There are only two ways to tile $L_{3,4}$ with at least one horizontal domino on the second and third columns, adding $2T_3(n-4)$ ways to tile the whole graph. Similarly, pushing the vertical line further down the graph $2m$ spaces to account for all the possible tilings only adds $2T_3(n-2m)$ to the total, so the number of ways to tile $L_{3,n}$ is equal to $3T_3(n-2)+2(T_3(n-4)+T_3(n-6)+...+T_3(0))$. Since $2(T_3(n-4)+T_3(n-6)+...+T_3(0))=T_3(n-2)-T_3(n-4)$, we can simplify $T_3(n)$ to the recurrence relation $T_3(n)=4T_3(n-2)-T_3(n-4)$. \\
\end{proof}

\begin{figure}[h]
    \centering
    \includegraphics[scale = 0.4]{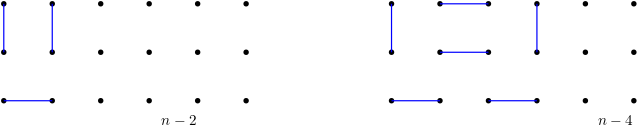}
    \caption{First 2 cases for tiling from the end of the graph.}
    \label{fig:3nrecursivecases}
\end{figure}
\begin{proposition}\label{Prop:T3sum} 
\[\sum_{i=0}^{\frac{n-2}{2}}T_3(2i)=\frac{T_3(n)-T_3(n-2)}{2}.\]
\end{proposition}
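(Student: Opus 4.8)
The plan is to prove the identity by induction, after first collapsing the recurrence to a single index. Since $T_3(n)=0$ for odd $n$ (established in the preceding theorem), the upper bound $\frac{n-2}{2}$ is only an integer when $n$ is even, so I write $n=2M$ and set $a_k=T_3(2k)$. The recurrence $T_3(n)=4T_3(n-2)-T_3(n-4)$ then becomes $a_k=4a_{k-1}-a_{k-2}$ with initial data $a_0=T_3(0)=1$ and $a_1=T_3(2)=3$. Under this substitution the claim is exactly
\[
\sum_{i=0}^{M-1} a_i = \frac{a_M-a_{M-1}}{2}.
\]

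For the base case $M=1$ I would check that $\sum_{i=0}^{0}a_i=a_0=1$ matches $\frac{a_1-a_0}{2}=\frac{3-1}{2}=1$. For the inductive step, assuming the identity for $M$, I add $a_M$ to both sides of the inductive hypothesis to obtain
\[
\sum_{i=0}^{M} a_i = \frac{a_M-a_{M-1}}{2}+a_M = \frac{3a_M-a_{M-1}}{2}.
\]
It then remains to observe that the recurrence $a_{M+1}=4a_M-a_{M-1}$ rearranges to $3a_M-a_{M-1}=a_{M+1}-a_M$, so the right-hand side equals $\frac{a_{M+1}-a_M}{2}$, which is precisely the claim for $M+1$.

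The only real subtlety is index bookkeeping: keeping the off-by-one in the summation bound straight and recognizing that the inductive step is nothing more than the defining recurrence rewritten, so there is no genuine obstacle. As a telescoping alternative that avoids induction, I would note that the recurrence gives $a_{i+1}-2a_i+a_{i-1}=2a_i$, hence $\frac{a_{i+1}-a_i}{2}-\frac{a_i-a_{i-1}}{2}=a_i$ for each $i\geq 1$; summing this from $i=1$ to $M-1$ telescopes to $\frac{a_M-a_{M-1}}{2}-\frac{a_1-a_0}{2}=\sum_{i=1}^{M-1}a_i$, and since $\frac{a_1-a_0}{2}=1=a_0$ this rearranges immediately to the desired formula. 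Either route is short; I would present the induction to match the style of the surrounding tiling arguments.
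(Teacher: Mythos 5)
Your proof is correct, but it runs in the opposite direction from the paper's. The paper does not induct on the recurrence $T_3(n)=4T_3(n-2)-T_3(n-4)$ at all: it recalls the \emph{pre-simplified} identity obtained combinatorially inside the proof of Theorem \ref{Theo:T9} (the $3\times n$ recursion), namely
\[
T_3(n)=3T_3(n-2)+2\sum_{i=0}^{\frac{n-4}{2}}T_3(2i),
\]
then subtracts $T_3(n-2)$ from both sides and divides by $2$; the extra $T_3(n-2)$ on the right is absorbed as the top term $i=\tfrac{n-2}{2}$ of the sum. That is a one-line algebraic manipulation with no induction. You instead take the theorem's \emph{statement} (the four-term recurrence plus initial values) as the input and recover the summation identity by induction, or equivalently by your telescoping variant. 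Both of your arguments are sound, including the index bookkeeping and the fact that the recurrence is only invoked for indices at least $4$. What your route buys is independence from the internals of the earlier proof: you never need the unpublished intermediate identity, only the stated recurrence.

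The one thing to be careful about is circularity \emph{within this paper}: the paper's proof of the recurrence $T_3(n)=4T_3(n-2)-T_3(n-4)$ itself asserts, without justification at that point, the identity $2\left(T_3(n-4)+T_3(n-6)+\cdots+T_3(0)\right)=T_3(n-2)-T_3(n-4)$, which is exactly this proposition with $n$ replaced by $n-2$. So in the paper's logical ordering the clean recurrence sits downstream of the proposition, and deriving the proposition from the recurrence closes a circle. Your proof stands on its own because the recurrence can be established independently of the proposition, but if your argument were spliced into this paper as-is, one would also need to repair the proof of Theorem \ref{Theo:T9} so that it does not presuppose the proposition — which is presumably why the authors prove the proposition from the intermediate identity instead.
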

\begin{proof}
Recall from the proof of \ref{Theo:T9} that $T_3(n)=3T_3(n-2)+2\sum_{i=0}^{\frac{n-4}{2}}T_3(2i)$.  The result follows from subtracting $T_3(n-2)$ from both sides and dividing by 2.
\end{proof}
We can also find information about gluing these graphs together, first considering the case where the lattice graphs are shifted such that there is only one bridge between the two, such as in Figure \ref{fig:G3,1}. 
\begin{figure}[h]
    \centering
    \includegraphics[scale = 0.5]{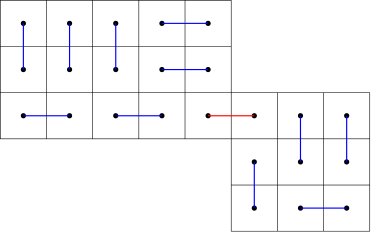}
    \caption{A perfect matching of $L_{3,5} \sqcup_{2,\{e_1\}} L_{3,3}$.}
    \label{fig:G3,1}
\end{figure}
\begin{theorem}\label{Theo:T10}
The number of perfect matchings of $L_{3,m}$ and $L_{3,n}$ glued together with a shift is given by 
\[
T_{3}(m \sqcup_{s,B} n) = \begin{cases}
    T_3(m)T_3(n) & B= \emptyset\\
    \frac{(T_3(m+1)-T_3(m-1))(T_3(n+1)-T_3(n-1))}{4} & s=2, B= \{e_1\}\\
    0 & s=1, B= \{e_1\} \text{ or } \{e_2\}\\
    \frac{(T_3(m) - T_3(m-2))(T_3(n)-T_3(n-2))}{4} & s=1, B= \{e_1,e_2\}\\
    \frac{(T_3(m+1)-T_3(m-1))(T_3(n+1)-T_3(n-1))}{4} & s=0, B= \{e_1\} \text{ or } \{e_3\}\\
    0 & s=0, B= \{e_2\} \text{ or } \{e_1, e_3\}\\
    \frac{(T_3(m) - T_3(m-2))(T_3(n)-T_3(n-2))}{4} & s=0, B= \{e_1,e_2\} \text{ or } \{e_2, e_3\}\\
    T_3(m-1)T_3(n-1) & s=0, B= \{e_1,e_2,e_3\}.\\
\end{cases}\\
\]
\end{theorem}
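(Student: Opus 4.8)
The plan is to reduce every case to counting perfect matchings of a single \emph{deficient} lattice $L_{3,k}$ --- meaning $L_{3,k}$ with a prescribed subset of the vertices in one extreme column deleted --- and then to assemble these counts. The key structural observation is that in the glued graph $L_{3,m}\sqcup_{s,B}L_{3,n}$ the only edges joining the left copy to the right copy are the bridges (Definition \ref{Def:bridge_gluing}). Hence, once we fix which bridges lie in the matching (this is exactly the data $B$), those edges are determined and every other vertex must be matched \emph{within} its own copy. Deleting the bridge endpoints from each side therefore decouples the problem: the number of perfect matchings extending the chosen bridges is the \emph{product} of the number of matchings of the left copy with its bridge endpoints removed and of the right copy with its bridge endpoints removed. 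This is precisely why every table entry is a product of an $m$-factor and an $n$-factor.

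First I would record the shift geometry. With shift $s$ the two boundary columns overlap in exactly $3-s$ rows, so $e_1$ is available for all $s\in\{0,1,2\}$, $e_2$ only for $s\in\{0,1\}$, and $e_3$ only for $s=0$; this matches the admissible $(s,B)$ listed. Tracking the alignment shows exactly which boundary vertex (top, middle, or bottom) each bridge deletes on each side. The source of all the zero entries is a pair of forcing lemmas for the $3\times k$ strip: (i) if the \emph{middle} cell of the last column is deleted, the top and bottom are each forced horizontally, and then the middle of the next column is forced out as well, reducing to the same problem on $L_{3,k-2}$ and, by descent to the $k=1$ base case, admitting \emph{no} matching; and (ii) if the \emph{top and bottom} of the last column are deleted, the middle is forced horizontally, again reducing to a middle-deleted strip and giving zero. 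These kill $B=\{e_2\}$ (a side loses its middle), $B=\{e_1,e_3\}$ (a side loses top and bottom), and, at $s=1$, the single-bridge cases $\{e_1\},\{e_2\}$, where the alignment forces a middle deletion on one side.

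For the nonzero entries I would compute just three deficient counts. Let $P_k$ denote the number of matchings of $L_{3,k}$ with a single \emph{corner} cell deleted (top and bottom give equal counts by the vertical flip). Examining the last column gives two possibilities --- a vertical domino on the remaining pair, or two horizontals --- yielding $P_k = T_3(k-1) + P_{k-2}$; combined with the $T_3$ recurrence this telescopes, via Proposition \ref{Prop:T3sum}, to $P_k=\tfrac{1}{2}\bigl(T_3(k+1)-T_3(k-1)\bigr)$. The same forcing shows that deleting two \emph{adjacent} boundary cells (top--middle or middle--bottom) forces one horizontal domino and reduces to a single corner deletion, so that count is $P_{k-1}=\tfrac{1}{2}\bigl(T_3(k)-T_3(k-2)\bigr)$; and deleting the \emph{entire} boundary column leaves $L_{3,k-1}$, contributing $T_3(k-1)$. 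Substituting these three values (and the zeros above) into the product from the factorization, and using the horizontal flip to pass between the left and right copies, reproduces every line of the table.

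The main obstacle I anticipate is not any single computation but the bookkeeping of the shift geometry: for each listed $(s,B)$ one must verify carefully which boundary vertices are removed on the left and on the right, and confirm that these match the top/middle/bottom pattern feeding the deficient counts. Once the forcing lemmas (i)--(ii) and the recurrence for $P_k$ are established, the remaining work is the routine check that the product of the two appropriate deficient counts equals the stated expression in each of the eight cases.
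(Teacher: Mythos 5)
Your proposal is correct and follows essentially the same route as the paper's proof: fixing the bridge set decouples the count into a product of deficient-strip counts, the corner-deleted count (your $P_k = T_3(k-1)+P_{k-2}$, the paper's explicit sum $T_3(k-1)+T_3(k-3)+\cdots+T_3(0)$) telescopes via Proposition \ref{Prop:T3sum} to $\tfrac{1}{2}\bigl(T_3(k+1)-T_3(k-1)\bigr)$, and the zero cases follow from the same horizontal-domino forcing argument the paper uses. The only difference is organizational---you isolate the forcing steps as standalone lemmas and state the decoupling observation explicitly---which is a cleaner write-up of the identical idea.
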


\begin{proof}
When $B=\emptyset$, there are no bridges between the two lattice graphs, so the number of ways to tile them is the number of ways to tile one of them multiplied by the number of ways to tile the other. That is, $T_3(m)T_3(n)$.\\\\

When $s=2$ and $B=\{e_1\}$ or when $s=0$ and $B=\{e_1\}$ or $\{e_3\}$, then the graphs are positioned like in Figure \ref{fig:G3,1}. If $m$ or $n$ is even, then there will be an odd number of vertices left on either the $3 \times m$ or $3 \times n$ graph which is impossible to tile, so suppose both $m$ and $n$ are odd. The method of solving will be the same for each graph in the gluing, so we will look at the $3 \times n$ graph. On the first row, there are 2 available vertices adjacent to one another, so we can split this into two cases, as seen in Figure \ref{fig:intermediate_tiling_configurations}: either include the vertical edge between the two, in which case there are $T_3(n-1)$ ways to finish the perfect matching, or we include the horizontal edges connecting them to the adjacent row. In this case we're forced to include the edge next to the bridge going into the top or bottom of the third edge. This means we have a $3 \times (n-2)$ lattice graph missing one vertex, so we will finish tiling it in the same way as we tiled the $3 \times n$ graph missing a vertex. Then we will add a $T_3 (n-3)$ term and continue until we have a $2 \times 1$ graph left, adding $1 = T_3(0)$ to the sum. Thus we get that the number of ways to tile $L_{3,n}$ missing one corner vertex is equal to $T_3(n-1)+T_3(n-3)+...+T_3(0)$. By proposition \ref{Prop:T3sum}, this sum equals $\frac{T_3(n+1)-T_3(n-1)}{2}$. Similarly, the number of ways to tile the left graph is $\frac{T_3(m+1)-T_3(m-1)}{2}$. Multiplying these two values together gives us the total number of tilings for these shift and bridge gluing combinations: $\frac{(T_3(m+1)-T_3(m-1))(T_3(n+1)-T_3(n-1))}{4}$.

\begin{figure}[h]
    \centering
    \includegraphics[scale = 0.6]{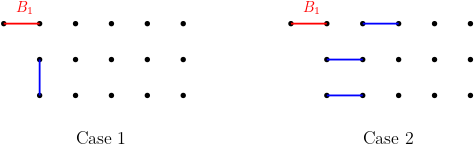}
    \caption{Possible ways to finish a perfect matching including $B_1$.}
    \label{fig:intermediate_tiling_configurations}
\end{figure}

When $s=1$ and $B=\{e_1\}$ or $\{e_2\}$ or when $s=0$ and $B=\{e_2\}$ or $\{e_1, e_3\}$, then at least one of the $3 \times m$ or the $3 \times n$ graph will have either only the middle tile or only the top and bottom tiles of an end column covered, as in Figure \ref{fig:T_3 impossible tiling}, which is impossible to tile. This is because the only way to fill the top and bottom spaces of an end column missing its middle tile is with horizontal tiles, which then leaves the middle tile of the second column uncovered in a way that can only be covered by a horizontal tile. However, once that space has been covered, the graph is in the same state as it was at the beginning, though now two of its columns have been covered. This process repeats itself, forcing the placement of horizontal tiles until there is not enough space for another domino and either the middle vertex or the top and bottom vertices of the end column will always be uncovered, meaning there is no way to tile the whole graph. Therefore, these combinations of shifting and gluing contribute 0 to the total.\\\\

\begin{figure}[h]
    \centering
    \includegraphics[scale = 0.12]{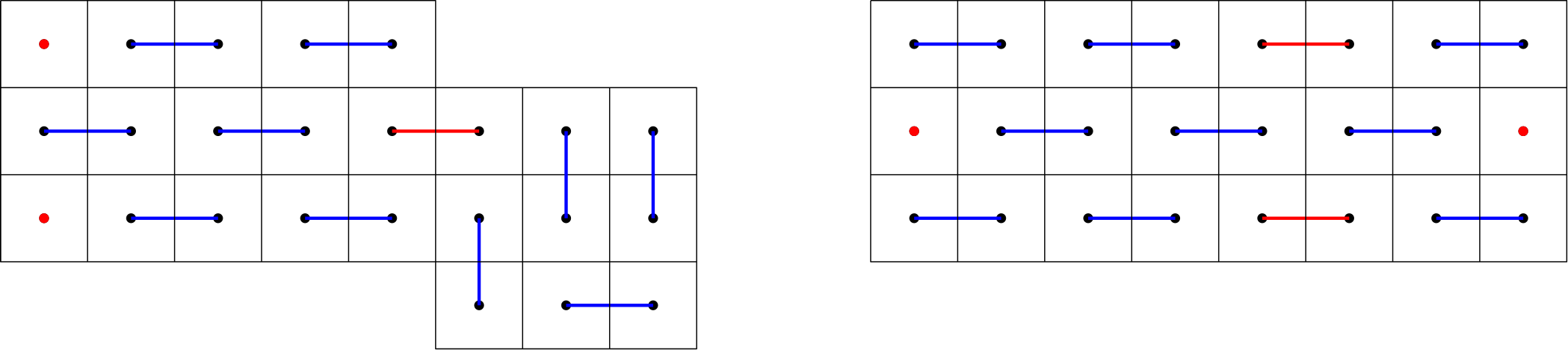}
    \caption{Impossible tiling configurations.}
    \label{fig:T_3 impossible tiling}
\end{figure}

When $s=1$ and $B=\{e_1, e_2\}$ or when $s=0$ and $B=\{e_1, e_2\}$ or $\{e_2, e_3\}$, then the graphs are positioned like in Figure \ref{fig:T_3 possible tiling}. If $m$ or $n$ is odd, then there will be an odd number of spaces left on either the $3 \times m$ or $3 \times n$ graph which is impossible to tile, so suppose both $m$ and $n$ are even. Then for each graph we have an odd number of rows with one extra tile, forcing us to place a domino horizontally next to the bridges. This leaves a $3 \times (m-1)$ graph missing a vertex to fill. In the proof of the $s=2$ and $B=\{e_1\}$ or when $s=0$ and $B=\{e_1\}$ or $\{e_3\}$ gluing cases we showed that the number of ways to finish tiling such a graph is given by $\frac{T_3(m)-T_3(m-2)}{2}$, so in this case the total number of ways to tile the graph is given by multiplying this result for $m$ with the same result for $n$. Adding these formulas together gives us the final result: $\frac{(T_3(m)-T_3(m-2))(T_3(n)-T_3(n-2))}{4}$.\\\\

\begin{figure}[h]
    \centering
    \includegraphics[scale = 0.12]{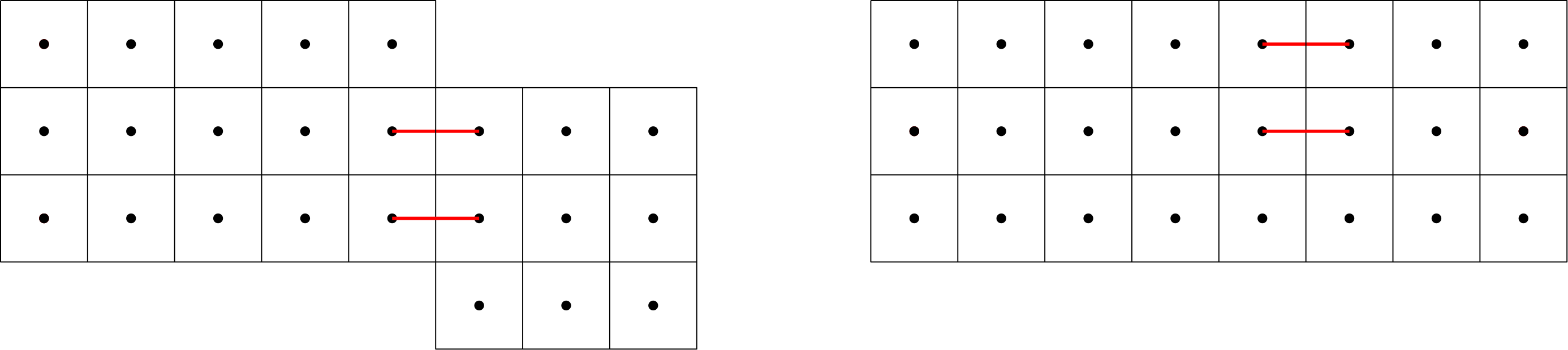}
    \caption{Initial tiling configurations of $L_{3,5} \sqcup_{1,\{e_1,e_2\}} L_{3,3}$ and $L_{3,5} \sqcup_{0,\{e_1,e_2\}} L_{3,3}$.}
    \label{fig:T_3 possible tiling}
\end{figure}

When $s=0$ and $B=\{e_1, e_2, e_3\}$, then the graphs are positioned like in Figure \ref{fig:G(3,3)}. This configuration leaves a $3 \times (m-1)$ and a $3 \times (n-1)$ graph to be tiled, so the number of ways to tile them is the number of ways to tile one of them multiplied by the number of ways to tile the other. That is, $T_3(m-1)T_3(n-1)$.

\begin{figure}[h]
    \centering
    \includegraphics[scale = 0.3]{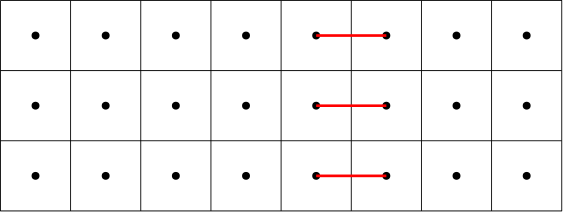}
    \caption{Initial tiling configuration of $L_{3,5} \sqcup_{0,\{e_1,e_2,e_3\}} L_{3,3}$.}
    \label{fig:G(3,3)}
\end{figure}

\end{proof}

In the proof for Theorem \ref{Theo:T10}, the cases for the number of perfect matchings were split based on the shift and bridges that defined the gluing. We can combine a number of these cases to express $T_3(m+n)$ in terms of tilings relating to $m$ and $n$ (See Corollary \ref{Theo:3B3}).

\subsection{The $4\times n$ case}
\begin{theorem}
The number of unique ways to tile $L_{4,n}$ with dominoes is given by the recursive formula $T_4(n)=T_4(n-1)+5T_4(n-2)+T_4(n-3)-T_4(n-4)$, where $T_4(1)=1$, $T_4(2)=5$, $T_4(3)=11$, and $T_4(4)=36$.
\end{theorem}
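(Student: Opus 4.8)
The plan is to use the transfer-matrix (frontier / broken-line) method, processing the grid one column at a time, and to extract the order-$4$ recurrence from the minimal polynomial of the resulting transfer operator. This parallels the end-column case analysis used for $L_{2,n}$ and $L_{3,n}$, but the richer boundary behavior of four rows makes a systematic bookkeeping of frontier profiles the right tool.

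First I would define, for a partial tiling that exactly fills the first $k$ columns except for horizontal dominoes protruding into column $k+1$, its \emph{profile} to be the subset $S \subseteq \{1,2,3,4\}$ of rows in which such a protrusion occurs. There are $2^4 = 16$ profiles. I would then build the transfer matrix $M$, where $M_{S,S'}$ counts the number of ways to fill the cells of a single column that are not pre-occupied by the incoming profile $S$, using vertical dominoes internal to the column together with horizontal dominoes that protrude to the right in exactly the rows of $S'$; such a filling is legal precisely when $S$ and $S'$ are disjoint and the complement $\{1,2,3,4\}\setminus(S\cup S')$ partitions into vertically adjacent pairs. With this setup, $T_4(n) = (M^n)_{\emptyset,\emptyset}$, since the tiling must have empty profile both before column $1$ and after column $n$.

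Second, I would reduce and diagonalize. The horizontal reflection swapping rows $1\leftrightarrow 4$ and $2\leftrightarrow 3$ is a symmetry of $M$, so I would pass to the symmetric subspace to shrink the effective matrix; in any case only the cyclic subspace generated by $e_\emptyset$ under $M$ governs the sequence $(M^n)_{\emptyset,\emptyset}$. Computing the minimal polynomial of $M$ on this cyclic subspace, I expect to obtain
\[
p(x) = x^4 - x^3 - 5x^2 - x + 1,
\]
which is exactly the characteristic polynomial of the claimed recurrence $T_4(n)=T_4(n-1)+5T_4(n-2)+T_4(n-3)-T_4(n-4)$. Since $p(M)=0$ on this subspace, applying $p(M)$ to $e_\emptyset$ and pairing on the left with $e_\emptyset^t$ yields $\sum_j c_j (M^{n+j})_{\emptyset,\emptyset}=0$, i.e. the recurrence holds for all $n\ge 4$. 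Finally I would pin down the initial data by direct enumeration: the single vertical pairing gives $T_4(1)=1$, and a short count (or the $L_{2,n}$-style argument applied twice) gives $T_4(2)=5$, $T_4(3)=11$, and $T_4(4)=36$, which also verify the recurrence at its first instance using the implied value $T_4(0)=1$.

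The main obstacle I anticipate is not a single clever idea but the reliability of the bookkeeping: correctly enumerating the legal single-column fillings for every relevant incoming profile (equivalently, writing down $M$ without adjacency errors) and then confirming that, despite $M$ being $16\times 16$, the sequence genuinely collapses to an order-$4$ recurrence rather than something larger. The key checks are that the extraneous eigenvalues of $M$ either vanish or are not excited by $e_\emptyset$, so that the minimal polynomial on the cyclic subspace has degree exactly $4$; the symmetry reduction together with the explicit agreement of $p(x)$ against the verified values $T_4(1),\dots,T_4(4)$ (and the further checks $T_4(5)=95$, $T_4(6)=281$) is what makes this rigorous. An alternative, fully elementary route in the spirit of the earlier proofs would instead introduce a few auxiliary sequences counting tilings of $4\times n$ rectangles with small boundary notches, derive a coupled linear system by conditioning on the end column, and eliminate the auxiliary variables to reach the same order-$4$ recurrence, at the cost of heavier casework.
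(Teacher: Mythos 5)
Your proposal is correct, but it takes a genuinely different route from the paper. The paper argues entirely combinatorially: it conditions on the five possible initial configurations of dominoes at the left end of $L_{4,n}$, obtains the relation $T_4(n)=T_4(n-1)+T_4(n-2)+2\sum_{i=0}^{n-2}T_4(i)+\sum_{i=1}^{\lfloor n/2\rfloor}T_4(n-2i)$, and then eliminates both sums by subtracting the same relation written for $T_4(n-2)$; the order-$4$ recurrence falls out of that cancellation. Your transfer-matrix argument replaces the end-column case analysis and the elimination trick with a finite linear-algebra computation, and it does go through: only six of the sixteen profiles are reachable from $\emptyset$ (namely $\emptyset$, $\{1,2\}$, $\{3,4\}$, $\{1,4\}$, $\{2,3\}$, $\{1,2,3,4\}$), and computing $v_k=M^k e_\emptyset$ for $k\le 4$ verifies $v_4=v_3+5v_2+v_1-v_0$, which is exactly the identity $p(M)e_\emptyset=0$ you need; applying $M^{n-4}$ and pairing with $e_\emptyset^t$ then gives the recurrence for all $n\ge 4$. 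Two caveats on your write-up: the rigor comes from that exact identity, not from agreement with $T_4(5)=95$ and $T_4(6)=281$ --- matching finitely many terms can never certify the minimal polynomial, so that check should be demoted to a sanity check; and you do not actually need the minimal polynomial to have degree exactly four, since the single relation $M^4e_\emptyset=M^3e_\emptyset+5M^2e_\emptyset+Me_\emptyset-e_\emptyset$ suffices. As for what each approach buys: the paper's proof is elementary, self-contained, and stylistically uniform with its $2\times n$ and $3\times n$ arguments, but hinges on the case enumeration being exhaustive and on an ad hoc subtraction; yours is systematic, mechanically checkable, and generalizes verbatim to $L_{k,n}$ for any $k$, which is directly relevant to the conjecture stated at the end of the paper's appendix.
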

\begin{proof}
The initial values of the recursion can be checked by hand. As for the recursive formula, let $n$ be an arbitrary positive integer greater than 4. All possible tilings of $L_{4,n}$ must begin on the left-hand side with one of these initial configurations:\\\\
In the first case, where the first column is filled with vertical dominoes, there is a $4\times (n-1)$ graph left to tile, contributing $T_4(n-1)$ ways to fill the graph.\\
In the second case, where the first two columns are filled with horizontal dominoes, there is a $4\times (n-2)$ graph left to tile, contributing $T_4(n-2)$ ways to fill the graph.\\
In the third case, the two empty spaces in the second column can either be filled with a vertical tile or two horizontal tiles. Adding a vertical tile leaves a $4 \times (n-2)$ graph to tile and adding two horizontal tiles gives you the same choice of filling the two empty spaces in the third column either by adding a vertical tile or two horizontal tiles. You can continue choosing to add horizontal tiles until you run out of space on the graph, in total adding  $T_4(n-2)+T_4(n-3)+T_4(n-4)+...+T_4(0)=\sum_{i=0}^{n-2}T_4(i)$ ways to tile the graph. The same goes for the fourth case.\\
In the fifth case, the two empty spaces in the second column can either be filled with a vertical tile or two horizontal tiles. Adding a vertical tile leaves a $4 \times (n-2)$ graph to tile and adding two horizontal tiles forces you to place horizontal tiles in the top and bottom rows and then gives you the same choice of filling the two empty spaces in the fourth column either by adding a vertical tile or two horizontal tiles. You can continue choosing to add horizontal tiles until you run out of space on the graph, in total adding  \[T_4(n-2)+T_4(n-4)+T_4(n-6)+...+((T_4(1) \text{ if $n$ is odd) or } (T_4(0) \text{ if $n$ is even}))=\sum_{i=1}^{\lfloor\frac{n}{2}\rfloor} T_4(n-2i)\] ways to tile the graph.\\
So far, we have \[T_4(n)=T_4(n-1)+T_4(n-2)+2\sum_{i=0}^{n-2}T_4(i)+\sum_{i=1}^{\lfloor\frac{n}{2}\rfloor} T_4(n-2i).\] However, we can simplify this formula further:\\
\[T_4(n-2)=T_4(n-3)+T_4(n-4)+2\sum_{i=0}^{n-4}T_4(i)+\sum_{i=1}^{\lfloor\frac{n-2}{2}\rfloor} T_4(n-2-2i),\] so 
\begin{align*}
    T_4(n)-T_4(n-2)&=(T_4(n-1)+T_4(n-2)+2\sum_{i=0}^{n-2}T_4(i)+\sum_{i=1}^{\lfloor\frac{n}{2}\rfloor} T_4(n-2i))\\
    &-(T_4(n-3)+T_4(n-4)+2\sum_{i=0}^{n-4}T_4(i)+\sum_{i=1}^{\lfloor\frac{n-2}{2}\rfloor} T_4(n-2-2i))\\
    &=T_4(n-1)+4T_4(n-2)+T_4(n-3)-T_4(n-4)
\end{align*}
So $T_4(n)=T_4(n-1)+5T_4(n-2)+T_4(n-3)-T_4(n-4)$.
\end{proof}
In order to prove a gluing formula for the $4\times n$ case, it will be helpful to derive formulae for the sums in the proof of the recursion formula, see Propositions \ref{Prop:4consecutivesum} and \ref{Prop:4alternatingsum} in the Appendix.

We can also find information about gluing graphs together, considering different shifts and bridge sets. If the number of bridges is odd, then what it left is impossible to tile because an odd number of vertices will be left on each graph.
\begin{theorem}\label{Theo:4xnm}
The number of perfect matchings of $L_{4,m}$ and $L_{4,n}$ glued together with a shift $T_{4}(m \sqcup_{s,B} n)$ is given by 

\[
    \begin{cases}
    T_4(m)T_4(n) & B= \emptyset\\
    \frac{(T_4(m+1)-T_4(m-2))(T_4(n+1)-T_4(n-2))}{25} & s=2, B= \{e_1,e_2\}\\
    f(m)\frac{(T_4(n+1)-T_4(n-2))}{5} & s=1, B= \{e_1,e_2\}\\
    \frac{(T_4(m+1)-T_4(m-2))}{5}f(n) & s=1, B= \{e_2,e_3\}\\
    \frac{(T_4(m+1)-T_4(m-2))(T_4(n+1)-T_4(n-2))}{25} & s=0, B= \{e_1,e_2\} \text{ or } \{e_3,e_4\}\\
    0 & s=0 \text{ or } 1, B= \{e_1,e_3\} \text{ or } \{e_2,e_4\}\\ 
    f(m+1)f(n+1) & s=0, B= \{e_1,e_4\}\\
    f(m)f(n) & s=0, B= \{e_2,e_3\}\\
    T_4(m-1)T_4(n-1) & s=0, B= \{e_1,e_2,e_3,e_4\}\\

\end{cases}\\
\]
Where $f(n)=-\frac{2}{5}T_4(n)+4T_4(n-2)+\frac{7}{5}T_4(n-3)-T_4(n-4)$.
\end{theorem}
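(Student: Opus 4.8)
The plan is to follow the template of the $3\times n$ proof (Theorem \ref{Theo:T10}), performing a case analysis on the shift $s$ and the bridge set $B$ and reducing, in each case, the count of perfect matchings of the glued graph to a product of tiling counts of $4\times k$ lattices with a few boundary vertices deleted. The closed forms for those defective counts should come from telescoping the recursion $T_4(n)=T_4(n-1)+5T_4(n-2)+T_4(n-3)-T_4(n-4)$ via the summation identities of Propositions \ref{Prop:4consecutivesum} and \ref{Prop:4alternatingsum}, playing the role that Proposition \ref{Prop:T3sum} played in the $3\times n$ case.

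First I would dispatch the trivial and the impossible rows. When $B=\emptyset$ a perfect matching of the union restricts to independent matchings of each piece, giving $T_4(m)T_4(n)$. When the two included bridges attach to boundary vertices separated by an untouched vertex --- the rows $s=0,1$ with $B=\{e_1,e_3\}$ or $\{e_2,e_4\}$ --- I would run the same cascading argument as in the $3\times n$ case: the isolated boundary vertex forces an alternating chain of horizontal dominoes that can never be closed off, so the count is $0$. The odd-$|B|$ observation preceding the theorem already rules out the remaining parity-obstructed configurations.

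The substance of the proof is the genuinely nonzero cases. Here I would show that fixing $s$ and $B$ forces the deletion of a specific pair of boundary vertices from each piece, and that any completion is then determined by reading the graph column-by-column away from the interface, as in the figures for the $3\times n$ case. The key point is that several distinct deletion shapes occur: an adjacent corner pair (rows $1,2$ or $3,4$ of an end column), whose tilings I would count as $\frac{T_4(k+1)-T_4(k-2)}{5}$; a central pair (rows $2,3$), counted by $f(k)$; and the extreme pair (rows $1,4$), counted by $f(k+1)$. Each displayed entry is then a product of two such per-piece factors, with the denominators $5$ and $25$ recording one or two applications of the single corner-pair formula.

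The main obstacle will be verifying these defective-grid counts and pinning down the exact coefficients of $f$. Unlike the $3\times n$ grid, where a deleted corner forced an essentially unique cascade, a $4\times k$ grid with a deleted boundary vertex branches (the adjacent column can be closed by a vertical domino or opened by paired horizontals), so the count collapses to a closed form only after applying the alternating-sum identity of the Appendix, and the constants $-\tfrac{2}{5}, 4, \tfrac{7}{5}, -1$ in $f$ emerge from reconciling that identity with the raw recursion. I would nail this down by checking $f(k)$ and the corner-pair formula against small explicit values before asserting the general reductions, and by confirming that the forced configurations in each mixed-bridge case --- especially the $\{e_1,e_4\}$ versus $\{e_2,e_3\}$ distinction, which produces the argument shift $f(k+1)$ versus $f(k)$ --- match the deletion shapes those formulas count.
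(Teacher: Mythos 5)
Your proposal follows essentially the same route as the paper's proof: the same case analysis on $(s,B)$, the same reduction to counts of $4\times k$ lattices with deleted boundary vertices (adjacent corner pair, middle pair, extreme pair), and the same use of Propositions \ref{Prop:4consecutivesum} and \ref{Prop:4alternatingsum} to collapse the branching column-by-column sums into the stated closed forms. The only detail you leave implicit is the full-bridge case $B=\{e_1,e_2,e_3,e_4\}$, which deletes an entire end column from each piece and immediately gives $T_4(m-1)T_4(n-1)$.
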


\begin{proof}
When $B=\emptyset$, there are no bridges between the two lattice graphs, so the number of ways to tile them is the number of ways to tile one of them multiplied by the number of ways to tile the other. That is, $T_4(m)T_4(n)$.\\\\

When $s=2$ and $B=\{e_1, e_2\}$ or when $s=0$ and $B=\{e_1, e_2\}$ or $\{e_3, e_4\}$, then the graphs are positioned like in Figure \ref{fig:s=2,0}. The method of solving will be the same for each graph in the gluing, so we will look at the $4 \times n$ graph. On the first row, there are 2 available vertices adjacent to one another, so we can split this into two cases as shown in \ref{fig:B2}: either include the vertical edge between the two, in which case there are $T_4(n-1)$ ways to finish the perfect matching, or we include the horizontal edges connecting them to the adjacent row. In this case we again have the choice between including a vertical edge, leaving us with a $4 \times (n-2)$ graph to tile, adding $T_4(n-2)$ ways to finish the perfect matching, or including two horizontal edges, again allowing for either a vertical tile or two horizontal tiles. This will continue until we have a $2 \times 1$ graph left, adding $1 = T_4(0)$ to the sum. Thus we get that the number of ways to tile $L_{4,n}$ missing one corner vertex and one vertex above or beneath it is equal to $T_4(n-1)+T_4(n-2)+...+T_4(0)$. By proposition \ref{Prop:4consecutivesum}, this sum equals $\frac{T_4(n+1)-T_4(n-2)}{5}$. Similarly, the number of ways to tile the left graph is $\frac{T_4(m+1)-T_4(m-2)}{5}$. Multiplying these two values together gives us the total number of tilings for these shift and bridge gluing combinations: $\frac{(T_4(m+1)-T_4(m-2))(T_4(n+1)-T_4(n-2))}{25}$.

\begin{figure}[h]
    \centering
    \includegraphics[scale = 0.15]{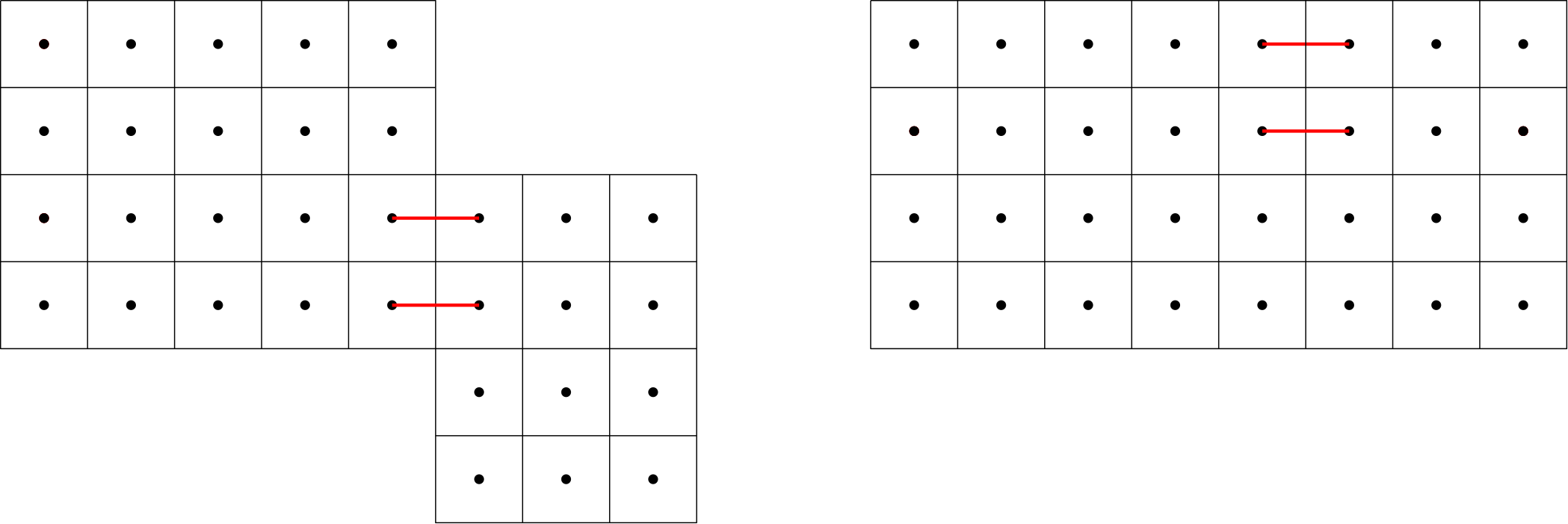}
    \caption{Initial tiling configurations of $L_{4,5} \sqcup_{2,\{e_1,e_2\}} L_{4,3}$ and $L_{4,5} \sqcup_{0,\{e_1,e_2\}} L_{4,3}$.}
    \label{fig:s=2,0}
\end{figure}

\begin{figure}[h]
    \centering
    \includegraphics[scale = 0.15]{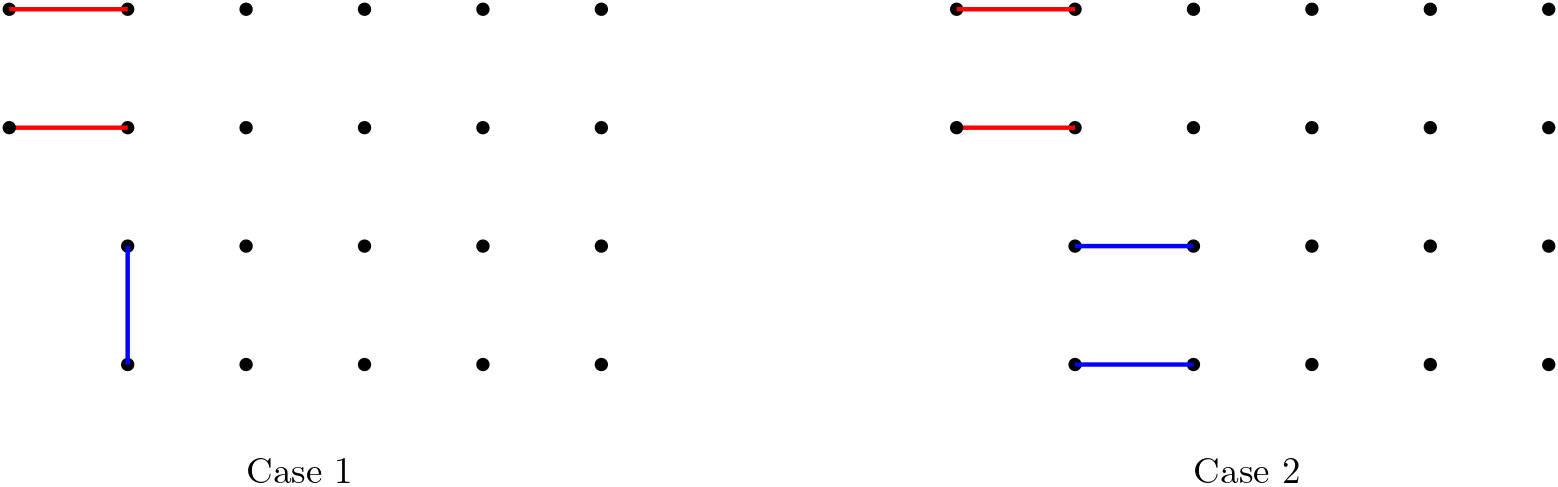}
    \caption{Intermediate tiling configurations of $L_{4,5}$ glued with 2 bridges.}
    \label{fig:B2}
\end{figure}

When $s=1$ and $B=\{e_1, e_2\}$ or $\{e_2, e_3\}$, then the graphs are positioned like in Figure \ref{fig:T_4,s=1}. This means that one of the graphs will be missing one corner vertex and another vertex above or beneath the corner and the other graph will be missing the middle two vertices from one of the end columns. We already have a formula from the previous paragraph for the number of perfect matchings of the first graph, so let us focus on the second. Since the middle two vertices of the end column are covered, we are forced to place two horizontal tiles on the corner spaces. To fill the second-to-last column, we can either use a vertical tile or two horizontal tiles, as shown in Figure \ref{fig:T_4,s=1}. In the case of a vertical tile, we are left with a $4 \times (m-2)$ graph, contributing $T_4(m-2)$ perfect matchings to the total. In the case of two horizontal tiles, we are again forced to place horizontal tiles in the top and bottom rows, now leaving the middle two vertices of the fourth column uncovered, and thus the pattern repeats, adding $T_4(m-2k)$ (where $1\leq k \leq \lfloor\frac{m}{2}\rfloor$) tilings in each iteration until there is either the middle two vertices of the last column or the entire last column and the middle two vertices of the second-to-last column left uncovered, depending on if $m$ is even or odd. Therefore, in total there are
\begin{eqnarray*}
&&T_4(m-2)+T_4(m-4)+T_4(m-6)+...+(T_4(1) \text{ if m is odd) or } (T_4(0) \text{ if m is even})\\
&=&\sum_{i=1}^{\lfloor\frac{m}{2}\rfloor} T_4(m-2i)=-\frac{2}{5}T_4(m)+4T_4(m-2)+\frac{7}{5}T_4(m-3)-T_4(m-4)\\
\end{eqnarray*}

ways to tile $L_{4,m}$ with the middle two vertices missing from the last column. From now on, we will abbreviate this quantity as $f(m)$. By multiplying this value by the number of ways to tile $L_{4,n}$ missing a corner vertex and a vertex directly below it, we see that there are \[f(m)\frac{(T_4(n+1)-T_4(n-2))}{5}\] ways to tile two lattice graphs shifted by 1 with a bridge set equal to $\{e_1, e_2\}$ and \[\frac{(T_4(m+1)-T_4(m-2))}{5}f(n)\] ways to tile two lattice graphs shifted by 1 with a bridge set equal to $\{e_2, e_3\}$.

\begin{figure}[h]
    \centering
    \includegraphics[scale = 0.15]{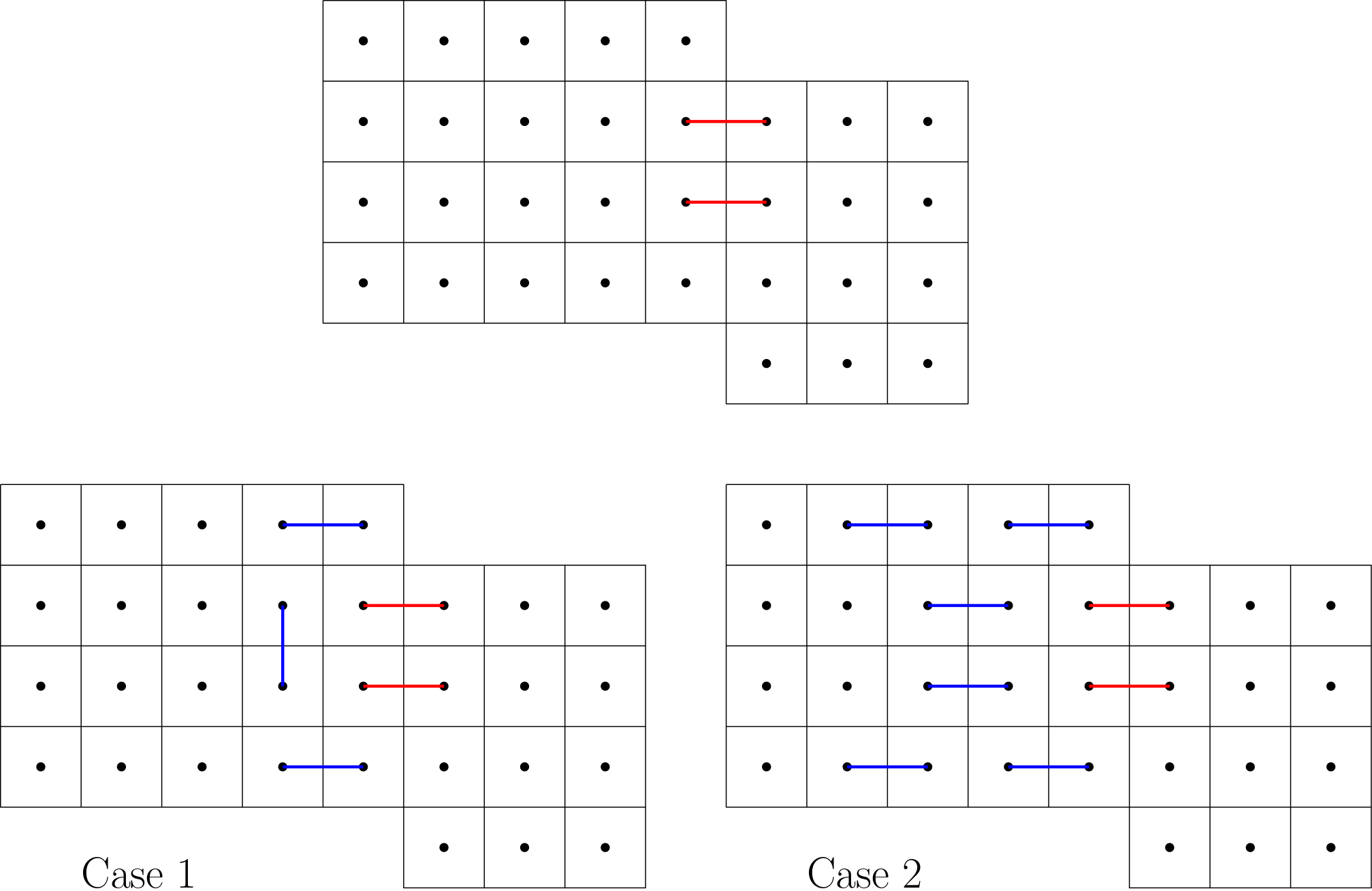}
    \caption{Initial and intermediate tiling configurations of $L_{4,5} \sqcup_{1,\{e_1,e_2\}} L_{4,3}$.}
    \label{fig:T_4,s=1}
\end{figure}

When $s=0$ or $1$ and $B=\{e_1, e_3\}$ or $\{e_2, e_4\}$, then at least one of the graphs has its corner vertex missing along with a non corner, nonadjacent vertex from the same column, which is impossible to tile. This is because the only way to cover the vertices of an end column missing a corner and a non corner, nonadjacent tile is with horizontal tiles, which then leaves two nonadjacent vertices of the second column uncovered in a way that can only be covered by two more horizontal tiles. However, once those vertices have been covered, the graph is in the same state as it was at the beginning, though now two of its columns have been covered. This process repeats itself, forcing the placement of horizontal tiles until there is no more space on the graph and either the middle vertex or the top and bottom vertices of the end column will always be uncovered, meaning there is no way to tile the whole graph. Therefore, these combinations of shifting and gluing contribute 0 to the total.

When $s=0$ and $B=\{e_1, e_4\}$, then we have two graphs next to each other, each missing two corner vertices from an end column, as shown in Figure \ref{fig:G(4,2)}. We will look at just the right graph, since the possible ways to tile each graph are the same. We can either use one vertical tile or two horizontal tiles to cover the vertices in the column missing its corners. If we use a vertical tile, there are $T_4(n-1)$ ways to tile the $4 \times (n-1)$ graph that remains. If we use two horizontal tiles, we are forced to place two more horizontal tiles in the top and bottom rows, so what's left is a $4 \times (n-2)$ graph with two of the corners missing from the left-most column. As we have seen before, we can keep reducing the length of the graph like this, adding $T_4(n-2k+1)$ (where $1\leq k \leq \lfloor\frac{n}{2}\rfloor$) perfect matchings to the total with each iteration until there is either the middle two vertices of the last column or the entire last column and the middle two vertices of the second-to-last column left uncovered, depending on if $n$ is even or odd. Therefore, in total there are $T_4(n-1)+T_4(n-3)+T_4(n-5)+...+(T_4(1) \text{ if n is even) or } (T_4(0) \text{ if n is odd})=\sum_{i=1}^{\lfloor\frac{n}{2}\rfloor} T_4(n-2i+1)$, which, by Proposition \ref{Prop:4alternatingsum} equals $f(n+1)$ ways to tile $L_{4,n}$ with the two corner vertices missing from the last column. The number of ways to tile the left graph is the same except for replacing every instance of $n$ with $m$. By multiplying these values together, we see that there are $f(n+1)f(m+1)$ ways to tile two lattice graphs shifted by 0 with a bridge set equal to $\{e_1, e_4\}$.

\begin{figure}[h]
    \centering
    \includegraphics[scale = 0.3]{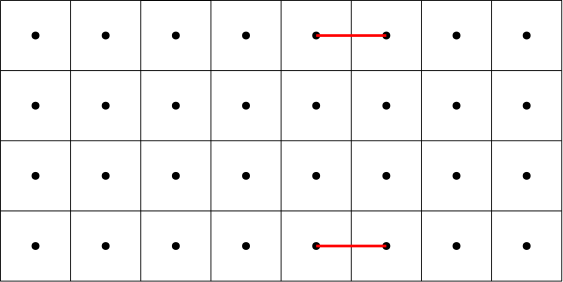}
    \caption{Initial tiling configuration of $L_{4,5} \sqcup_{0,\{e_1,e_4\}} L_{4,3}$.}
    \label{fig:G(4,2)}
\end{figure}

When $s=0$ and $B=\{e_2, e_3\}$, then we have two graphs next to each other, each missing the two middle vertices from their end columns. From a previous paragraph, we know there are \[-\frac{2}{5}T_4(m)+4T_4(m-2)+\frac{7}{5}T_4(m-3)-T_4(m-4)\] ways to tile such graphs. By multiplying these values together, we see that there are \[f(m)f(n)\] ways to tile two lattice graphs shifted by 0 with a bridge set equal to $\{e_2, e_3\}$.\\\\

When $s=0$ and $B=\{e_1, e_2, e_3, e_4\}$, then the graphs are positioned like in Figure \ref{fig:G(4,4)}. This configuration leaves a $4 \times (m-1)$ and a $4 \times (n-1)$ graph to be tiled, so the number of ways to tile them is the number of ways to tile one of them multiplied by the number of ways to tile the other. That is, $T_4(m-1)T_4(n-1)$.

\begin{figure}[h]
    \centering
    \includegraphics[scale = 0.3]{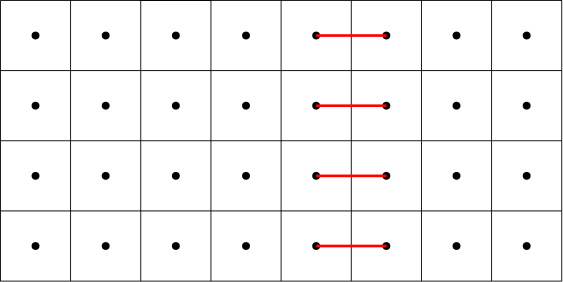}
    \caption{Initial tiling configuration of $L_{4,5} \sqcup_{1,\{e_1,e_2,e_3,e_4\}} L_{4,3}$.}
    \label{fig:G(4,4)}
\end{figure}
\end{proof}

\section{Spinors, Clifford algebras on graphs and gluing}

Spinors can be examined as elements of linear representation of Clifford Algebras. For example, the space of Pauli spinors can be approached as the space $Cl^-(0,3)$ and the space of Dirac spinors as $Cl^+(1,3)$ \cite{renaud:hal-03015551}. The relations between the generators of the algebra parallel the rules governing interactions between spinors in a quantum system. While there are a number of candidate Clifford algebras that have been explored as representations of the space of spinors \cite{renaud:hal-03015551}, we focus on a special type of Clifford algebra associated to a graph, introduced by Khovanova \cite{Khovanova}.

\begin{definition}
Let $\Gamma$ be a graph with $n$ vertices. Its associated Clifford algebra $A_\Gamma$ has $n$ generators $e_1,\ldots,e_n$ corresponding to each vertex. For each $i$, $e_i^2 = -1$ and 

$\begin{cases}
    e_i e_j = -e_j e_i & i \text{ and } j \text{ are adjacent in } \Gamma\\
    e_i e_j = e_i e_j & i \text{ and } j \text{ not adjacent in } \Gamma
\end{cases}$.
\end{definition}

The center of a Clifford graph algebra characterizes the structure of the whole algebra. Each central monomial of a Clifford graph algebra gives a decomposition as a direct sum of two algebras. As a result, the structure of a Clifford graph algebra can be identified solely by the number of vertices and the dimension of its center \cite{Khovanova}.

The center of a Clifford graph algebra is spanned by its monomials. They are determined by the structure of the graph \cite{Khovanova}, as stated in the following

\begin{lemma}
A monomial $e_\alpha$ is central if and only if for each vertex $i \in \Gamma$, there are an even number of edges connecting $i$ to $\alpha$. 
\end{lemma}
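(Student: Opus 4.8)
The plan is to reduce centrality to commutation with each generator and then track the sign picked up when a single generator is moved past the monomial. Write $e_\alpha = \prod_{i \in \alpha} e_i$ for a fixed ordering of the index set $\alpha \subseteq \{1, \dots, n\}$. Since $A_\Gamma$ is generated as an algebra by $e_1, \dots, e_n$, the monomial $e_\alpha$ is central if and only if it commutes with every generator $e_k$. For a vertex $k$, let $d_k(\alpha)$ denote the number of vertices of $\alpha$ adjacent to $k$; because $\Gamma$ is simple, this is exactly the number of edges joining $k$ to $\alpha$. The goal is to show $e_\alpha e_k = e_k e_\alpha$ if and only if $d_k(\alpha)$ is even, for every $k$.

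First I would treat the case $k \notin \alpha$. Moving $e_k$ rightward through $e_\alpha$ one factor at a time, each factor $e_j$ with $j$ adjacent to $k$ contributes a sign $-1$ via the relation $e_k e_j = -e_j e_k$, while each non-adjacent factor commutes past $e_k$ with no sign change. Collecting the signs gives $e_k e_\alpha = (-1)^{d_k(\alpha)} e_\alpha e_k$, so $e_k$ and $e_\alpha$ commute precisely when $d_k(\alpha)$ is even.

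Next I would treat the subtler case $k \in \alpha$, where the relation $e_k^2 = -1$ enters. Factor $e_\alpha = c\, e_k e_{\alpha \setminus \{k\}}$ for a suitable sign $c = \pm 1$ coming from the chosen ordering. On one hand, $e_k e_\alpha = c\, e_k^2 e_{\alpha \setminus \{k\}} = -c\, e_{\alpha \setminus \{k\}}$. On the other hand, pulling the rightmost $e_k$ in $e_\alpha e_k = c\, e_k e_{\alpha \setminus \{k\}} e_k$ back across $e_{\alpha \setminus \{k\}}$ yields a sign $(-1)^{d_k(\alpha)}$ (here $d_k(\alpha)$ counts neighbors of $k$ in $\alpha \setminus \{k\}$, which equals the count in $\alpha$ since $\Gamma$ has no loops), followed by another application of $e_k^2 = -1$; this gives $e_\alpha e_k = -c\,(-1)^{d_k(\alpha)} e_{\alpha \setminus \{k\}}$. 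Comparing the two expressions, $e_\alpha$ and $e_k$ commute exactly when $d_k(\alpha)$ is even, matching the criterion from the first case.

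Combining the two cases, $e_\alpha$ commutes with $e_k$ if and only if $d_k(\alpha)$ is even, and $e_\alpha$ is central if and only if this holds for every vertex $k$; this is precisely the assertion of the lemma. The main obstacle is the case $k \in \alpha$: one must verify that the two separate uses of $e_k^2 = -1$ cancel and leave the parity of $d_k(\alpha)$ as the sole determinant of commutation, so that the criterion is genuinely uniform across all vertices.
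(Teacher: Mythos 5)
Your proof is correct, and here the comparison is lopsided: the paper does not prove this lemma at all — it is stated as a known result imported from Khovanova \cite{Khovanova} — so your argument supplies a proof that the paper leaves to the literature. Your route is the natural direct one: reduce centrality to commutation with each generator $e_k$ (legitimate since the $e_i$ generate $A_\Gamma$ as an algebra), then track the sign $(-1)^{d_k(\alpha)}$ picked up when $e_k$ is pushed through $e_\alpha$, treating $k \notin \alpha$ and $k \in \alpha$ separately. Your bookkeeping in the second case is right: both sides acquire the same factor $-c$ from the single application of $e_k^2 = -1$, so the parity of $d_k(\alpha)$ (unchanged by deleting $k$ from $\alpha$, as $\Gamma$ has no loops) is the only possible discrepancy. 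One small step you should make explicit: from $e_k e_\alpha = (-1)^{d_k(\alpha)} e_\alpha e_k$ you conclude non-commutation when $d_k(\alpha)$ is odd, and this requires knowing $e_\alpha e_k \neq 0$. That holds because each generator is a unit ($e_i \cdot (-e_i) = 1$ follows from $e_i^2 = -1$), hence every monomial is a unit and in particular nonzero, and since $A_\Gamma$ is a $\mathbb{C}$-algebra the equation $x = -x$ forces $x = 0$; equivalently, one may invoke the standard fact that the $2^n$ monomials form a linear basis of $A_\Gamma$. With that one-line remark inserted, your argument is complete and self-contained.
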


\subsection{Examples}

We give some examples of Clifford graph algebras and their centers.

\begin{proposition}\label{Prop:CenterPath}
Let $P_n$ be a path graph. Then 
$$Z(A_{P_n}) = 
\begin{cases}
    \mathbb{C} & \text{ if } n \text{ is even}\\
    \mathbb{C}^2 & \text{ if } n \text{ is odd}
\end{cases}.$$

\end{proposition}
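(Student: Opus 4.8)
The plan is to determine the central monomials of $A_{P_n}$ using the preceding lemma, then count them to obtain $\dim Z(A_{P_n})$ and finally identify the resulting algebra structure. Label the vertices $1, 2, \ldots, n$ along the path, so the edges are exactly the pairs $\{i, i+1\}$. A monomial $e_\alpha$ is indexed by a subset $\alpha \subseteq \{1, \ldots, n\}$, which I would encode by indicator variables $x_j \in \mathbb{F}_2$ with $x_j = 1 \iff j \in \alpha$. The lemma says $e_\alpha$ is central if and only if for every vertex $i$ the number of neighbors of $i$ lying in $\alpha$ is even. Since $N(1) = \{2\}$, $N(n) = \{n-1\}$, and $N(i) = \{i-1, i+1\}$ for interior $i$, centrality becomes the linear system over $\mathbb{F}_2$
\[
x_2 = 0, \qquad x_{n-1} = 0, \qquad x_{i-1} + x_{i+1} = 0 \ \ (1 < i < n),
\]
(the case $n=1$ having no edges is immediate, giving the two central monomials $1$ and $e_1$).

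Next I would solve this system. The interior equations read $x_{i-1} = x_{i+1}$, so they chain all odd-indexed variables to a single common value $c_{\mathrm{odd}}$ and all even-indexed variables to a single common value $c_{\mathrm{even}}$. The left boundary equation $x_2 = 0$ forces $c_{\mathrm{even}} = 0$. The right boundary equation $x_{n-1} = 0$ then constrains the class whose parity matches that of $n-1$: if $n$ is even then $n-1$ is odd, so this equation additionally forces $c_{\mathrm{odd}} = 0$; if $n$ is odd then $n-1$ is even and the equation is already satisfied, leaving $c_{\mathrm{odd}}$ free.

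This yields the dimension count directly. When $n$ is even the only solution is $\alpha = \emptyset$, so the unique central monomial is $1$ and $Z(A_{P_n}) = \mathbb{C}\cdot 1 \cong \mathbb{C}$. When $n$ is odd there are exactly two solutions, $\alpha = \emptyset$ and $\alpha = \{1, 3, 5, \ldots, n\}$, giving central monomials $1$ and $\omega := e_1 e_3 \cdots e_n$; since the center is spanned by its central monomials, $\dim Z(A_{P_n}) = 2$.

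To finish the odd case I would identify the algebra structure on $\mathrm{span}\{1, \omega\}$. The generators appearing in $\omega$ have pairwise non-adjacent indices (consecutive odd numbers differ by $2$), so they commute with one another, and each squares to $-1$; hence $\omega^2 = \prod_{j \text{ odd}} e_j^2 = (-1)^{(n+1)/2} \in \{\pm 1\}$. In particular $\omega^2 \neq 0$, so $x^2 - \omega^2$ has two distinct roots in $\mathbb{C}$, and by the Chinese Remainder Theorem $Z(A_{P_n}) = \mathbb{C}[\omega] \cong \mathbb{C}[x]/(x^2 - \omega^2) \cong \mathbb{C} \times \mathbb{C} = \mathbb{C}^2$. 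I expect no single hard step: the main point requiring care is the bookkeeping of the $\mathbb{F}_2$ system at the two endpoints, since it is precisely the parity of $n-1$ at the right boundary that separates the two cases, while the final structural identification is routine over $\mathbb{C}$ because there both $\omega^2 = 1$ and $\omega^2 = -1$ give a split quadratic.
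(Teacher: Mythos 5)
Your proof is correct, and it is substantially more complete than what the paper itself provides. The paper never really proves this proposition: it states it, asserts that for odd $n$ the center is spanned by $1$ and $e_1e_3\cdots e_n$, and then only \emph{verifies} on the example $P_5$ that $e_1e_3e_5$ satisfies Khovanova's centrality criterion (every vertex meets the set $\{1,3,5\}$ in an even number of edges), leaving the even case, the completeness of the list of central monomials, and the identification of the algebra structure to the cited reference. You start from the same two ingredients the paper invokes --- the centrality criterion for monomials and the fact that the center is spanned by its central monomials --- but you turn the criterion into a linear system over $\mathbb{F}_2$ (equivalently, computing the mod-$2$ kernel of the adjacency matrix of $P_n$), solve it in full, and thereby obtain the complete classification in both parities at once: only $\alpha=\emptyset$ when $n$ is even, and exactly $\alpha=\emptyset$ and $\alpha=\{1,3,\ldots,n\}$ when $n$ is odd. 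Your final step, computing $\omega^2=(-1)^{(n+1)/2}\neq 0$ for $\omega=e_1e_3\cdots e_n$ (legitimate since the odd-indexed generators pairwise commute) and applying the Chinese Remainder Theorem to get $\mathbb{C}[\omega]\cong\mathbb{C}\times\mathbb{C}$, supplies the algebra-level identification $Z(A_{P_n})\cong\mathbb{C}^2$ that the paper's statement asserts but never justifies. What your route buys is a self-contained, checkable proof that also generalizes readily (the same $\mathbb{F}_2$-kernel computation classifies central monomials for any graph); what the paper's route buys is only brevity, at the cost of deferring all the actual content to Khovanova's results.
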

To be precise, for odd $n$, $Z(A_{P_n})$ is spanned by the central monomial $e_1 e_3 \cdots e_n$ and $1$.

Consider the path graph $P_5$, labeled as follows:
\begin{figure}[h]
\includegraphics[scale=0.1]{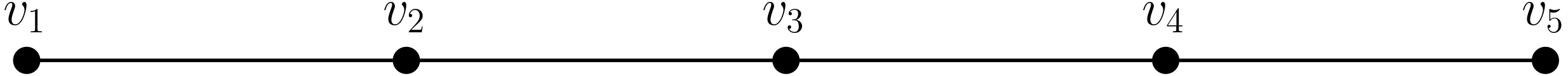}
\end{figure}

The center of $P_7$ is $Z(A_{P_5}) = \{ \alpha \, e_1 e_3 e_5  + \beta \; | \; \alpha, \beta \in \mathbb{C} \}$.

$e_1 e_3 e_5$ is a central monomial in $A_{P_7}$ because we can check that each vertex is connected to $\alpha = \{ 1,3,5 \}$ by an even number of edges.

All edges in $P_5$ connect an even vertex to an odd vertex. This means for odd $i$, there are no edges connecting it to $\alpha$. For even $i$, the only incident edges are $(i,i-1)$ and $(i,i+1)$. Both $i-1$ and $i+1$ are in $\alpha$, so there are two edges connecting it to $\alpha$.

\subsection{Graph gluing and Clifford algebras}


\begin{definition}
Let $V$, $W$ be two $\mathbb{C}$-algebras with bases $\{ v_i \}_{i=1}^n$ and $\{ w_j \}_{j=1}^n$. The tensor product $V \otimes W$ has basis $\{ v_i \otimes w_j \}$ such that

\begin{enumerate}
    \item $(v_1 + v_2) \otimes w = v_1 \otimes w + v_2 \otimes w$
    \item $v \otimes (w_1 + w_2) = v \otimes w_1 + v \otimes w_2$
    \item $\lambda(v \otimes w) = (\lambda v) \otimes = v \otimes (\lambda w), \lambda \in \mathbb C$.
\end{enumerate}
\end{definition}

In the case of Clifford graph algebras: $V$ and $W$ are $\mathbb{C}$-algebras spanned by monomials. Assume the generators $\{e_k\}$ for $V$ are distinct from the generators $\{e_l\}$ for $W$. Each element $v \in V$ and $w \in W$ is a polynomial:

\begin{enumerate}
    \item $v = \sum\limits_{i=1}^n \alpha_i v_i$
    \item $w = \sum\limits_{j=1}^m \lambda_j w_j,$
\end{enumerate}
where $\alpha_i, \lambda_j \in \mathbb C$.
Following the normal rules of multiplying two monomials, we define the tensor product $v_i \otimes w_j = v_i w_j$. Because the monomial $v_i$ is defined in different generators from $w_j$, $v_i w_j$ is a monomial of degree 1.

Under the normal polynomial multiplication, we have:

\[ v \otimes w = vw = \sum_{i=1}^n \sum_{j=1}^m \alpha_i \lambda_j v_i w_j.\]

Every element $v \otimes w \in V \otimes W$ is a linear combination of the monomials $v_i \otimes w_j = v_i w_j$.

In addition, polynomial multiplication satisfies the following properties:
\begin{enumerate}
    \item $(v_1 + v_2)w = v_1 w + v_2 w \iff (v_1 + v_2) \otimes w = v_1 \otimes w + v_2 \otimes w$
    \item $v (w_1 + w_2) = v w_1 + v w_2 \iff v \otimes (w_1 + w_2) = v \otimes w_1 + v \otimes w_2$
    \item $\lambda vw = (\lambda v) w = v(\lambda w) \iff \lambda(v \otimes w) = (\lambda v) \otimes = v \otimes (\lambda w)$.
\end{enumerate}

\begin{theorem}\label{thm: Clifford_Gluing_Center}
Let $\Gamma_1$ and $\Gamma_2$ be graphs and $A_{\Gamma_1}$ and $A_{\Gamma_2}$ be their associated Clifford algebras. Then $Z(A_{\Gamma_1}) \otimes Z(A_{\Gamma_2}) = Z(A_{\Gamma_1 \sqcup \Gamma_2}).$
\end{theorem}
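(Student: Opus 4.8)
The plan is to identify $A_{\Gamma_1 \sqcup \Gamma_2}$ with the tensor product $A_{\Gamma_1} \otimes A_{\Gamma_2}$ and then show that the centrality condition characterized by the lemma decouples across the two components. First I would observe that since $\Gamma_1 \sqcup \Gamma_2$ is a disjoint union, no edge joins a vertex of $\Gamma_1$ to a vertex of $\Gamma_2$; hence any generator $e_k$ coming from $\Gamma_1$ and any generator $e_l$ coming from $\Gamma_2$ correspond to non-adjacent vertices and therefore commute. Together with the relations internal to each $\Gamma_i$, this is exactly the presentation of the tensor product algebra described above, so that $A_{\Gamma_1 \sqcup \Gamma_2} = A_{\Gamma_1} \otimes A_{\Gamma_2}$ under the identification $e_{\alpha_1} \otimes e_{\alpha_2} = e_{\alpha_1} e_{\alpha_2}$.

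Next, since the center of a Clifford graph algebra is spanned by its central monomials, it suffices to match central monomials on the two sides. Given a monomial $e_\alpha$ of $A_{\Gamma_1 \sqcup \Gamma_2}$ indexed by a vertex subset $\alpha$, I would write $\alpha_1 = \alpha \cap V(\Gamma_1)$ and $\alpha_2 = \alpha \cap V(\Gamma_2)$, so that $e_\alpha = e_{\alpha_1} e_{\alpha_2}$ with the two factors commuting. The key step is to apply the centrality lemma: $e_\alpha$ is central precisely when every vertex $i$ is joined to $\alpha$ by an even number of edges. For $i \in V(\Gamma_1)$, all edges incident to $i$ stay inside $\Gamma_1$, so the number of edges from $i$ to $\alpha$ equals the number from $i$ to $\alpha_1$; likewise for $i \in V(\Gamma_2)$ the count depends only on $\alpha_2$. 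Hence the single evenness condition ranging over all vertices of $\Gamma_1 \sqcup \Gamma_2$ splits into the two independent conditions ``$e_{\alpha_1}$ is central in $A_{\Gamma_1}$'' and ``$e_{\alpha_2}$ is central in $A_{\Gamma_2}$.''

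From this I would conclude that the central monomials of $A_{\Gamma_1 \sqcup \Gamma_2}$ are exactly the products $e_{\alpha_1} e_{\alpha_2}$ of a central monomial of $A_{\Gamma_1}$ with a central monomial of $A_{\Gamma_2}$. Under the identification $e_{\alpha_1} e_{\alpha_2} = e_{\alpha_1} \otimes e_{\alpha_2}$, these form a spanning set of $Z(A_{\Gamma_1}) \otimes Z(A_{\Gamma_2})$, which yields the desired equality of the two centers as subalgebras.

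I expect the main obstacle to be bookkeeping rather than conceptual: one must check that the factorization $e_\alpha = e_{\alpha_1} e_{\alpha_2}$ introduces no sign (which holds because cross-component generators commute, so the reordering is trivial), and that the products of central monomials really span the full tensor product $Z(A_{\Gamma_1}) \otimes Z(A_{\Gamma_2})$ rather than a proper subspace. Both points are guaranteed by the commuting of the two families of generators, so once the tensor-product identification is set up cleanly, the centrality lemma does the rest.
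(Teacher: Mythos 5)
Your proof is correct and follows essentially the same route as the paper's: both decompose the vertex set of a monomial as $\alpha = \alpha_1 \sqcup \alpha_2$, apply the evenness criterion for central monomials, and exploit the fact that no edges join the two components. The only difference is presentational --- you organize the argument as a single biconditional decoupling of the centrality condition (together with an explicit identification $A_{\Gamma_1 \sqcup \Gamma_2} = A_{\Gamma_1} \otimes A_{\Gamma_2}$), whereas the paper proves the two inclusions separately using the same lemma.
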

\begin{proof}
($\Rightarrow$) First, we need to show $Z(A_{\Gamma_1}) \otimes Z(A_{\Gamma_2}) \subseteq Z(A_{\Gamma_1 \sqcup \Gamma_2})$.

Let $\{ e_\alpha \}$ be the basis of central monomials for $Z(A_{\Gamma_1})$ and $\{ e_\beta \}$ be the basis of central monomials for $Z(A_{\Gamma_2})$.

Then $\{ e_\alpha e_\beta \}$ is a basis for $Z(A_{\Gamma_1 \sqcup \Gamma_2})$. We want to show it is contained in $Z(A_{\Gamma_1} \sqcup A_{\Gamma_2})$.

Let $e_\alpha e_\beta$ be fixed. Observe $e_\alpha e_\beta$ is a monomial associated with the vertex set $\alpha \sqcup \beta$, so we can denote $e_{\alpha \beta} = e_\alpha e_\beta$.

Recall a monomial $e_\alpha$ is central in $A_\Gamma$ if and only if for all $v \in V(\Gamma)$, the number of edges connecting it to $\alpha$ is even.

In order to show $e_{\alpha \beta} \in Z(A_{\Gamma_1 \sqcup \Gamma_2})$, we need to show for every $v \in V(\Gamma_1 \sqcup \Gamma_2)$, there are an even number of edges connecting it to $\alpha \sqcup \beta$.
\newline

Case 1: $v \in V(\Gamma_1)$. Because $e_\alpha$ is a central monomial of $A_{\Gamma_1}$, there are an even number of edges connected $v$ to $\alpha$. $\Gamma_1$ and $\Gamma_2$ are disconnected in $\Gamma_1 \sqcup \Gamma_2$, so there are no edges between $v$ and $\beta$.

Together, there are an even number of edges connecting $v$ to $\alpha \sqcup \beta$.
\newline

Case 2: $v \in V(\Gamma_2)$. Because $e_\beta$ is a central monomial of $A_{\Gamma_2}$, there are an even number of edges connected $v$ to $\beta$.

$\Gamma_1$ and $\Gamma_2$ are disconnected in $\Gamma_1 \sqcup \Gamma_2$, so there are no edges between $v$ and $\alpha$.
\newline

Together, there are an even number of edges connecting $v$ to $\alpha \sqcup \beta$.

We thus conclude that $e_\alpha e_\beta = e_{\alpha \beta}$ is a central monomial of $A_{\Gamma_1 \sqcup \Gamma_2}$. It follows that $Z(A_{\Gamma_1}) \otimes Z(A_{\Gamma_2}) \subseteq Z(A_{\Gamma_1 \sqcup \Gamma_2})$.
\newline

($\Leftarrow$) Next, we need to show $Z(A_{\Gamma_1 \sqcup \Gamma_2}) \subseteq Z(A_{\Gamma_1}) \otimes Z(A_{\Gamma_2})$.

Let $\{ e_\alpha \}$ be the basis of monomials for $Z(A_{\Gamma_1 \sqcup \Gamma_2})$. We want to show it is contained in $Z(A_{\Gamma_1}) \otimes Z(A_{\Gamma_2})$.

Let $e_\alpha$ be a central monomial of $A_{\Gamma_1 \sqcup \Gamma_2}$. Because $\Gamma_1$ and $\Gamma_2$ are disconnected components of $\Gamma_1 \sqcup \Gamma_2$, we can separate the vertex set $\alpha = \alpha_1 \sqcup \alpha_2$, where $\alpha_1$ and $\alpha_2$ are the subsets of vertices from $\Gamma_1$ and $\Gamma_2$, respectively.

We want to show $e_\alpha = e_{\alpha_1} e_{\alpha_2} \in Z(A_{\Gamma_1}) \otimes Z(A_{\Gamma_2})$, or $e_{\alpha_1} \in Z(A_{\Gamma_1})$ and $e_{\alpha_2} \in Z(A_{\Gamma_2})$. 

Start with $e_{\alpha_1}$. Let $v \in V(\Gamma_1)$. 

Because $e_\alpha$ is a central monomial of $\Gamma_1 \sqcup \Gamma_2$, we know there are an even number of edges between $v$ and $\alpha_2 \sqcup \alpha_2$. In particular, there are no edges between $v$ and $\alpha_2$ because $\Gamma_1$ and $\Gamma_2$ are disconnected. Then the number of edges connecting $v$ to $\alpha_1$ is the same for $\alpha_1 \sqcup \alpha_2$, which is even. We have shown for any $v \in V(G_1)$, there are an even number of edges connecting it to $\alpha_1$, so $e_{\alpha_1}$ is a central monomial of $A_{\Gamma_1}$. Similarly, we can find that $e_{\alpha_2}$ is a central monomial of $A_{\Gamma_2}$. Then $e_\alpha = e_{\alpha_1} e_{\alpha_2} \in Z(A_{\Gamma_1}) \otimes Z(A_{\Gamma_2})$. We conclude $Z(A_{\Gamma_1 \sqcup \Gamma_2}) \subseteq Z(A_{\Gamma_1}) \otimes Z(A_{\Gamma_2})$, as desired.
\end{proof}

The centers of Clifford graph algebras from bridge gluings are more challenging to identify, but we can start with some simple examples.

There are two ways to glue the path graphs $P_2$ and $P_3$ with one bridge:

\begin{figure}[h]
    \centering
    \includegraphics[scale=0.2]{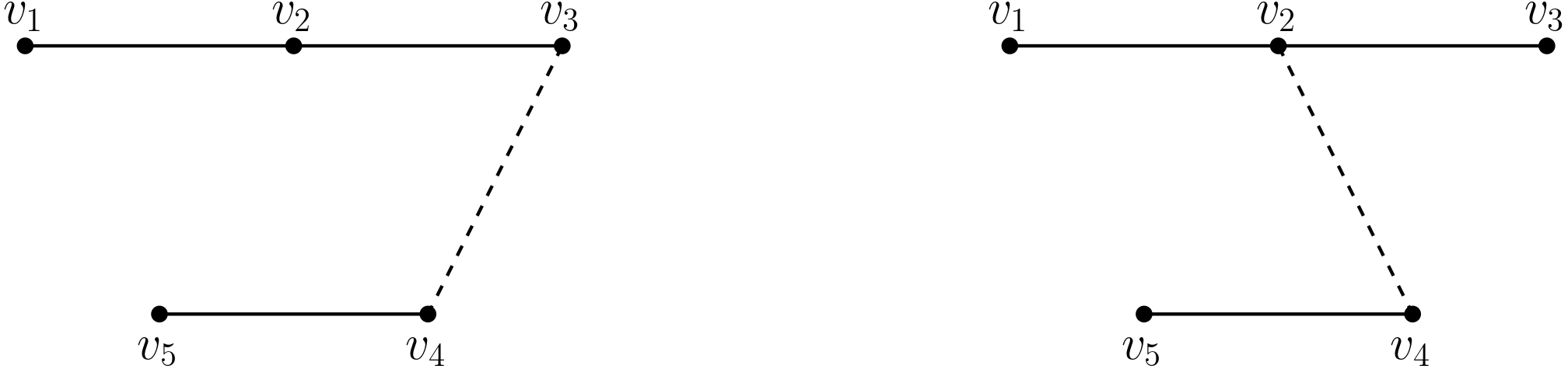}
        \caption{Two different bridge gluings between $P_3$ and $P_2$.}
\end{figure}

By Proposition \ref{Prop:CenterPath}, the bases for $Z(A_{P_2}$ and $Z(A_{P_3})$ are $\{ 1 \}$ and $\{ 1, e_1 e_3 \}$, respectively. In the left gluing, $P_2 \sqcup_{B_1} P_3 = P_5$, which has a center spanned by $e_1 e_3 e_5$ and $1$. The right gluing also has basis $\{ e_1 e_3 e_5, 1 \}$. With the same number of vertices and dimension of their centers, the Clifford algebras associated to the two bridge gluings are isomorphic.

This is not always the case though. Take the following gluing of two copies of $P_3$:

\begin{figure}[h]
    \centering
    \includegraphics[scale=0.4]{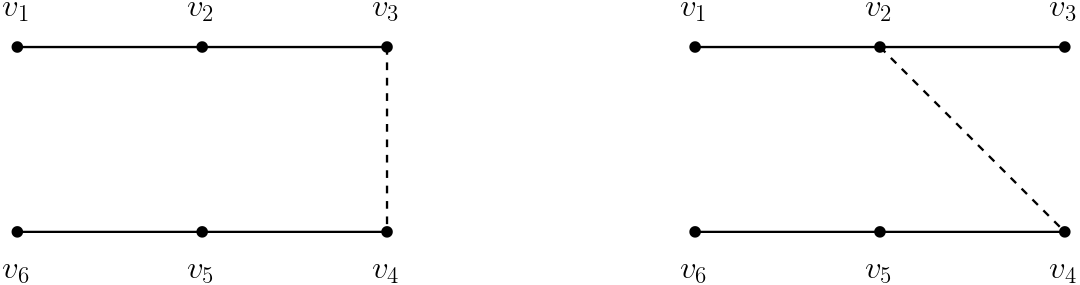}
    \caption{Two different bridge gluings between two copies of $P_3$.}
\end{figure}

The Clifford algebra of the left gluing has a center of dimension one, while the center for the right gluing has dimension four with central monomials $e_1 e_3$, $e_1 e_4 e_6$, and $e_3 e_4 e_6$.

We can characterize the dimension of the Clifford algebras of glued path graph. The case of bridge gluing the endpoints of two path graphs is simple: by Proposition \ref{Prop:CenterPath}, $P_n \sqcup_{B_1} P_m$ has a center of dimension $1$ if $n$ and $m$ have the same parity and $0$ if they have opposite parity.

\begin{theorem}\label{thm: Clifford_Gluing_Path}
Let $\Gamma = P_n \sqcup_{B_1} P_m$ be two path graphs glued by a bridge between an interior vertex and endpoint.
\begin{enumerate}
    \item If $n$ and $m$ are both even, then $Z(A_\Gamma)$ has dimension 1.
    \item If $n$ is even and $m$ is odd, then $Z(A_\Gamma)$ has dimension 2.
    \item If $n$ and $m$ are both odd, then $Z(A_\Gamma)$ has dimension 4.
\end{enumerate}
\end{theorem}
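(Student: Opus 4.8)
The plan is to translate the centrality condition into linear algebra over $\mathbb{F}_2$ and then simply count solutions. By the lemma characterizing central monomials, a monomial $e_\alpha$ lies in $Z(A_\Gamma)$ exactly when every vertex of $\Gamma$ is joined to the vertex set $\alpha$ by an even number of edges. Encoding $\alpha$ by its indicator vector $x \in \mathbb{F}_2^{V(\Gamma)}$, this is precisely the condition $Ax = 0$, where $A$ is the adjacency matrix of $\Gamma$ reduced mod $2$. Since the center is spanned by its monomials and distinct monomials are linearly independent over $\mathbb{C}$, one gets $\dim_\mathbb{C} Z(A_\Gamma) = 2^{\dim_{\mathbb{F}_2}\ker A}$. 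Hence the claimed dimensions $1,2,4$ are equivalent to $\dim_{\mathbb{F}_2}\ker A = 0,1,2$ respectively; as a consistency check this recovers Proposition \ref{Prop:CenterPath}, since $\dim_{\mathbb{F}_2}\ker A(P_\ell)$ equals $0$ for $\ell$ even and $1$ for $\ell$ odd.

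Next I would fix coordinates. Write $P_n = a_1 a_2 \cdots a_n$ and $P_m = b_1 b_2 \cdots b_m$, and let the bridge join the endpoint $b_1$ to an interior vertex $a_k$. The system $Ax=0$ then consists of the usual path relations $x_{a_{i-1}} + x_{a_{i+1}} = 0$ (with $x_{a_2}=0$ and $x_{a_{n-1}}=0$ at the ends) and $x_{b_{j-1}} + x_{b_{j+1}} = 0$ (with $x_{b_{m-1}}=0$), modified only at the two junction vertices: the relation at $a_k$ acquires the term $x_{b_1}$, and the relation at $b_1$ reads $x_{b_2} + x_{a_k} = 0$.

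The engine of the argument is a backward propagation along $P_m$. Solving the $b$-relations from $b_m$ inward shows every $x_{b_j}$ is determined by parity, and the outcome splits on the parity of $m$. If $m$ is even the relations force $x_{b_1}=0$, so the junction relation at $a_k$ collapses to the ordinary path relation and the $b$-coordinates are slaved to $x_{a_k}$, contributing no new freedom; hence $\ker A(\Gamma)$ is isomorphic to $\ker A(P_n)$, of dimension $1$ if $n$ is odd and $0$ if $n$ is even. If $m$ is odd the relations instead force $x_{a_k}=0$ while freeing one parameter $t := x_{b_1}$, which enters the junction relation as $x_{a_{k-1}} + x_{a_{k+1}} = t$. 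I would then solve the remaining $a$-system by propagating parities inward from both ends $a_1$ and $a_n$, imposing $x_{a_k}=0$, and reading off how many of the three candidate parameters $x_{a_1}, x_{a_n}, t$ survive.

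This last step is where the real work lies, and it is the \emph{main obstacle}: the answer hinges on how the junction vertex $a_k$ and its neighbors $a_{k\pm 1}$ fall into the two parity classes propagated from the endpoints, so the bookkeeping must be carried out with care (indeed the count is genuinely sensitive to the position of $a_k$, and the clean statement corresponds to the interior vertex being adjacent to an endpoint, as in the displayed $P_3$ examples). Completing it yields $\dim_{\mathbb{F}_2}\ker A = 1$ when $n$ is even and $\dim_{\mathbb{F}_2}\ker A = 2$ when $n$ is odd. Combining the two halves gives $\dim_{\mathbb{F}_2}\ker A = 0$ for $n,m$ both even, $\dim_{\mathbb{F}_2}\ker A = 1$ when exactly one of $n,m$ is odd, and $\dim_{\mathbb{F}_2}\ker A = 2$ when both are odd; exponentiating produces the asserted dimensions $1$, $2$, and $4$.
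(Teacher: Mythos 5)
Your reduction of centrality to the $\mathbb{F}_2$-linear system $Ax=0$ (adjacency matrix mod $2$, $x$ the indicator vector of the vertex set) is correct, and it is a genuinely different route from the paper, which instead proves a combinatorial lemma about central monomials in trees (the vertex set contains a leaf, contains no adjacent pair, and each of its vertices has another at distance two) and then enumerates candidate vertex sets anchored at pairs of leaves an even distance apart. The identity $\dim_{\mathbb{C}} Z(A_\Gamma)=2^{\dim_{\mathbb{F}_2}\ker A}$ is right, and your $m$-even half (back-propagation forces $x_{b_1}=0$, so $\ker A(\Gamma)\cong\ker A(P_n)$) is complete; it settles case (1), and also the case $n$ odd, $m$ even that the theorem omits.

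However, there is a genuine gap exactly at what you yourself label the main obstacle: the $m$-odd bookkeeping is never performed, and the unconditional conclusion you assert in its place is false. Doing the propagation: the left end forces $x_{a_j}=x_{a_1}$ for odd $j\le k$ and $x_{a_j}=0$ for even $j\le k$; the right end forces $x_{a_j}=x_{a_n}$ for $j\ge k$ with $j\equiv n \pmod 2$ and $x_{a_j}=0$ for the other parity class; and the relation at $b_1$ forces $x_{a_k}=0$. When $n$ and $m$ are both odd and $k$ is \emph{odd}, this gives $x_{a_1}=x_{a_k}=0$ and $x_{a_n}=x_{a_k}=0$, both neighbors $x_{a_{k\pm 1}}$ fall in the zero parity classes, so the junction relation forces $t=x_{b_1}=0$: the kernel is trivial and $Z(A_\Gamma)$ has dimension $1$, not $4$. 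So completing your computation yields $\dim_{\mathbb{F}_2}\ker A=2$ for odd $n$ only when $k$ is even, not for all interior $k$ as your last sentence claims. The paper's own proof is aware of this: its Case 4 assumes $k$ even and disposes of odd $k$ by re-reading $\Gamma$ as the two \emph{even} paths $a_1\cdots a_k b_1\cdots b_m$ and $a_{k+1}\cdots a_n$ glued endpoint-to-interior, which lands in its Case 1 (dimension $1$). Your parenthetical remark about the interior vertex being adjacent to an endpoint is the right instinct (for odd $n$ that forces $k\in\{2,n-1\}$, both even), but it contradicts the unconditional claim that follows it; a correct proof must carry out this parity analysis of $a_{k\pm1}$ and state the both-odd case conditionally on the parity of $k$.
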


We prove the following lemma to identify central monomials of the glued graph Clifford algebra:

\begin{lemma}
Let $\Gamma$ be a tree. Suppose $e_\alpha$ is a central monomial of $A_\Gamma$. Then 
\begin{enumerate}
    \item there are at least one vertex of degree one in $\alpha$,
    \item for every vertex $v \in \alpha$, there exists $u \in \alpha$ such that $d(u,v) = 2$, and
    \item no pair of vertices $u,v \in \alpha$ are adjacent.
\end{enumerate}
\end{lemma}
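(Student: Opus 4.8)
The plan is to work throughout from the characterization recalled just above the statement: $e_\alpha$ is central in $A_\Gamma$ exactly when every vertex $i$ of $\Gamma$ is joined to $\alpha$ by an even number of edges. Since $\Gamma$ is a simple tree, this says precisely that $|N(i)\cap\alpha|$ is even for every vertex $i$, where $N(i)$ denotes the neighborhood of $i$. A first observation I would record immediately is that $\alpha$ cannot be a singleton: if $\alpha=\{v\}$, then any neighbor $w$ of $v$ (one exists as soon as $\Gamma$ has an edge) has $|N(w)\cap\alpha|=1$, which is odd. Hence $|\alpha|\ge 2$, a fact I will use in part (1). Parts (2) and (3) then follow from the parity condition applied at one well-chosen vertex, whereas part (1) needs a global extremal argument.

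For part (3) I would argue by contradiction: suppose two vertices of $\alpha$ are adjacent. The induced subgraph $\Gamma[\alpha]$ is a subgraph of a tree, hence a forest, and it now contains at least one edge; every forest with an edge has a vertex $z$ of degree exactly one inside it, i.e. a vertex $z\in\alpha$ with $|N(z)\cap\alpha|=1$. This count is odd, contradicting centrality, so $\alpha$ must be independent. For part (2), fix $v\in\alpha$ and pick any neighbor $w$ of $v$. Centrality at $w$ forces $|N(w)\cap\alpha|$ to be even, and it is at least one since $v\in N(w)\cap\alpha$; thus there is a second vertex $u\in N(w)\cap\alpha$ with $u\ne v$. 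Now $u,w,v$ is a path of length two, and $u,v$ cannot be adjacent (that would create a triangle in a tree), so $d(u,v)=2$, which is exactly (2).

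For part (1) I would root $\Gamma$ at an arbitrary vertex $r$ and choose $v\in\alpha$ of maximum depth. Because $|\alpha|\ge 2$, the maximum depth is at least one, so $v$ is not the root and has a parent. Every descendant of $v$ is strictly deeper than $v$, hence lies outside $\alpha$ by maximality. If $v$ had a child $c$ in $\Gamma$, then the neighbors of $c$ would be its parent $v$ together with its own children; the latter are descendants of $v$ and so avoid $\alpha$, giving $N(c)\cap\alpha=\{v\}$ and $|N(c)\cap\alpha|=1$, odd, contradicting centrality at $c$. Therefore $v$ has no children, so its only neighbor is its parent and $\deg_\Gamma(v)=1$: the deepest vertex of $\alpha$ is a leaf, proving (1).

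The short arguments (2) and (3) are each driven by the single fact that a parity-one count is forbidden, so the only genuine obstacle is locating the correct extremal vertex in (1). The subtlety there is that a naive ``outermost'' vertex of $\alpha$ need not be a leaf of $\Gamma$, since it may have non-$\alpha$ children; maximality must therefore be measured by \emph{depth} in a rooted tree and then combined with acyclicity. It is exactly the absence of cycles that forces a hypothetical child $c$ of the deepest $\alpha$-vertex to see $v$ as its unique neighbor in $\alpha$, and once this setup is in place the contradiction is immediate.
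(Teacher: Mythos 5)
Your proof is correct, and for parts (1) and (2) it follows essentially the same route as the paper: your part (2) is the paper's neighbor-parity argument (centrality at a neighbor $w$ of $v$ forces a second $\alpha$-vertex in $N(w)$, which must be at distance two since a tree has no triangles), and your part (1) is the paper's rooted-tree extremal argument --- the paper picks a $<_r$-maximal element of $\alpha$ and inspects a child, exactly as you pick a deepest element; your preliminary observation $|\alpha|\ge 2$ is a nice touch that justifies why that element has a parent, a point the paper handles instead by rooting at an interior vertex. Part (3) is where you genuinely diverge. The paper argues by cases (one of $u,v$ a leaf, versus both interior) and, in the interior case, follows a chain of $\alpha$-vertices forced by parity until it reaches a leaf of $\Gamma$ lying in $\alpha$; the termination of that chain rests on finiteness and acyclicity, which the paper leaves implicit. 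You instead note that centrality forces every vertex of the induced subgraph $\Gamma[\alpha]$ to have even degree within $\Gamma[\alpha]$, while a forest containing an edge must have a vertex of induced degree exactly one --- this packages the paper's chain-following into a single standard fact and is both shorter and less error-prone. (Both your argument and the paper's silently exclude the degenerate one-vertex tree, for which the statement as written fails since its unique central monomial contains a vertex of degree zero; your parenthetical ``one exists as soon as $\Gamma$ has an edge'' is the only acknowledgment of this on either side.)
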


This will help us to prove Theorem \ref{thm: Clifford_Gluing_Path}, because we only need to check the vertex sets that start at a leaf and skip every other vertex.

\begin{proof}
Let $e_\alpha$ be a central monomial of $A_\Gamma$.

\begin{enumerate}
    \item Suppose for the sake of contradiction, there is no vertex of degree one in $\alpha$.

Pick any interior vertex of $\Gamma$ as the root $r$. This gives $\Gamma$ a partial ordering relation $<_r$, where $x <_r y$ if $x$ lies in the $r-y$ path. Pick any maximal element $x \in \alpha$, that is, there is no $y \in \alpha$ such that $x <_r y$. By assumption, $x$ is not a leaf, so there is some vertex $v$ that is adjacent to $x$ away from the root $r$. Then $v$ is connected to $\alpha$ by a single edge $(v,x)$. There are no other edges between $v$ and $\alpha$ because for any other neighbor $u$, $x <_r u$, so $u \notin \alpha$ by the maximality of $x$.
We conclude $e_\alpha$ is not a central monomial, in contradiction.

    \item Let $v \in \alpha$. Suppose for the sake of contraction, there is no vertex $u \in \alpha$ such that $d(u,v)=2$. Pick any vertex $x$ adjacent to $v$. Then $x$ is connected to $\alpha$ by a single edge $(v,x)$. There are no other edges between $x$ and $\alpha$ because for any other neighbor $y$, $d(y,v)=2$. We conclude $e_\alpha$ is not a central monomial, in contradiction.
    
    \item Let $v \in \alpha$. Suppose for sake of contradiction, $u,v \in \alpha$ that are adjacent.

Case 1: $u$ or $v$ is a leaf. Assume WLOG $v$ is a leaf. Then $v$ is connected to $\alpha$ by a single edge $(u,v)$, since $v$ has no other neighbors.

Case 2: $u$ and $v$ are both interior vertices. At least one neighbor of $v$, say $x$, is in $\alpha$, otherwise $v$ only has one edge connecting it to $\alpha$. We can show the same for $x$, at least one of its neighbors is in $\alpha$. We repeat until we have found some leaf $z$ in $\alpha$, which is connected to $\alpha$ by a single edge. In contradiction, $e_\alpha$ is not a central monomial.
    
\end{enumerate}
\end{proof}

One corollary of this proof is that if a central vertex set starts at one leaf, then branches out, skipping every other vertex, then it must contain a leaf at the end of the branch. Otherwise, like in part 1, the ``next" vertex in the branch connects to the vertex set by one edge. As a result, the vertex set corresponding to a central monomial must contain at least two leaves that are an even distance apart.

Now we are able to prove Theorem \ref{thm: Clifford_Gluing_Path}.

\begin{proof}
 Let $P_n$ and $P_m$ be two path graphs. Label the vertices of $P_n$, $v_1,\ldots,v_n$, and similarly for $P_m$. Let $\Gamma$ be $P_n$ and $P_m$ glued by bridge $(v_k,u_1)$, where $v_k$ is an internal vertex of $P_n$ and $u_1$ is an endpoint of $P_m$.
 \newline

Case 1: $n$ and $m$ are both even.

Because $n$ is even, we know $v_k$ is an even distance from some endpoint of $P_n$ and an odd distance from the other. Assume WLOG $v_k$ is an even distance from $v_1$. The only pair of leaves that are an even apart is $v_1$ and $u_1$, so any central vertex set must contain $v_1,\ldots,v_k,u_2,\ldots,u_m$. We note $v_{k-1}$ connects to the set by a single edge $(v_{k-1},v_k)$, so its associated monomial is not central. However, if it branches towards $v_{k-2}$ from $v_k$, it will not contain leaf $v_1$ because they are odd distances apart.

\begin{figure}[h]
    \centering
    \includegraphics[scale=0.7]{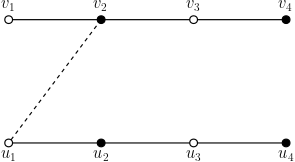}
    \caption{A candidate vertex set}
\end{figure}

Case 2: $n$ is even and $m$ is odd.

Let's assume without loss of generality that $v_k$ is at an even distance from $v_1$. We observe $v_1,\ldots,v_k,u_2,\ldots,v_m$ corresponds to a central monomial. Moreover, this is the only central monomial because the other pairs of leaves are an odd distance apart.
\begin{figure}[H]
    \centering
    \includegraphics[scale=0.08]{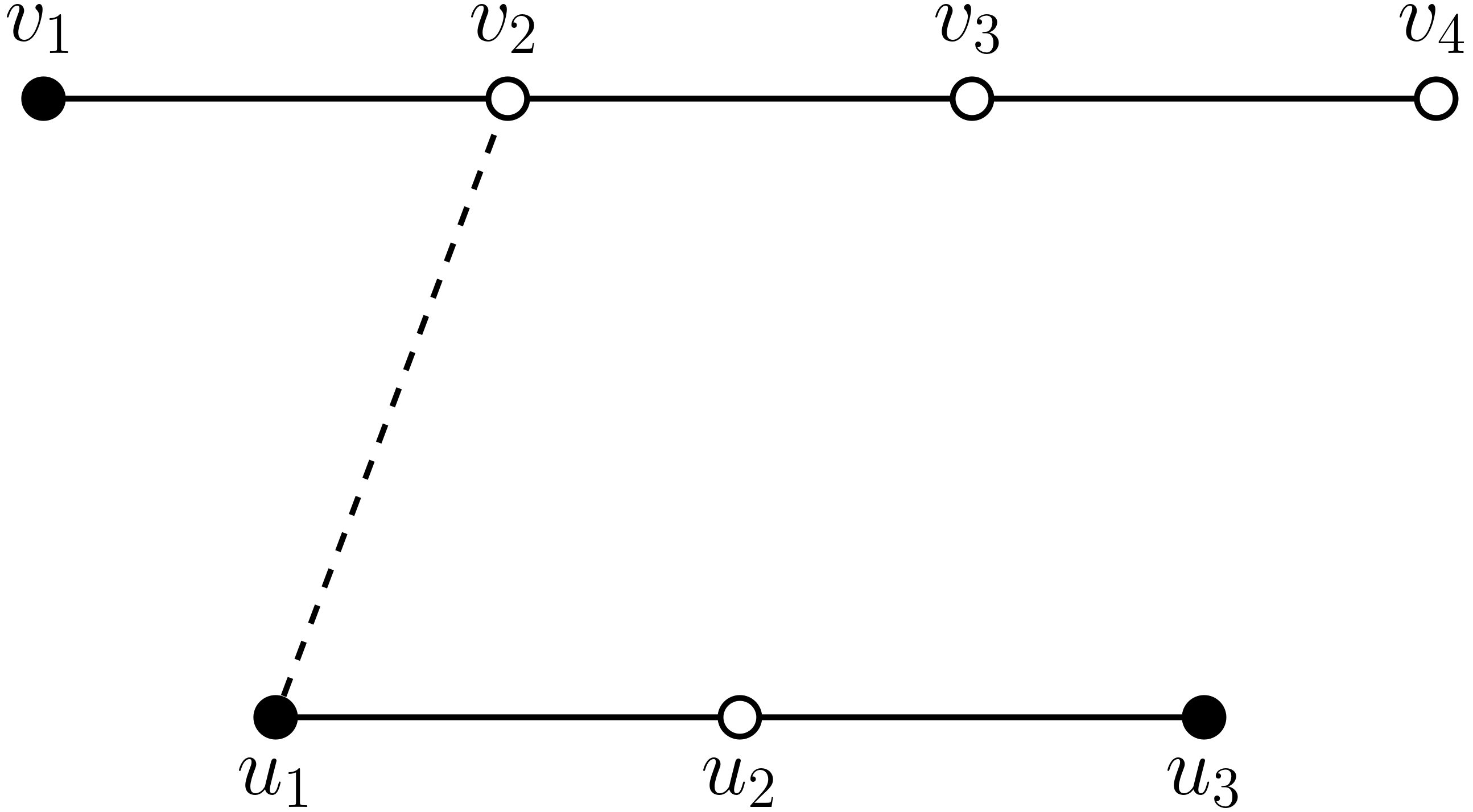}
    \caption{The vertex set associated to the unique central monomial of $A_\Gamma$}
    \label{fig:fig20}
\end{figure}

Case 3: $n$ is odd and $m$ is even.

Because $n$ is odd, $v_{nk}$ is either an odd distance from both endpoints or an even distance from both endpoints of $P_n$.

Suppose $v_{nk}$ is an even distance from both $v_{n1}$ and $v_{nn}$. Then $v_{n1},\ldots,v_{nk},\ldots,v_{nn},v_{m2},\ldots,v_{mm}$.

If $v_{nk}$ is an odd distance from both $v_{n1}$ and $v_{nn}$, then it is equivalent to Case 2. $\Gamma$ is also two paths, $v_1,\ldots,v_k,u_1,\ldots,u_m$ and $v_{k+1},\ldots,v_n$, glued by bridge $(v_k,v_{k+1})$. The former has an even number of vertices while the latter has an odd number, like in Case 2.
\newline

Case 4: $n$ and $m$ both odd.

Assume $k$ is even. Otherwise, if $k$ is odd, $\Gamma$ can also be seen as two paths, $v_{n1},\ldots, v_{nk},v_{m1},\ldots,v_{mm}$ and $v_{n(k+1)},\ldots,v_{nn}$ glued by bridge $(v_{nk},v_{n(k+1)})$. Both path graphs have even length, as in Case 1.

The vertex sets associated to central monomials are:
\begin{itemize}
    \item $v_1,\ldots,v_{k-1},v_{k+1},\ldots,v_n$
    \item $v_1,\ldots,v_{k-1},u_1,\ldots,u_m$
    \item $v_n,\ldots,v_{k+1},u_1,\ldots,u_m$.
\end{itemize}
\end{proof}

\section{Future directions}

Currently, our program to compute and create visualizations of solutions to the Dirac equation over time has been used to numerically verify our results. Future directions include investigating whether this process can be carried out in reverse, i.e. can information about the underlying graph or initial quantum state be extracted given these graphs and data of solutions to the Dirac equation over time.

Our gluing formulae help us to understand the behavior of spinors and Dirac operators on graphs as they become more complex. 
A formula that generalizes the results in Section \ref{sec:Dimer} is still open, and its complexity is expected to increase with the size of the lattice graphs. We also hope to generate gluing formulae for other types of graphs, e.g. honeycomb graphs, and planar graphs in general.

We have discovered explicit formulae for the quadratic forms of the odd Laplacian and the incidence Dirac operator. Future work could be carried out to determine explicit formulae for the quadratic forms of the even and odd Dirac operators.

The case of Clifford graph algebras for disjoint gluings and bridge gluings of path graphs help to understand the space of spinors in higher dimensions. Further research in generalizing bridge gluings will be foundational in understanding the Clifford algebras of more complex graphs. 

While our results on gluing formalae focus on the Clifford graph algebras defined by Khovanova \cite{Khovanova}, the Clifford algebras that represent spaces of spinors extend beyond those introduced. For example, $Cl(1,3)^+$, the even part of the Clifford algebra $Cl(1,3)$ is suitable for describing Dirac spinors.\cite{renaud:hal-03015551} $Cl(1,3)$ has four generators: one of which squares to $+1$, while the remaining three square to $-1$. In order to allow for such a distinction under Khovanov's framework of Clifford graph algebras, we propose an additional graph coloring. We propose a form of Clifford graph algebra where vertices $i$ are given a color based on their associated generator $e_i$, for example, red, if $e_i^2 = +1$, or blue, if $e_i^2 = -1$. Because the generators of Clifford algebras anticommute, their colored graphs are complete. 

\appendix
\section*{Appendix: Algebraic computations for the gluing of lattice graphs}
\renewcommand{\thesection}{A} 
\begin{corollary}\label{Theo:3B3}
The number of perfect matchings of $L_{3,n+m}$ is given by 
\begin{multline*}
T_3(m+n)  = T_3(m)T_3(n) + T_3(m-1) T_3(n-1) \\
+ \frac{(T_3(m+1) - T_3(m-1))(T_3(n+1)-T_3(m-1))}{2} \\
+ \frac{(T_3(m) - T_3(m-2))(T_3(n)-T_3(n-2))}{2}.
\end{multline*}  
\end{corollary}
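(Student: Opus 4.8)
The plan is to derive the corollary directly from Theorem \ref{Theo:T10} by summing its $s=0$ cases over all possible bridge sets. The crucial observation is that when the shift is $s=0$ and all three interface edges $\{e_1,e_2,e_3\}$ are made available, the gluing $L_{3,m} \sqcup_{0} L_{3,n}$ is precisely the lattice graph $L_{3,m+n}$: the rows of the two blocks line up, and the three potential bridges are exactly the horizontal edges joining the last column of the left block to the first column of the right block. Adding all three merges the $3 \times m$ and $3 \times n$ grids into a single $3 \times (m+n)$ grid.

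First I would argue that the interface edges decompose the matchings of $L_{3,m+n}$ into disjoint classes. Every perfect matching $M$ of $L_{3,m+n}$ contains a well-defined subset $B \subseteq \{e_1,e_2,e_3\}$ of interface edges, and the matchings with interface set exactly equal to a fixed $B$ are precisely those counted by $T_3(m \sqcup_{0,B} n)$ in Theorem \ref{Theo:T10}. Since each matching falls into exactly one class, the counts partition and
\[
T_3(m+n) = \sum_{B \subseteq \{e_1,e_2,e_3\}} T_3(m \sqcup_{0,B} n).
\]

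Next I would substitute the eight $s=0$ values supplied by Theorem \ref{Theo:T10}. The sets $B=\{e_2\}$ and $B=\{e_1,e_3\}$ contribute $0$; the sets $B=\{e_1\}$ and $B=\{e_3\}$ each contribute $\tfrac14(T_3(m+1)-T_3(m-1))(T_3(n+1)-T_3(n-1))$; the sets $B=\{e_1,e_2\}$ and $B=\{e_2,e_3\}$ each contribute $\tfrac14(T_3(m)-T_3(m-2))(T_3(n)-T_3(n-2))$; and $B=\emptyset$, $B=\{e_1,e_2,e_3\}$ give $T_3(m)T_3(n)$ and $T_3(m-1)T_3(n-1)$ respectively. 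Adding each equal pair doubles the corresponding fraction, collapsing the denominator from $4$ to $2$ and yielding exactly the four summands of the stated formula (with $T_3(n-1)$ in the third term, correcting the evident typo $T_3(m-1)$).

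The one point genuinely requiring care -- and the place the argument could break -- is justifying that the bridge-set decomposition is a true partition, that is, that $T_3(m \sqcup_{0,B} n)$ counts matchings whose interface edges are \emph{exactly} $B$ rather than merely a superset of $B$. This is how the case analysis of Theorem \ref{Theo:T10} was set up (each case fixes the used bridges and tiles the remaining truncated blocks), so granting that reading no further computation is needed beyond the bookkeeping above. I would therefore spend the bulk of the write-up making this ``exactly $B$'' reading explicit and confirming that the eight cases are mutually exclusive and exhaustive.
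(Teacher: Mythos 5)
Your proposal is correct and follows essentially the same route as the paper: the paper's proof is precisely the identity $T_3(m+n)=\sum_{B} T_3(m \sqcup_{0,B} n)$ summed over all bridge sets at shift $s=0$, with the values substituted from Theorem \ref{Theo:T10} and the equal pairs combined. Your added care in justifying the partition (matchings classified by \emph{exactly} which interface edges they use) and your flagging of the typo $T_3(m-1)$ versus $T_3(n-1)$ are refinements the paper leaves implicit, but the argument is the same.
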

\begin{proof}
 \begin{align*}
     T_3(m+n) &= \sum_B T_{3}(m \sqcup_{0,B} n)\\
     &= T_3(m)T_3(n) + 2\frac{(T_3(m+1) - T_3(m-1))(T_3(n+1)-T_3(m-1))}{4}\\
     &+ 2(0) + 2\frac{(T_3(m) - T_3(m-2))(T_3(n)-T_3(n-2))}{4}\\
     &+ T_3(m-1)T_3(n-1).
 \end{align*}
\end{proof}
\begin{proposition}\label{Prop:T3explicit}
The number of ways to tile $L_{3,n}$ with dominoes is given by the explicit formula $$\frac{(1-(-1)^{n+1})(\alpha^{n+1}+(\frac{1}{\alpha})^{n+1})}{2\sqrt{6}},$$ where $\alpha=\frac{\sqrt{2}+\sqrt{6}}{2}$.
\end{proposition}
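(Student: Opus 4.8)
The plan is to recognize the stated expression as the closed-form solution of the linear recurrence from Theorem~\ref{Theo:T9} and to verify it against the initial data. First I would dispose of the odd case: when $n$ is odd, $n+1$ is even, so $(-1)^{n+1}=1$ and the prefactor $1-(-1)^{n+1}$ vanishes, making the right-hand side $0$. This agrees with $T_3(n)=0$, which holds because $3n$ is odd and a graph on an odd number of vertices admits no perfect matching; equivalently, from $T_3(1)=T_3(3)=0$ and the recurrence $T_3(n)=4T_3(n-2)-T_3(n-4)$, every odd-indexed term is forced to be $0$.

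It then remains to treat even $n$. Writing $n=2k$ and $S_k=T_3(2k)$, the recurrence of Theorem~\ref{Theo:T9} decouples into the second-order relation $S_k=4S_{k-1}-S_{k-2}$ with $S_0=T_3(0)=1$ and $S_1=T_3(2)=3$. Its characteristic polynomial is $x^2-4x+1$, whose roots are $2\pm\sqrt3$. The key algebraic step is to identify these roots with powers of $\alpha$: a direct computation gives $\alpha^2=\tfrac{(\sqrt2+\sqrt6)^2}{4}=2+\sqrt3$, and since $\alpha\cdot\tfrac{\sqrt6-\sqrt2}{2}=1$, also $\alpha^{-2}=2-\sqrt3$. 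Hence the general solution is $S_k=A\,\alpha^{2k}+B\,\alpha^{-2k}$.

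Next I would pin down the constants $A,B$ from the initial conditions, using the identities $\alpha+\alpha^{-1}=\sqrt6$ and $\alpha^3+\alpha^{-3}=(\alpha+\alpha^{-1})^3-3(\alpha+\alpha^{-1})=3\sqrt6$. Matching $S_0=1$ and $S_1=3$ forces $A=\alpha/\sqrt6$ and $B=\alpha^{-1}/\sqrt6$, so that
\[
S_k=\frac{\alpha^{2k+1}+\alpha^{-(2k+1)}}{\sqrt6}.
\]
Substituting $n=2k$ and noting that for even $n$ the prefactor $1-(-1)^{n+1}$ equals $2$, this is exactly $\frac{(1-(-1)^{n+1})(\alpha^{n+1}+(\tfrac1\alpha)^{n+1})}{2\sqrt6}$, completing the even case and hence the proof.

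There is no serious obstacle here; the argument is a standard solution of a constant-coefficient linear recurrence. The only points demanding care are bookkeeping the two parity sectors so that a single formula covers both, and carrying out the surd simplifications $\alpha^2=2+\sqrt3$ and $\alpha+\alpha^{-1}=\sqrt6$ correctly so that the constants collapse to the clean values $\alpha^{\pm1}/\sqrt6$. Alternatively, one could bypass solving for $A,B$ and instead verify directly by strong induction that the proposed closed form satisfies $T_3(n)=4T_3(n-2)-T_3(n-4)$ and reproduces $T_3(0),\dots,T_3(3)$; this is equivalent but sidesteps determining the constants explicitly.
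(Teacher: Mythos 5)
Your proof is correct, and it takes a somewhat different route than the paper. The paper attacks the full fourth-order recurrence $T_3(n)=4T_3(n-2)-T_3(n-4)$ head-on: it forms the $4\times 4$ companion matrix, diagonalizes it (the characteristic polynomial $x^4-4x^2+1$ has roots $\pm\alpha,\pm\tfrac{1}{\alpha}$), obtains a four-term closed form
\[
T_3(n)=\tfrac{3-\sqrt{3}}{12}\bigl(\tfrac{-\sqrt{2}+\sqrt{6}}{2}\bigr)^n+\tfrac{3-\sqrt{3}}{12}\bigl(\tfrac{\sqrt{2}-\sqrt{6}}{2}\bigr)^n+\tfrac{3+\sqrt{3}}{12}\bigl(\tfrac{\sqrt{2}+\sqrt{6}}{2}\bigr)^n+\tfrac{3+\sqrt{3}}{12}\bigl(\tfrac{-\sqrt{2}-\sqrt{6}}{2}\bigr)^n,
\]
and then rearranges it so that the parity factor $(1-(-1)^{n+1})$ emerges algebraically. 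You instead split by parity at the outset: the odd case is dispatched immediately (no perfect matching on an odd number of vertices, or by induction from the recurrence), and the even case collapses to the second-order recurrence $S_k=4S_{k-1}-S_{k-2}$ with roots $2\pm\sqrt{3}=\alpha^{\pm 2}$, so you only ever solve a two-dimensional problem. Both proofs rest on the same ingredient, Theorem \ref{Theo:T9}, and the same characteristic roots; what your decomposition buys is a smaller computation (no $4\times 4$ diagonalization, only two constants to determine, with the clean identities $\alpha+\alpha^{-1}=\sqrt{6}$ and $\alpha^3+\alpha^{-3}=3\sqrt{6}$) and a conceptual explanation of why the prefactor $(1-(-1)^{n+1})$ appears, rather than having it fall out of a rearrangement. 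The paper's method, while computationally heavier, is more mechanical and generalizes directly to cases where no parity decoupling is available, such as the $4\times n$ lattice treated in the following proposition.
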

\begin{proof}
We can represent the recursion equation using matrices: \begin{align*}
    \begin{bmatrix}
    T_3(n+3) \\
    T_3(n+2) \\
    T_3(n+1) \\
    T_3(n)
    \end{bmatrix}
    &= \begin{bmatrix}
    0 & 4 & 0 & -1 \\
    1 & 0 & 0 & 0 \\
    0 & 1 & 0 & 0 \\
    0 & 0 & 1 & 0
    \end{bmatrix}^{n}
    \begin{bmatrix}
    T_3(3) \\
    T_3(2) \\
    T_3(1) \\
    T_3(0)
    \end{bmatrix}.\\
    \end{align*}\\
By diagonalizing the recursion matrix and performing matrix multiplication, we find that $$T_3(n)=\frac{3-\sqrt{3}}{12}(\frac{-\sqrt{2}+\sqrt{6}}{2})^n+\frac{3-\sqrt{3}}{12}(\frac{\sqrt{2}-\sqrt{6}}{2})^n+\frac{3+\sqrt{3}}{12}(\frac{\sqrt{2}+\sqrt{6}}{2})^n+\frac{3+\sqrt{3}}{12}(\frac{-\sqrt{2}-\sqrt{6}}{2})^n.$$
By rearranging, the formula becomes: $$T_3(n)=\frac{\alpha^{n+1}-(\frac{-1}{\alpha})^{n+1}}{2\sqrt{6}}+\frac{(\frac{1}{\alpha})^{n+1}-(-\alpha)^{n+1}}{2\sqrt{6}} = (1-(-1)^{n+1})\frac{\alpha^{n+1}+(\frac{1}{\alpha})^{n+1}}{2\sqrt{6}}$$
where $\alpha=\frac{\sqrt{2}+\sqrt{6}}{2}$.
\end{proof}

\begin{proposition}\label{Prop:4consecutivesum}
The following identity holds:
\[\sum_{i=0}^{n-2}T_4(i)=\frac{T_4(n)-T_4(n-3)}{5}.\]
\end{proposition}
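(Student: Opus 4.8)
The plan is to prove the identity by induction on $n$, using the recurrence $T_4(n)=T_4(n-1)+5T_4(n-2)+T_4(n-3)-T_4(n-4)$ from the theorem above, together with the convention $T_4(0)=1$. This value is natural (the empty lattice has a single, empty tiling) and is in any case forced by the given data: solving $T_4(4)=T_4(3)+5T_4(2)+T_4(1)-T_4(0)$ gives $36=37-T_4(0)$, so $T_4(0)=1$. Writing $S(n)=\sum_{i=0}^{n-2}T_4(i)$ for the left-hand side, the goal is to show $S(n)=\tfrac{T_4(n)-T_4(n-3)}{5}$ for all $n$ in the relevant range.

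For the inductive step I would assume $S(n)=\tfrac{T_4(n)-T_4(n-3)}{5}$ and use $S(n+1)=S(n)+T_4(n-1)$ to compute
\[
S(n+1)=\frac{T_4(n)-T_4(n-3)}{5}+T_4(n-1)=\frac{T_4(n)+5T_4(n-1)-T_4(n-3)}{5}.
\]
Now the recurrence, applied at index $n+1$, reads $T_4(n+1)=T_4(n)+5T_4(n-1)+T_4(n-2)-T_4(n-3)$, which rearranges to $T_4(n)+5T_4(n-1)-T_4(n-3)=T_4(n+1)-T_4(n-2)$. Substituting this into the numerator yields $S(n+1)=\tfrac{T_4(n+1)-T_4(n-2)}{5}$, which is exactly the desired identity for $n+1$. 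Thus the entire inductive step is nothing more than a rearrangement of the defining recurrence, so no genuine obstacle arises here.

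For the base case I would start the induction at $n=3$, where $S(3)=T_4(0)+T_4(1)=2=\tfrac{T_4(3)-T_4(0)}{5}=\tfrac{11-1}{5}$, which avoids negative indices; if one instead wishes to include $n=2$, extending the recurrence backward gives $T_4(-1)=0$ and both sides equal $1$. The only point requiring care is this bookkeeping at the low end of the index range (the value $T_4(0)$ and the convention that the sum is empty for $n<2$); everything else is forced. As an alternative framing, one may note that the first differences of the two sides agree by the recurrence, namely $S(n)-S(n-1)=T_4(n-2)$ and $\tfrac{1}{5}\bigl((T_4(n)-T_4(n-1))-(T_4(n-3)-T_4(n-4))\bigr)=T_4(n-2)$, so the two sequences differ by a constant that a single verified value pins to zero.
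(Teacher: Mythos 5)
Your proof is correct, but it takes a genuinely different route from the paper's. The paper does not use induction: it goes back to the \emph{unsimplified} expansion derived in the proof of the $4\times n$ recursion theorem, namely $T_4(n)=T_4(n-1)+T_4(n-2)+2\sum_{i=0}^{n-2}T_4(i)+\sum_{i=1}^{\lfloor n/2\rfloor}T_4(n-2i)$, writes the same identity at index $n-1$ (using $T_4(-1)=0$), adds the two equations so that the alternating sums merge into a full consecutive sum, obtains $T_4(n)+T_4(n-1)=T_4(n-1)+T_4(n-3)+5\sum_{i=0}^{n-2}T_4(i)$, and solves. Your argument instead uses only the final closed recurrence $T_4(n)=T_4(n-1)+5T_4(n-2)+T_4(n-3)-T_4(n-4)$ together with the initial values: the inductive step is the rearrangement $T_4(n)+5T_4(n-1)-T_4(n-3)=T_4(n+1)-T_4(n-2)$, and the base case $n=3$ is checked numerically. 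Each approach has a virtue: the paper's is a one-shot algebraic derivation, but it presupposes the richer combinatorial identity with the two sums (which is only available inside the proof of the recursion theorem); yours is self-contained and more elementary, applying verbatim to any sequence satisfying that fourth-order recurrence with those initial values, and it is more careful about the boundary bookkeeping ($T_4(0)=1$, the empty sum, $T_4(-1)=0$). One small point worth making explicit: the paper states its recurrence only for $n>4$, so your step from $n=3$ to $n=4$ invokes the recurrence at index $4$; you effectively verify this when you pin down $T_4(0)=1$ via $36=11+25+1-T_4(0)$, so the induction is sound, but it deserves a sentence acknowledging that index $4$ is being checked rather than quoted.
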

\begin{proof}
$T_4(n)=T_4(n-1)+T_4(n-2)+2\sum_{i=0}^{n-2}T_4(i)+T_4(n-2)+T_4(n-4)+T_4(n-6)+...+(T_4(1) \text{ if n is odd) or } (T_4(0) \text{ if n is even})$. By substituting $n-1$ for $n$, we see that $T_4(n-1)=T_4(n-2)+T_4(n-3)+2\sum_{i=0}^{n-3}T_4(i)+T_4(n-3)+T_4(n-5)+T_4(n-7)+...+(T_4(0) \text{ if n is odd) or } (T_4(1)+T_4(-1) \text{ if n is even})$. Since $T_4(-1)=0$, we can ignore that term. By adding these two equations together, we find that \begin{align*}
T_4(n)+T_4(n-1)&=(T_4(n-1)+T_4(n-2)+2\sum_{i=0}^{n-2}T_4(i)\\&+T_4(n-2)+T_4(n-4)+T_4(n-6)+...\\&+(T_4(1) \text{ if n is odd) or } (T_4(0) \text{ if n is even}))\\&+(T_4(n-2)+T_4(n-3)+2\sum_{i=0}^{n-3}T_4(i)\\&+T_4(n-3)+T_4(n-5)+T_4(n-7)+...\\&+(T_4(0) \text{ if n is odd) or } (T_4(1)+T_4(-1) \text{ if n is even}))\\&=T_4(n-1)+T_4(n-3)+5\sum_{i=0}^{n-2}T_4(i).  
\end{align*}
From this equality, we can deduce that \[\sum_{i=0}^{n-2}T_4(i)=\frac{T_4(n)-T_4(n-3)}{5}.\]
\end{proof}
\begin{proposition}\label{Prop:4alternatingsum}
The following identity holds:
\[\sum_{i=1}^{\lfloor\frac{n}{2}\rfloor} T_4(n-2i)=-\frac{2}{5}T_4(n)+4T_4(n-2)+\frac{7}{5}T_4(n-3)-T_4(n-4).\]
\end{proposition}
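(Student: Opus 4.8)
The plan is to combine two facts that are already available from earlier in this section: the intermediate decomposition of $T_4(n)$ obtained while proving the $4\times n$ recursion, and the consecutive-sum identity of Proposition \ref{Prop:4consecutivesum}. Recall that in establishing $T_4(n)=T_4(n-1)+5T_4(n-2)+T_4(n-3)-T_4(n-4)$ the following identity was derived along the way:
\[
T_4(n)=T_4(n-1)+T_4(n-2)+2\sum_{i=0}^{n-2}T_4(i)+\sum_{i=1}^{\lfloor n/2\rfloor}T_4(n-2i).
\]
First I would solve this identity for the alternating sum, isolating $\sum_{i=1}^{\lfloor n/2\rfloor}T_4(n-2i)=T_4(n)-T_4(n-1)-T_4(n-2)-2\sum_{i=0}^{n-2}T_4(i)$.

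Second, I would substitute the closed form $\sum_{i=0}^{n-2}T_4(i)=\tfrac{T_4(n)-T_4(n-3)}{5}$ from Proposition \ref{Prop:4consecutivesum}. This eliminates the consecutive sum entirely and leaves the alternating sum expressed as a rational-linear combination of $T_4(n),T_4(n-1),T_4(n-2),T_4(n-3)$, namely $\tfrac{3}{5}T_4(n)-T_4(n-1)-T_4(n-2)+\tfrac{2}{5}T_4(n-3)$.

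Third, since the target expression contains no $T_4(n-1)$ term but does contain $T_4(n-4)$, I would invoke the recursion itself, rewritten as $T_4(n-1)=T_4(n)-5T_4(n-2)-T_4(n-3)+T_4(n-4)$, to replace $T_4(n-1)$. Collecting coefficients then produces exactly $-\tfrac{2}{5}T_4(n)+4T_4(n-2)+\tfrac{7}{5}T_4(n-3)-T_4(n-4)$, completing the proof.

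I would emphasize that the argument is purely algebraic and requires no separate even/odd case analysis: the parity encoded in $\lfloor n/2\rfloor$ is already absorbed into the intermediate identity quoted above, so it carries through automatically. The only real obstacle is careful bookkeeping of the rational coefficients through the one substitution and the single application of the recurrence; there is no conceptual difficulty, since both input identities have already been proved.
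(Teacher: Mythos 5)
Your proposal is correct and follows essentially the same route as the paper's proof: both combine the intermediate identity $T_4(n)=T_4(n-1)+T_4(n-2)+2\sum_{i=0}^{n-2}T_4(i)+\sum_{i=1}^{\lfloor n/2\rfloor}T_4(n-2i)$ with Proposition \ref{Prop:4consecutivesum} and the recursion $T_4(n)=T_4(n-1)+5T_4(n-2)+T_4(n-3)-T_4(n-4)$, then solve for the alternating sum. The only difference is bookkeeping order (you isolate the sum before substituting, the paper equates two expressions for $T_4(n)$ afterward), and your coefficient arithmetic checks out.
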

\begin{proof}
Again, we start with $T_4(n)=T_4(n-1)+T_4(n-2)+2\sum_{i=0}^{n-2}T_4(i)+\sum_{i=1}^{\lfloor\frac{n}{2}\rfloor} T_4(n-2i)$. From the previous proposition, we can rewrite this formula as $T_4(n)=T_4(n-1)+T_4(n-2)+2\frac{T_4(n)-T_4(n-3)}{5}+\sum_{i=1}^{\lfloor\frac{n}{2}\rfloor} T_4(n-2i)$
We also know that $T_4(n)=T_4(n-1)+5T_4(n-2)+T_4(n-3)-T_4(n-4)$. The result follows from equating these two versions of $T_4(n)$ and solving for $\sum_{i=1}^{\lfloor\frac{n}{2}\rfloor} T_4(n-2i)$.
\end{proof}

\begin{corollary}\label{Theo:4B4}
The number of perfect matchings of $L_{4,n+m}$ is given by 
\begin{align*}
T_4(m+n)  &= T_4(m)T_4(n) + T_4(m-1)T_4(n-1) \\
&+ \frac{2(T_4(m+1) - T_4(m-2))(T_4(n+1)-T_4(n-2))}{25} \\
&+ f(m+1)f(n+1) \\
&+ f(m)f(n).  
\end{align*}
Where $f(n)=-\frac{2}{5}T_4(n)+4T_4(n-2)+\frac{7}{5}T_4(n-3)-T_4(n-4)$.
\end{corollary}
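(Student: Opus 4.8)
The plan is to mirror the derivation of Corollary \ref{Theo:3B3}, realizing $L_{4,m+n}$ as the shift-zero bridge gluing $L_{4,m}\sqcup_{0,B}L_{4,n}$ and summing over all bridge sets $B$. Concretely, I would fix the vertical cut separating the $m$-th and $(m+1)$-st columns of $L_{4,m+n}$; the four horizontal edges crossing this cut are precisely the candidate bridges $e_1,e_2,e_3,e_4$ of the shift-zero gluing. Every perfect matching of $L_{4,m+n}$ uses a uniquely determined subset $B\subseteq\{e_1,e_2,e_3,e_4\}$ of these crossing edges, and deleting the matched bridges leaves a perfect matching of the two sides with the appropriate boundary vertices removed. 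This yields the partition
\[
T_4(m+n)=\sum_{B\subseteq\{e_1,e_2,e_3,e_4\}}T_4(m\sqcup_{0,B}n),
\]
so the corollary reduces to evaluating each summand.

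Next I would invoke Theorem \ref{Theo:4xnm} for each of the sixteen subsets $B$. All eight odd-cardinality bridge sets contribute $0$, since an odd number of bridges leaves an odd number of vertices to tile on each side. Among the even-cardinality sets, the empty set contributes $T_4(m)T_4(n)$ and the full set contributes $T_4(m-1)T_4(n-1)$. The six two-element sets split according to Theorem \ref{Theo:4xnm}: the sets $\{e_1,e_2\}$ and $\{e_3,e_4\}$ each contribute $\frac{(T_4(m+1)-T_4(m-2))(T_4(n+1)-T_4(n-2))}{25}$; the sets $\{e_1,e_3\}$ and $\{e_2,e_4\}$ each contribute $0$; the set $\{e_1,e_4\}$ contributes $f(m+1)f(n+1)$; and $\{e_2,e_3\}$ contributes $f(m)f(n)$. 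Adding these contributions and collecting the two equal $\{e_1,e_2\},\{e_3,e_4\}$ terms into a single term with a factor of $2$ yields exactly the claimed formula.

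The computation is routine once the partition is justified, so the only real point requiring care is confirming that the decomposition over crossing edges is a genuine bijection: that every matching of $L_{4,m+n}$ is counted exactly once, and that after removing the chosen bridges the residual tilings on the left and right are precisely the configurations enumerated by the corresponding case of Theorem \ref{Theo:4xnm} (for instance, that $B=\{e_1,e_4\}$ really does leave each side missing its two corner vertices, so that the factor $f(m+1)f(n+1)$ is correct). This is clean because the four crossing edges are vertex-disjoint and their endpoints sit in distinct columns, but it is the step where an off-by-one in the shift, or a mispairing of bridge sets with cases of Theorem \ref{Theo:4xnm}, would introduce an error; I would therefore verify the correspondence case by case against Figures \ref{fig:s=2,0} through \ref{fig:G(4,4)} before collecting terms.
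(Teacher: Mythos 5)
Your proposal is correct and takes essentially the same approach as the paper: the paper's proof is precisely the identity $T_4(m+n)=\sum_B T_4(m \sqcup_{0,B} n)$ over shift-zero bridge sets, evaluated term by term using Theorem \ref{Theo:4xnm}, with the two $\{e_1,e_2\},\{e_3,e_4\}$ contributions collected into the factor of $2$. Your explicit justification of the partition step (that every perfect matching of $L_{4,m+n}$ determines a unique bridge set and residual tilings of the two sides) is a detail the paper leaves implicit, but it is the same argument.
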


\begin{proof}
 \begin{align*}
    T_4(m+n) &= \sum_B T_{4}(m \sqcup_{0,B} n)\\
    &= T_4(m)T_4(n) + 4(0) + 2\frac{(T_4(m+1) - T_4(m-2))(T_4(n+1)-T_4(n-2))}{25}\\
    &+ f(m+1)f(n+1)\\
    &+ f(m)f(n)\\
    &+ 4(0) + T_4(m-1)T_4(n-1).
 \end{align*}
\end{proof}

\begin{proposition}
The number of ways to tile $L_{4,n}$ with dominoes is given by the explicit formula
\begin{align*}
    &\frac{\frac{6}{b}+5-\frac{1}{b^3}}{(1-\frac{1}{b^2})(1-\frac{b}{a})(1-ab)}b^n-\frac{6b+5-b^3}{(1-\frac{1}{b^2})(1-\frac{b}{a})(1-ab)}(\frac{1}{b})^n\\
    &+\frac{\frac{6}{a}+5-\frac{1}{a^3}}{(1-\frac{1}{a^2})(1-\frac{a}{b})(1-ab)}a^n-\frac{6a+5-a^3}{(1-\frac{1}{a^2})(1-\frac{a}{b})(1-ab)}(\frac{1}{a})^n,
\end{align*}\\
where $a=\frac{1+\sqrt{29}+\sqrt{14+2\sqrt{29}}}{4}$ and $b=\frac{1-\sqrt{29}-\sqrt{14-2\sqrt{29}}}{4}$.
\end{proposition}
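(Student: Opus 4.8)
The plan is to solve the linear recurrence $T_4(n) = T_4(n-1) + 5T_4(n-2) + T_4(n-3) - T_4(n-4)$ established above in closed form by the method of characteristic roots, exploiting the fact that its characteristic polynomial is palindromic. First I would write the characteristic equation $x^4 - x^3 - 5x^2 - x + 1 = 0$ and observe that its coefficients are symmetric. Dividing by $x^2$ and setting $y = x + x^{-1}$, so that $x^2 + x^{-2} = y^2 - 2$, collapses the quartic to the quadratic $y^2 - y - 7 = 0$, whose roots are $y = \frac{1 \pm \sqrt{29}}{2}$.

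Next I would recover the four roots of the quartic by solving $x^2 - yx + 1 = 0$ for each value of $y$; since the two roots for a fixed $y$ multiply to $1$, they come in reciprocal pairs. For $y = \frac{1+\sqrt{29}}{2}$ one computes $y^2 - 4 = \frac{14 + 2\sqrt{29}}{4}$, giving the pair $a, a^{-1}$ with $a = \frac{1 + \sqrt{29} + \sqrt{14 + 2\sqrt{29}}}{4}$; for $y = \frac{1 - \sqrt{29}}{2}$ one gets $b, b^{-1}$ with $b = \frac{1 - \sqrt{29} - \sqrt{14 - 2\sqrt{29}}}{4}$, exactly the constants in the statement. I would check that $14 \pm 2\sqrt{29} > 0$, so all four roots are real and distinct, which guarantees the general solution $T_4(n) = C_1 a^n + C_2 a^{-n} + C_3 b^n + C_4 b^{-n}$.

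To pin down the coefficients I would pass to the generating function $F(x) = \sum_{n \ge 0} T_4(n) x^n$. The recurrence, valid for $n \ge 4$ with $T_4(0) = 1$, gives $F(x) = P(x)/\big((1-ax)(1-a^{-1}x)(1-bx)(1-b^{-1}x)\big)$, where the denominator equals $1 - x - 5x^2 - x^3 + x^4$ (a fact I would verify using $y_1 + y_2 = 1$ and $y_1 y_2 = -7$) and $P$ is the low-degree polynomial fixed by the initial values. A partial-fraction cover-up computation then reads off each $C_i$ as a residue, producing the four quotients displayed in the statement; the unexpected numerators such as $\tfrac{6}{a} + 5 - \tfrac{1}{a^3}$ arise by clearing a factor of $(1 - a^{-2})$ and rewriting via the characteristic relation $a^4 = a^3 + 5a^2 + a - 1$ (which yields $a^5 = 6a^3 + 6a^2 - 1$, whence $\tfrac{6}{a} + 5 - \tfrac{1}{a^3} = a^2 - 1$), and symmetrically for $b$.

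The main obstacle is the coefficient bookkeeping: the cover-up residues are clean but do not literally match the stated numerators, so the crux is recognizing that the displayed expressions are an equivalent, un-simplified form and justifying the rewriting through the characteristic equation. A final consistency check is to substitute $n = 1, 2, 3, 4$ and confirm the values $1, 5, 11, 36$; because of the nested radicals I would carry this out symbolically through the recurrence and residue identities rather than by direct numerical expansion.
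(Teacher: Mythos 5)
Your proposal is correct, and it reaches the formula by machinery that is mathematically equivalent to, but procedurally different from, the paper's. The paper's entire proof is one sentence --- ``use the same method as for the $3\times n$ case'' --- which there means writing the recurrence as a $4\times 4$ companion matrix, diagonalizing it, and multiplying against the vector of initial values. You instead extract the same four roots via the palindromic substitution $y=x+x^{-1}$ (collapsing the quartic to $y^2-y-7=0$) and fit coefficients by partial fractions of the generating function $F(x)=(1-x^2)/(1-x-5x^2-x^3+x^4)$, whose numerator $1-x^2$ you would need to compute explicitly from the initial values. The two methods are equivalent (the companion matrix's eigenvalues are your characteristic roots), but your writeup makes explicit two things the paper's terse proof hides: where the nested radicals in $a$ and $b$ come from, and why the stated coefficients --- which are not the clean cover-up residues --- are nevertheless correct. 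I checked the reconciliation you propose, and it works: from $a^4=a^3+5a^2+a-1$ one gets $a^5=6a^3+6a^2-1$, hence $\tfrac{6}{a}+5-\tfrac{1}{a^3}=a^2-1$ and, applying the same identity to the reciprocal root, $6a+5-a^3=\tfrac{1}{a^2}-1$; cancelling these against the factor $1-\tfrac{1}{a^2}$ in the stated denominators turns the displayed coefficients into exactly the residues $\tfrac{a^2b}{(a-b)(ab-1)}$ (for $a^n$) and $\tfrac{1}{(1-a/b)(1-ab)}$ (for $a^{-n}$), and symmetrically for $b$. One small bookkeeping point to make explicit: the paper proves the recurrence only for $n>4$, so before forming the generating function you should note that it also holds at $n=4$ with $T_4(0)=1$, since $11+5\cdot 5+1-1=36$.
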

\begin{proof}
Use the same method as for the proof of Proposition \ref{Prop:T3explicit}.
\end{proof}

We conjecture that for every positive integer $k$, there exists a real number $x \in \RR$ such that the solutions to the equation $\frac{x^5+1}{x^2(x+1)}=\frac{y^5+1}{y^2(y+1)}$, which are at most two pairs of reciprocal numbers, denoted $y_1, \frac{1}{y_1}, y_2, \frac{1}{y_2}$ allow us to define functions $f_{k,1}(y_1,y_2), f_{k,2}(y_1,y_2), f_{k,3}(y_1,y_2), f_{k,4}(y_1,y_2)$ such that $T_k(n)=f_{k,1}(y_1,y_2) y_1^n+f_{k,2}(y_1,y_2) \frac{1}{y_1}^n+f_{k,3}(y_1,y_2)y_2^n+f_{k,4}(y_1,y_2)\frac{1}{y_2}^n$, where $T_k(n)$ is an explicit formula for the number of ways to tile $L_{k,n}$.

\bibliographystyle{amsalpha}
\bibliography{Dirac}

\end{document}